\newcommand{\cmark}{\ding{51}}%
\newcommand{\xmark}{\ding{55}}%
\DeclareMathOperator{\R}{{\mathbb{R}}}
\renewcommand{\vec}[1]{\mathbf{#1}}
\newcommand{\mat}[1]{\mathbf{#1}}
\newcommand{\param}{\boldsymbol{\theta}} 
\DeclareMathOperator*{\argmin}{arg\,min} 
\DeclareMathOperator{\prox}{prox} 
\renewcommand{\vec}[1]{\mathbf{#1}}
\newcommand{\N}{\mathbb{N}}
\DeclareMathOperator{\Lip}{Lip}
\def\V#1{{\boldsymbol{#1}}}  
\newtheorem{theorem}{Theorem}[section]
\newtheorem{remark}[theorem]{Remark}
\newtheorem{corollary}[theorem]{Corollary}
\newtheorem{proposition}[theorem]{Proposition}
\newcommand{\cmarkc}{{\color{ForestGreen}\cmark}}
\newcommand{\xmarkc}{{\color{Red}\xmark}}
\title{A Neural-Network-Based Convex Regularizer\\ for Inverse Problems}
\date{\today}
\author{Alexis Goujon, \and Sebastian Neumayer, \and Pakshal Bohra, \and Stanislas Ducotterd, \and and Michael Unser}
\begin{document}
\maketitle
\renewcommand*{\thefootnote}{\fnsymbol{footnote}}
\footnotetext[1]{The authors are with the Biomedical Imaging Group, \'Ecole polytechnique f\'ed\'erale de Lausanne (EPFL),
	Station 17, CH-1015 Lausanne, {\text \{forename.name\}@epfl.ch}. }
\renewcommand*{\thefootnote}{\arabic{footnote}}

\begin{abstract}
    The emergence of deep-learning-based methods to solve image-reconstruction problems has enabled a significant increase in quality.
     Unfortunately, these new methods often lack reliability and explainability, and there is a growing interest to address these shortcomings while retaining the boost in performance.
    In this work, we tackle this issue by revisiting regularizers that are the sum of convex-ridge functions.
    The gradient of such regularizers is parameterized by a neural network that has a single hidden layer with increasing and learnable activation functions.
    This neural network is trained within a few minutes as a multistep Gaussian denoiser.
    The numerical experiments for denoising, CT, and MRI reconstruction show improvements over methods that offer similar reliability guarantees.
\end{abstract}
\begin{IEEEkeywords}
Image reconstruction, learnable regularizer, plug-and-play, gradient-step denoiser, stability, interpretability.
\end{IEEEkeywords}

\section{Introduction}
In natural science, it is common to indirectly probe an object of interest by collecting a series of linear measurements~\cite{ribes2008linear}.
After discretization, this can be formalized as
\begin{equation}
    \vec y = \vec H\vec x + \vec n,
    \label{eq:InvProb}
\end{equation}
where $\vec H \in \R^{m\times d}$ acts on the discrete representation $\vec x\in\R^d$ of the object and models the physics of the process.
The vector $\vec n \in\R^m$ accounts for additive noise in the measurements.
Given the measurement vector $\vec y \in\R^m$, the task is then to reconstruct $\vec x$.
Many medical-imaging applications fit into this class of inverse problems~\cite{MM2019}, including magnetic-resonance imaging (MRI) and X-ray
computed tomography (CT).

In addition to the presence of noise, which makes the reconstruction challenging for ill-conditioned $\vec H$, it is common to have only a few measurements ($m<d$), resulting in underdetermined problems.
In either case, \eqref{eq:InvProb} is ill-posed, and solving it poses serious challenges.
To overcome this issue, a reconstruction $\vec x^*$ is often computed as
\begin{equation}\label{eq:VarProb}
    \vec x^* \in \argmin \limits_{\vec x \in \R^d} \Vert \vec H\vec x - \vec y \Vert_2^2 + R(\vec x),
\end{equation}
where $R\colon \R^d\rightarrow \R$ is a convex regularizer that incorporates prior information about $\vec x$ to counteract the ill-posedness of \eqref{eq:InvProb}.
Popular choices are the Tikhonov \cite{tikhonov1963} or total-variation (TV) \cite{rudin1992nonlinear, donoho2006compressed, candes2008introduction} regularizers.
\subsection{Deep-Learning Methods}
Deep-learning-based methods have emerged in the past years for the inversion of \eqref{eq:InvProb} in a variety of applications; see \cite{AMOS2019,OngJalBar2020} for an overview.
Such approaches offer a significantly improved quality of reconstruction as compared to classical variational models of the form \eqref{eq:VarProb}.
Unfortunately, most of them are not well understood and lack stability guarantees \cite{antun2020instabilities,gottschling2020troublesome}.

For end-to-end approaches, a pre-trained model outputs a reconstruction directly from the measurements $\vec y$ or from a low-quality reconstruction \cite{jin2017deep,CZZLLZW2017,BJSBM2018, hyun2018deep, HN20}.
These approaches are often much faster than iterative solvers that compute \eqref{eq:VarProb}.
Their downside is that they offer no control of the data-consistency term $\Vert \vec H\vec x - \vec y \Vert_2$.
In addition, they are less universal since a model is specifically trained per $\vec H$ and per noise model.
End-to-end learning can also lead to serious stability issues~\cite{antun2020instabilities}.

A remedy for some of these issues is provided by the convolutional-neural-network (CNN) variants of the plug-and-play (PnP) framework~\cite{venkatakrishnan2013plug,chan2016plug,romano2017little,ryu2019plug}. 
The inspiration for these methods comes from the interpretation of the proximal operator
\begin{equation}
\label{eq:prox}
    \prox_R(\vec y) = \argmin\limits_{\vec{x}\in\R^d} \frac{1}{2}\|\vec{y} - \vec{x}\|_2^2 + R(\vec{x})
\end{equation}
used in many iterative algorithms for the computation of \eqref{eq:VarProb} as a denoiser.
The idea is to replace \eqref{eq:prox} with a more powerful CNN-based denoiser $\V D$.
However, $\V D$ is usually not a proper proximal operator, and the convergence of the PnP iterates is not guaranteed. It was shown in \cite{ryu2019plug} that, for an invertible $\mat H$, convergence can be ensured by constraining the Lipschitz constant of the residual operator $(\mathbf{Id}-\V D)$, where $\mathbf{Id}$ is the identity operator.
For a noninvertible $\mat H$, this constraint, however, does not suffice.
Instead, one can constrain $\V D$ to be an averaged operator which, unfortunately, degrades the performance \cite{bohra2021learning}.
Hence, in practice, one usually only constrains $(\mathbf{Id}-\V D)$, even if the framework is deployed for noninvertible $\mat H$ \cite{ryu2019plug,HHNPSS2019,HNS2021}.
While this results in good performances, it leaves a gap between theory and implementation.
Following a different route, one can also ensure convergence with relaxed algorithms \cite{gupta2018cnn,hurault2022gradient}.
There, $\V D$ is replaced with the relaxed version $\gamma \V D + (1-\gamma)\mathbf{Id}$, $\gamma\in(0,1]$. At each iteration, $\gamma$ is decreased if some condition is violated. Unfortunately, without particular constraints on $\V D$, the evolution of $\gamma$ is unpredictable.
Hence, the associated fixed-point equation for the reconstruction is unknown a priori, which reduces the reliability of the method.

Another data-driven approach arising from \eqref{eq:VarProb} is the learning of $R$ instead of $\prox_R$.
Pioneering work in this direction includes the \emph{fields of experts} \cite{RotBla2009,CheRan2014,EffKob2020}, where $R$ is parameterized by an interpretable and shallow model, namely, a sum of nonlinear one-dimensional functions composed with convolutional filters.
Some recent approaches rely on more sophisticated architectures with much deeper CNNs, such as with the adversarial regularization (AR) \cite{lunz2018adversarial,DufNeiEhr2021}, NETT \cite{LiSch2020}, and the total-deep-variation frameworks \cite{KobEff2020}, or with regularizers for which a proximal operator exists \cite{cohen2021has,hurault2022gradient,HurLec2022,fermanian2022learned}.
There exists a variety of strategies to learn $R$, including bilevel optimization \cite{CheRan2014}, unrolling \cite{KobEff2020,EffKob2020}, gradient-step denoising \cite{cohen2021has,hurault2022gradient}, and adversarial training \cite{lunz2018adversarial,DufNeiEhr2021}.
When $R$ is convex, a global minimizer of \eqref{eq:VarProb} can be found under mild assumptions.
As the relaxation of the convexity constraint usually boosts the performance \cite{CheRan2014,KKHP2017}, it is consequently the most popular approach.
Unfortunately, one can then expect convergence only to a critical point.
\subsection{Quest for Reliability}
In many sensitive applications such as medical imaging, there is a growing interest to improve the reliability and interpretability of the reconstruction methods.
The available frameworks used to learn a (pseudo) proximal operator or regularizer result in a variety of neural architectures that differ in the importance attributed to the following competing properties:
\begin{itemize}
    \item good reconstruction quality;
    \item independence on $\vec H$, noise model, and image domain;
    \item convergence guarantees and properties of the fixed points of the reconstruction algorithm;
    \item interpretability, which can include the existence of an explicit cost or a minimal understanding of what the regularizer is promoting.
\end{itemize}
 To foster the last two properties, one usually has to impose structural constraints on the learnt regularizer/proximal operator.
For instance, within the PnP framework, there have been some recent efforts to improve the expressivity of averaged denoisers, either with strict Lipschitz constraints on the model, \cite{bohra2021learning,NC2022} or with regularization of its Lipchitz constant during training \cite{PRTW2021,HurLec2022} which, in turn, improves the convergence properties of the reconstruction algorithm.
In the same vein, the authors of \cite{MukDit2021,MukSch2021} proposed to learn a convex $R$ parameterized by a deep input convex neural network (ICNN)\cite{AXK2016} and to train it within an adversarial framework as in \cite{lunz2018adversarial}.

In the present work, we prioritize the reliability and interpretability of the method.
Thus, we revisit the family of learnable convex-ridge regularizers \cite{RotBla2009,CheRan2014,KKHP2017,EffKob2020,nguyen2017learning}
\begin{equation}
\label{eq:convexridgereg}
    R\colon\vec{x}\mapsto \sum_{i}\psi_i(\vec{w}_i^T \vec{x}),
\end{equation}
where the profile functions $\psi_i\colon \R\rightarrow\R$ are convex, and $\vec{w}_i\in\R^d$ are learnable weights.
A popular way to learn $R$ is to solve a non-convex bilevel optimization task \cite{peyre2011learning, chen2012learning} for a given inverse problem.
It was reported in \cite{CheRan2014} that these learnt regularizers outperform the popular TV regularizer for image reconstruction.
As bilevel optimization is computationally quite intensive, it was proposed in \cite{KKHP2017} to unroll the forward-backward splitting (FBS) algorithm applied to \eqref{eq:VarProb} with a regularizer of the form \eqref{eq:convexridgereg}.
Accordingly, $R$ is optimized so that a predefined number $t$ of iterations of the FBS algorithm yields a good reconstruction.
Unfortunately, on a denoising task with learnable profiles $\psi_i$, the proposed approach does not match the performance of the bilevel optimization.

To deal with these shortcomings, we introduce an efficient framework\footnote{All experiments can be reproduced with the code published at \url{https://github.com/axgoujon/convex_ridge_regularizers}} to learn some $R$ of the form \eqref{eq:convexridgereg} with free-form convex profiles.
We train this $R$ on a generic denoising task and then plug it into~\eqref{eq:VarProb}.
This yields a generic reconstruction framework that is applicable to a variety of inverse problems.
The main contributions of the present work are as follows.
\begin{itemize}
    \item {\bf Interpretable and Expressive Model:} We use a one-hidden-layer neural network (NN) with learnable increasing linear-spline activation functions to parameterize $\V \nabla R$.
    We prove that this yields the maximal expressivity in the generic setting \eqref{eq:convexridgereg}.
    \item {\bf Embedding of the Constraints into the Forward Pass:} The structural constraints on $\V \nabla R$ are embedded into the forward pass during the training.
    This includes an efficient procedure to enforce the convexity of the profiles, and the computation of a bound on the Lipschitz constant of $\boldsymbol{\nabla} R$, which is required for our training procedure.
    \item {\bf Ultra-Fast Training:} The regularizer $R$ is learnt via the training of a multi-gradient-step denoiser.
    Empirically, we observe that a few gradient steps suffice to learn a best-performing $R$.
    This leads to training within a few minutes.
    \item {\bf Best Reconstruction Quality in a Constrained Scenario:} We show that our framework outperforms recent deep-learning-based approaches with comparable guarantees and constraints in two popular medical-imaging modalities (CT and MRI).
    This includes the PnP method with averaged denoisers and a variational framework with a learnable deep convex regularizer.
    This even holds for a strong mismatch in the noise level used for the training and the one found in the inverse problem.
\end{itemize}

\section{Architecture of the Regularizer}
In this section, we introduce the notions required to define the convex-ridge regularizer neural network (CRR-NN).
\subsection{General Setting}
Our goal is to learn a regularizer $R$ for the variational problem \eqref{eq:VarProb} that performs well across a variety of ill-posed problems.
Similar to the PnP framework, we view the denoising task
\begin{equation}
    \label{eq:denoise}
    \vec{x}^* = \argmin\limits_{\vec{x}\in\R^d} \frac{1}{2}\|\vec{x} - \vec{y}\|_2^2 + \lambda R(\vec{x})
\end{equation}
as the underlying base problem for training, where $\vec y$ is the noisy image.
Since we prioritize interpretability and reliability, we choose the simple convex-ridge regularizer~\eqref{eq:convexridgereg} and use its convolutional form.
More precisely, the regularity of an image $x$ is measured as
\begin{equation}
\label{eq:convconvexridgereg}
    R\colon x \mapsto \sum_{i=1}^{N_C}\sum_{\vec{k}\in\mathbb{Z}^2}\psi_i\bigl( (h_i * x)[\vec k]\bigr),
\end{equation}
where $h_i$ is the impulse response of a $2$D convolutional filter, $(h_i * x)[\vec k]$ is the value of the $\vec k$-th pixel of the filtered image $h_i * x$, and $N_C$ is the number of channels.
In the sequel, we mainly view the (finite-size) image $x$ as the (finite-dimensional) vector $\vec x \in\R^d$, and since \eqref{eq:convconvexridgereg} is a special case of \eqref{eq:convexridgereg}, we henceforth use the generic form \eqref{eq:convexridgereg} to simplify the notations. 
We use the notation $R_{\vec \param}$ to express the dependence of $R$ on the aggregated set of learnable parameters $\param$, which will be specified when necessary.
From now on, we assume that the convex profiles $\psi_i$ have Lipschitz continuous derivatives, i.e.\ $\psi_i\in C^{1,1}(\R)$.

\subsection{Gradient-Step Neural Network}
Given the assumptions on $R_{\param}$, the denoised image in~\eqref{eq:denoise} can be interpreted as the unique fixed point of $\boldsymbol{T}_{R_{\param},\lambda,\alpha}\colon \R^d \to \R^d$ defined by
\begin{align}
\label{eq:fixedopintoperator2}
    \boldsymbol{T}_{R_{\param},\lambda,\alpha}(\vec x) &= \vec x -\alpha\bigl((\vec x - \vec y) + \lambda \boldsymbol{\nabla} R_{\param}(\vec x)\bigr).
\end{align}
Iterations of the operator~\eqref{eq:fixedopintoperator2} implement a gradient descent with stepsize $\alpha$, which converges if $\alpha \in (0,2/(1+ \lambda L_{\param}))$, where $L_{\param} = \Lip(\boldsymbol{\nabla} R_{\param})$ is the Lipschitz constant of $\boldsymbol{\nabla} R_{\param}$.
In the sequel, we always enforce this constraint on $\alpha$.
The gradient of the generic convex-ridge expression~\eqref{eq:convexridgereg} is given by
\begin{equation}
\label{eq:gradmodel}
    \boldsymbol{\nabla} R_{\param}(\vec x) = \mat W^T \boldsymbol{\sigma}(\mat W \vec x),
\end{equation}
where $\mat W=[\vec w_1 \cdots \vec w_p]^T \in\R^{p\times d}$ and $\boldsymbol{\sigma}$ is a pointwise activation function whose components $(\sigma_i=\psi_i')_{i=1}^p$ are Lipschitz continuous and increasing.
In our implementation, the activation functions $\sigma_i$ are shared within each channel of $\mat W$.
The resulting gradient-step operator
\begin{align}
\label{eq:fixedopintoperator}
    \boldsymbol{T}_{R_{\param},\lambda,\alpha}(\vec x) &= (1 - \alpha) \vec x + \alpha \bigl(\vec y - \lambda \mat W^T \boldsymbol{\sigma}(\mat W\vec x)\bigr)
\end{align}
corresponds to a one-hidden-layer convolutional NN with a bias and a skip connection. We refer to it as a {\it gradient-step NN}. The training of a gradient-step NN will give a CRR-NN.

\section{Characterization of Good Profile Functions}
In this section, we provide theoretical results to motivate our choice of the profiles $\psi_i$ or, equivalently, of their derivatives $\sigma_i=\psi_i'$.
This will lead us to the implementation presented in Section \ref{sc:implementation}.

\subsection{Existence of Minimizers and Stability of the Reconstruction}
The convexity of $R_{\param}$ is not sufficient to ensure that the solution set in \eqref{eq:VarProb} is nonempty for a noninvertible forward matrix $\vec H$.
With convex-ridge regularizers, this shortcoming can be addressed under a mild condition on the functions $\psi_i$ (Proposition~\ref{pr:existenceCS}).
The implications for our implementation are detailed in Section~\ref{subsec:imporvedlearning}.
\begin{proposition}
\label{pr:existenceCS}
Let $\mat H\in\R^{m\times d}$ and $\psi_i\colon \R \rightarrow \R$, $i=1,\ldots,p$, be convex functions.
If $\argmin_{t\in\R}\psi_i(t)\neq \emptyset$ for all $i=1,\ldots,p$, then
\begin{equation}
\label{eq:invprboptim}
    \emptyset \neq \argmin \limits_{\vec x\in\R^d}\frac{1}{2}\|\mat H \vec x - \vec y\|_2^2 + \sum_{i=1}^p \psi_i(\vec w_i^T \vec x).
\end{equation}
\end{proposition}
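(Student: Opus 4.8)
The plan is to apply the direct method of the calculus of variations to the objective $F(\vec x) = \tfrac12\|\mat H\vec x - \vec y\|_2^2 + \sum_{i=1}^p\psi_i(\vec w_i^T\vec x)$, being careful that $F$ need not be coercive. First I would record the easy structural facts: each $\psi_i$ is convex and, by hypothesis, bounded below by its minimal value, so $F$ is convex, continuous, and bounded below; hence $m := \inf F$ is finite and a minimizing sequence $(\vec x_n)$ exists. The entire difficulty is to exhibit a \emph{bounded} minimizing sequence, after which compactness and continuity of $F$ deliver a minimizer.

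To locate the unbounded directions I would pass to the recession (asymptotic) function $F_\infty$. Since $\|\mat H\vec x - \vec y\|_2^2$ grows quadratically whenever $\mat H\vec d\ne 0$, one gets $F_\infty(\vec d)=+\infty$ off $\ker\mat H$; on $\ker\mat H$ the data term is constant and $F_\infty(\vec d)=\sum_i(\psi_i)_\infty(\vec w_i^T\vec d)$. Because each $\psi_i$ attains its minimum, its two one-sided asymptotic slopes are nonnegative, so $(\psi_i)_\infty\ge 0$ and therefore $F_\infty\ge 0$ everywhere: $F$ has no direction of strict descent at infinity. The recession cone $K=\{\vec d:F_\infty(\vec d)=0\}$ is thus a closed convex cone contained in $\ker\mat H$, and its lineality space $L=K\cap(-K)$ is exactly the subspace along which $F$ is constant.

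The key lemma, and the place where the hypothesis $\argmin\psi_i\ne\emptyset$ does its essential work, is the upgrade from ``zero slope'' to ``eventually constant'': if a convex $\psi$ has a minimizer at $t^*$ and its right asymptotic slope vanishes, then its slopes on $[t^*,\infty)$ are simultaneously $\ge 0$ (past the minimizer) and $\le 0$ (bounded by the asymptotic slope), hence identically zero, so $\psi$ is \emph{constant} on $[t^*,\infty)$; symmetrically on the left. This finiteness is precisely what fails for convex functions without a minimizer, such as $t\mapsto e^t$, whose infimum is approached but never attained. Consequently, for any $\vec d\in K$, the map $t\mapsto F(\vec x+t\vec d)$ is nonincreasing and becomes exactly constant once the finitely many affine quantities $\vec w_i^T(\vec x+t\vec d)$ (for the indices $i$ with vanishing slope) pass the relevant endpoints of $\argmin\psi_i$; that is, the infimum of $F$ along such a ray is attained at a finite point.

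With this lemma in hand I would assemble the argument as follows. Decompose $\vec x_n=\boldsymbol{\ell}_n+\vec z_n$ with $\boldsymbol{\ell}_n\in L$ and $\vec z_n\in L^\perp$; since $F$ is constant along $L$, the sequence $(\vec z_n)\subset L^\perp$ is still minimizing, and the recession cone of $F$ restricted to $L^\perp$ is the pointed cone $K\cap L^\perp$. If $(\vec z_n)$ were unbounded, a normalized subsequence would converge to a unit one-sided recession direction $\vec d$; using the lemma I would move each $\vec z_n$ back along $-\vec d$ to the finite threshold where $F$ ceases to be constant, which does not increase $F$ yet strictly reduces the component along $\vec d$. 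Running over the finitely many active-index configurations trims the minimizing sequence into a bounded one. Extracting a convergent subsequence $\vec z_{n_k}\to\vec z^*$ and invoking continuity of $F$ then gives $F(\vec z^*)=m$, so the set in \eqref{eq:invprboptim} is nonempty. I expect the main obstacle to be exactly this non-coercivity: the one-sided recession directions, where $F_\infty(\vec d)=0$ but $F_\infty(-\vec d)>0$, block a naive extraction from an arbitrary minimizing sequence, and it is only the eventually-constant structure guaranteed by $\argmin\psi_i\ne\emptyset$ that allows one to replace such a sequence by a bounded one.
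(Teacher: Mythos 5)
Your proposal is correct in substance but takes a genuinely different route from the paper's. The paper avoids recession analysis entirely: it uses the sets $S_i=\argmin_{t\in\R}\psi_i(t)$ to partition $\R^d$ into finitely many polyhedra on which each ridge is either constant at its minimum or one-sidedly coercive, restricts a minimizing sequence to one such polyhedron $P$, notes that $\mat H\vec x_k$ and the $\vec w_i^T\vec x_k$ for the non-constant indices stay bounded along it, and then concludes by intersecting $P$ with the affine set $Q$ on which these quantities equal their limits, invoking the classical fact that two closed polyhedra at distance zero must intersect. You instead work with $F_\infty$, prove the key upgrade that a convex $\psi_i$ with vanishing one-sided asymptotic slope is \emph{eventually constant} precisely because it attains its minimum (the hypothesis enters in both proofs, packaged either as this eventual constancy or as coercivity outside $S_i$), and repair an unbounded minimizing sequence by sliding back along recession directions. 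Both are valid; the paper outsources the compactness difficulty to the polyhedral intersection theorem, whereas your argument is self-contained and makes the role of the attainment hypothesis more transparent, at the price of having to manage the trimming by hand.

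That trimming is the one place where your sketch needs firming up. First, sliding $\vec z_n$ along $-\vec d$ "does not increase $F$" only once $\vec z_n$ already lies in the constant region of its own ray, which holds for $n$ large because $\vec w_i^T\vec z_n=\|\vec z_n\|\,\vec w_i^T(\vec z_n/\|\vec z_n\|)\to\pm\infty$ with the sign of $\vec w_i^T\vec d$ for every index with $\vec w_i^T\vec d\neq 0$; this deserves a line. Second, "strictly reduces the component along $\vec d$" does not by itself give boundedness, since after one pass the sequence may still escape along a different direction of $K\cap L^\perp$, and one must show the process terminates. The clean potential function is the set of indices $i$ (together with $\mat H$) for which $\vec w_i^T\vec z_n$ remains bounded: any index already in this set has $\vec w_i^T\vec d=0$ and is unaffected by the slide, while the index attaining the sliding threshold gets pinned to an endpoint of its $S_i$ and joins the set, which therefore grows strictly at each pass. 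After at most $p$ passes all $\vec w_i^T\vec z_n$ and $\mat H\vec z_n$ are bounded, and a further escape direction would have to lie in the lineality space $L$, contradicting $\vec d\in L^\perp\setminus\{\vec 0\}$. With that bookkeeping supplied, your argument closes.
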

\begin{proof}
Set $S_i = \argmin_{t\in\R}\psi_i(t)$.
Then, each ridge $\psi_i(\vec w_i^T \cdot)$ partitions $\R^d$ into the three (possibly empty) convex polytopes
\begin{itemize}
    \item $\Omega^i_{0} = \{\vec x \in \R^d: \vec w_i^T \vec x \in S_{i}\}$;
    \item $\Omega^i_{1} = \{\vec x \in \R^d: \vec w_i^T \vec x \leq \inf S_i\}$;
    \item $\Omega^i_{2} = \{\vec x \in \R^d: \vec w_i^T \vec x \geq \sup S_i\}$.
\end{itemize}
Based on these, we partition $\R^d$ into finitely many polytopes of the form $\bigcap_{i=1}^p \Omega^i_{m_i}$, where $m_i\in \{0,1,2\}$.
The infimum of the objective in \eqref{eq:invprboptim} must be attained in at least one of these polytopes, say, $P=\bigcap_{i=1}^p \Omega^i_{m_i}$.

Now, we pick a minimizing sequence $(\vec x_k)_{k\in\N} \subset P$.
Let $\mat M$ be the matrix whose rows are the rows of $\mat H$ and the $\vec w_i^T$ with $m_i \neq 0$.
Due to the coercivity of $\|\cdot\|_2^2$, we get that $\mat H \vec x_k$ remains bounded.
As the $\psi_i$ are convex, they are coercive on the intervals $(-\infty, \inf S_i]$ and $[\sup S_i, +\infty)$ and, hence, $\vec w_i^T\vec x_k$ also remains bounded.
Therefore, the sequence $(\mat M \vec x_k)_{k\in\N}$ is bounded and we can drop to a convergent subsequence with limit $\vec u \in \text{ran}(\vec M)$.
The associated set
\begin{equation}
Q=\{\vec x\in\R^d \colon \mat M\vec x = \vec u\} = \{\vec M^\dagger \vec u\} + \ker(\vec M)
\end{equation}
is a closed polytope.
It holds that
\begin{align}
    \mathrm{dist}(\vec x_k, Q) &= \mathrm{dist}\bigl(\vec M^\dagger \vec M \vec x_k + \mathrm P_{\ker(\vec M)}(\vec x_k), Q\bigr) \nonumber\\
    & \leq \mathrm{dist}(\vec M^\dagger \vec M \vec x_k, \vec M^\dagger \vec u) \to 0
\end{align}
as $k\to +\infty$ and, thus, that $\mathrm{dist}(P, Q)=0$.
The distance of the closed polytopes $P$ and $Q$ is 0 if and only if $P\cap Q \neq \emptyset$ \cite[Theorem~1]{W1968}.
Note that  $\psi_i(\vec w_i^T \cdot)$ is constant on $P$ if $m_i = 0$.
Hence, any $\vec x\in P\cap Q$ is a minimizer of \eqref{eq:invprboptim}.
\end{proof}

The proof of Proposition~\ref{pr:existenceCS} directly exploits the properties of ridge functions.
Whether it is possible to extend the result to more complex or even generic convex regularizers is not known to the authors.
The assumption in Proposition~\ref{pr:existenceCS} is rather weak as neither the cost function nor the one-dimensional profiles $\psi_i$ need to be coercive.
The existence of a solution for Problem \eqref{eq:VarProb} is a key step towards the stability of the reconstruction map in the measurement domain, which is given in Proposition \ref{pr:stability}.
\begin{proposition}
\label{pr:stability}
    Let $\mat H\in\R^{m\times d}$ and $\psi_i\colon \R \rightarrow \R$, $i=1,\ldots,p$, be convex, continuously differentiable functions with $\argmin_{t\in\R}\psi_i(t)\neq \emptyset$.
    For any $\vec y_1, \vec y_2 \in\mathbb{R}^m$ let
\begin{equation}
    \vec x_q \in \argmin \limits_{\vec x\in\R^d}\frac{1}{2}\|\mat H \vec x - \vec y_q\|_2^2 + \sum_{i=1}^p \psi_i(\vec w_i^T \vec x)
\end{equation}
with $q=1,2$ be the corresponding reconstructions.
Then,
\begin{equation}
    \|\mat H \vec x_1 - \mat H \vec x_2\|_2 \leq \| \vec y_1 - \vec y_2\|_2.
\end{equation}
\end{proposition}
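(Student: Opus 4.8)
The plan is to exploit the first-order optimality conditions together with the monotonicity of the gradient of a convex function. First I would invoke Proposition~\ref{pr:existenceCS} to guarantee that the reconstructions $\vec x_1,\vec x_2$ actually exist, so that the statement is not vacuous; the hypotheses on $\mat H$ and on the $\psi_i$ match exactly. Writing $R(\vec x)=\sum_{i=1}^p\psi_i(\vec w_i^T\vec x)$, the objective $\vec x\mapsto \tfrac12\|\mat H\vec x-\vec y_q\|_2^2+R(\vec x)$ is convex and, since each $\psi_i\in C^1(\R)$, continuously differentiable. Hence each minimizer $\vec x_q$ is characterized by the stationarity condition
\begin{equation}
\mat H^T(\mat H\vec x_q-\vec y_q)+\boldsymbol{\nabla} R(\vec x_q)=\vec 0,\qquad q=1,2.
\end{equation}

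Next I would subtract these two identities and test the result against $\vec x_1-\vec x_2$. Using $\langle \mat H^T\mat H\vec z,\vec z\rangle=\|\mat H\vec z\|_2^2$ and $\langle \mat H^T\vec u,\vec z\rangle=\langle\vec u,\mat H\vec z\rangle$, this yields
\begin{equation}
\|\mat H(\vec x_1-\vec x_2)\|_2^2+\langle \boldsymbol{\nabla} R(\vec x_1)-\boldsymbol{\nabla} R(\vec x_2),\,\vec x_1-\vec x_2\rangle=\langle \vec y_1-\vec y_2,\,\mat H(\vec x_1-\vec x_2)\rangle.
\end{equation}
The crucial structural input is that $\boldsymbol{\nabla} R$ is a monotone operator because $R$ is convex, so the inner-product term on the left is nonnegative and can be discarded. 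This gives the one-sided bound
\begin{equation}
\|\mat H(\vec x_1-\vec x_2)\|_2^2\le \langle \vec y_1-\vec y_2,\,\mat H(\vec x_1-\vec x_2)\rangle.
\end{equation}

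Finally I would apply the Cauchy--Schwarz inequality to the right-hand side and divide by $\|\mat H(\vec x_1-\vec x_2)\|_2$, treating the trivial case $\mat H\vec x_1=\mat H\vec x_2$ separately, to obtain the claimed estimate $\|\mat H\vec x_1-\mat H\vec x_2\|_2\le\|\vec y_1-\vec y_2\|_2$. I do not anticipate a genuine obstacle, since this is the standard firm-nonexpansiveness computation for the data-fidelity term; the only point that requires care is that, because $\mat H$ may be noninvertible, the monotonicity of $\boldsymbol{\nabla} R$ controls only the image $\mat H(\vec x_1-\vec x_2)$ and not $\vec x_1-\vec x_2$ itself, which is exactly why the conclusion is phrased as a bound in the measurement domain rather than as an estimate on $\|\vec x_1-\vec x_2\|_2$.
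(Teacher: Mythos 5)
Your proposal is correct and follows essentially the same route as the paper: existence via Proposition~\ref{pr:existenceCS}, subtraction of the two stationarity conditions, testing against $\vec x_1-\vec x_2$, monotonicity of $\boldsymbol{\nabla}R$, and Cauchy--Schwarz. Your explicit handling of the degenerate case $\mat H\vec x_1=\mat H\vec x_2$ before dividing is a minor point of care that the paper leaves implicit.
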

\begin{proof}
    Proposition \ref{pr:existenceCS} guarantees the existence of $\vec x_q$.
    Since the objective in \eqref{eq:invprboptim} is smooth, it holds that $\mat H^T(\mat H \vec x_q -\vec y_q) + \V \nabla R(\vec x_q)=\vec 0$.
    From this, we infer that
    \begin{equation}
        \mat H^T\mat H (\vec x_1 - \vec x_2) + (\V \nabla R(\vec x_1) - \V\nabla R(\vec x_2))=\mat H^T(\vec y_1 - \vec y_2).
    \end{equation}
    Taking the inner product with $(\vec x_1 - \vec x_2)$ on both sides gives
    \begin{align}
        &\|\mat H \vec x_1 - \mat H \vec x_2\|_2^2 +(\vec x_1 - \vec x_2)^T(\V \nabla R(\vec x_1) - \V\nabla R(\vec x_2))\notag\\
        =&  (\mat H (\vec x_1 - \vec x_2))^T\mat (\vec y_1 - \vec y_2).
    \end{align}
    To conclude, we use the fact that the gradient of a convex map is monotone, i.e.\ $(\vec x_1 - \vec x_2)^T(\V \nabla R(\vec x_1) - \V\nabla R(\vec x_2))\geq 0$, and apply the Cauchy-Schwarz inequality to estimate
    \begin{equation}
        (\mat H (\vec x_1 - \vec x_2))^T\mat (\vec y_1 - \vec y_2)\leq \|\mat H \vec x_1 - \mat H \vec x_2\|\|\vec y_1 - \vec y_2\|.\qedhere
    \end{equation}
\end{proof}

\subsection{Expressivity of Profile Functions}
The gradient-step NN $\boldsymbol{T}_{R_{\param},\lambda,\alpha}$ introduced in ~\eqref{eq:fixedopintoperator} is the key component of our training procedure.
Here, we investigate its expressivity depending on the choice of the activation functions $\sigma_i$ used to parametrize $\boldsymbol{\nabla} R_{\param}$.

Let $C^{0,1}_{\uparrow}(\mathbb{R})$ be the set of scalar Lipschitz-continuous and increasing functions on $\mathbb{R}$, and let $\mathcal{LS}^m_{\uparrow}(\mathbb{R})$ be the subset of increasing linear splines with at most $m$ knots.
We also define
\begin{equation}
    \mathcal{E}(\mathbb{R}^d) = \bigl\{\mat W^T \boldsymbol{\sigma} (\mat W\cdot): \mat W\in\mathbb{R}^{p\times d}, \sigma_i \in C^{0,1}_{\uparrow}(\mathbb{R})\bigr\}
\end{equation}
and, further, for any $\Omega\subset\mathbb{R}^d$,
\begin{equation}
    \mathcal{E}(\Omega) = \bigl\{\left.\boldsymbol{f}\right|_{\Omega}\colon \boldsymbol{f}\in \mathcal{E}(\mathbb{R}^d)\bigr\}.
\end{equation}
In the following, we set $\Vert \boldsymbol{f}\Vert_{C(\Omega)} \coloneqq \sup_{\vec x \in \Omega} \Vert \boldsymbol{f}(\vec x) \Vert$ and $\Vert \boldsymbol{f}\Vert_{C^1(\Omega)} \coloneqq \sup_{\vec x \in \Omega} \Vert \boldsymbol{f}(\vec x) \Vert + \sup_{\vec x \in \Omega} \Vert \boldsymbol{J}_{\boldsymbol{f}}(\vec x) \Vert$.

The popular ReLU activation function is Lipschitz-continuous and increasing. Unfortunately, it comes with limited expressivity, as shown in Proposition~\ref{pr:expressivityactivationrelu}.
\begin{proposition}
\label{pr:expressivityactivationrelu}
Let $\Omega\subset \R^d$ be compact with a nonempty interior.
Then, the set
\begin{equation}\label{eq:setReLU}
   \bigl\{\mat W^T \mathrm{ReLU} (\mat W\cdot - \vec b)\colon \mat W\in\mathbb{R}^{p\times d}, \vec b\in\mathbb{R}^p\bigr\}
\end{equation}
is not dense with respect to $\Vert \cdot \Vert_{C(\Omega)}$ in $\mathcal{E}(\Omega)$.
\end{proposition}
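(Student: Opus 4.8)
The plan is to exhibit a single target $\boldsymbol g\in\mathcal E(\Omega)$ that no element of the set~\eqref{eq:setReLU} can approximate to within a fixed positive tolerance in $\Vert\cdot\Vert_{C(\Omega)}$. First I record that~\eqref{eq:setReLU} is in fact a subset of $\mathcal E(\Omega)$, since $\mat W^T\mathrm{ReLU}(\mat W\cdot-\vec b)=\mat W^T\boldsymbol{\tilde\sigma}(\mat W\cdot)$ with $\tilde\sigma_i=\mathrm{ReLU}(\cdot-b_i)\in C^{0,1}_{\uparrow}(\R)$; the statement is therefore genuinely about the density of this subset, and the obstruction must be that a \emph{single} kink of fixed slope jump is too rigid. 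The target will be a single ridge $\boldsymbol g(\vec x)=\vec u\,\rho(\vec u^T\vec x)$ for a unit vector $\vec u$ and an increasing, Lipschitz, flat--ramp--flat profile $\rho$; this lies in $\mathcal E(\R^d)$ with $p=1$ and $\mat W=\vec u^T$, so $\boldsymbol g\in\mathcal E(\Omega)$.

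The core is a one-dimensional structural restriction obtained along a line. As $\Omega$ has nonempty interior it contains a ball $B(\vec c,r)$, and I consider the segment $\tau\mapsto\vec c+\tau\vec u$, $\tau\in[-r,r]$. For any candidate $\boldsymbol f(\vec x)=\sum_i\vec w_i\,\mathrm{ReLU}(\vec w_i^T\vec x-b_i)$ I project onto $\vec u$ and set $\phi(\tau)=\vec u^T\boldsymbol f(\vec c+\tau\vec u)$. Writing $a_i=\vec w_i^T\vec u$ gives $\phi(\tau)=\sum_i a_i\,\mathrm{ReLU}(a_i\tau-\tilde b_i)$, which is exactly the scalar ReLU form: each term with $a_i>0$ is convex and increasing, each term with $a_i<0$ is concave and increasing, and terms with $a_i=0$ vanish. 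Hence $\phi=A+B$ with $A$ convex increasing and $B$ concave increasing. I expect the \textbf{main obstacle} to be convincing oneself that off-direction neurons cannot help: the component of $\boldsymbol f$ along $\vec u$ only ever sees the scalar $a_i=\vec w_i^T\vec u$, so the reduction to the scalar ReLU form is exact regardless of the directions $\vec w_i$, and this is what makes the one-dimensional obstruction rigid enough to survive in $\R^d$.

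Finally I turn the decomposition into a quantitative gap. Splitting $[-r,r]$ into three consecutive intervals $I_1,I_2,I_3$ of equal length and writing $\Delta_k g$ for the increment of $g$ across $I_k$, monotonicity of $A'$ (nondecreasing) and $B'$ (nonincreasing) gives $\Delta_2 A\le\Delta_3 A$ and $\Delta_2 B\le\Delta_1 B$; since $A,B$ are increasing, $\Delta_3 A\le\Delta_3\phi$ and $\Delta_1 B\le\Delta_1\phi$, so every admissible $\phi$ obeys the constraint $\Delta_2\phi\le\Delta_1\phi+\Delta_3\phi$, i.e.\ its middle rise cannot exceed the sum of its outer rises. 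I now choose $\rho$ constant on the $\tau$-preimages of $I_1$ and $I_3$ and rising by $1$ across that of $I_2$ (centring the ramp via the offset $\vec u^T\vec c$), so that the projected target $\tilde\rho(\tau)=\rho(\vec u^T\vec c+\tau)$ has $\Delta_2\tilde\rho=1$ and $\Delta_1\tilde\rho=\Delta_3\tilde\rho=0$, flatly violating the constraint. Since $\vec u$ is a unit vector, $\Vert\boldsymbol f-\boldsymbol g\Vert_{C(\Omega)}\ge\sup_\tau|\phi(\tau)-\tilde\rho(\tau)|$; if this were less than $\epsilon$ then each $\Delta_k\phi$ would lie within $2\epsilon$ of $\Delta_k\tilde\rho$, and plugging into $\Delta_2\phi\le\Delta_1\phi+\Delta_3\phi$ forces $1-2\epsilon\le4\epsilon$, a short computation giving $\epsilon\ge 1/6$. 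Thus the distance from $\boldsymbol g$ to the set~\eqref{eq:setReLU} is bounded below by the fixed constant $1/6$, so the set is not dense, as claimed.
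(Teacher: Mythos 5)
Your proof is correct and follows essentially the same route as the paper's: reduce to a line segment in the interior, observe that the projected candidate is a sum of convex-increasing and concave-increasing scalar ReLU terms, and show this cannot approximate a flat--ramp--flat ridge profile. The only difference is cosmetic---where the paper bounds the total weight of the neurons active on each flat region, you package the same convexity/concavity obstruction as the increment inequality $\Delta_2\phi\le\Delta_1\phi+\Delta_3\phi$, which yields the explicit gap $1/6$ a bit more cleanly.
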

\begin{proof}
Since $\Omega$ has a nonempty interior, there exists $\vec v \in \R^{d}$ with $\| \vec v \|_2=1$, $a \in \R$, and $\delta>0$ such that for $\boldsymbol{l}_{\vec v}\colon \R \to \R^d$ with $\boldsymbol{l}_{\vec v}(t)=t \vec v$, it holds that $\boldsymbol{l}_{\vec v}((a-\delta,a+\delta)) \subset \Omega$.
Now, we prove the statement by contradiction.
If the set \eqref{eq:setReLU} is dense in $\mathcal{E}(\Omega)$, then the set
\begin{align}
    &\bigl\{(\mat W \vec v)^T \mathrm{ReLU} (\mat W \vec v\cdot - \vec b)\colon \mat W\in\mathbb{R}^{p\times d}, \vec b\in\mathbb{R}^p\bigr\}\notag\\
    =&\biggl\{\sum_{i=1}^p w_i\mathrm{ReLU}(w_i \cdot - b_i)\colon w_i, b_i \in\mathbb{R}\biggr\}
\end{align}
is dense in $\mathcal{E}((a-\delta,a+\delta))$.
Note that all functions $f$ in \eqref{eq:setReLU} can be rewritten in the form
\begin{equation}\label{eq:special_form}
    f(x) =\! \sum_{i=1}^{p_1} \mathrm{ReLU}(w_i x - b_i) +\! \sum_{i=1}^{p_2} (-\mathrm{ReLU}(- \tilde w_i x - \tilde b_i)),
\end{equation}
where $w_i, \tilde w_i \in\mathbb{R}^+$, $ b_i, \tilde b_i \in\mathbb{R}$, and $p_1+p_2=p$.
Every summand in this decomposition is an increasing function.
For the continuous and increasing function
\begin{equation}
    g\colon t\mapsto  \mathrm{ReLU}(t - a + \delta/2) - \mathrm{ReLU}(t - a - \delta/2),
\end{equation}
the density implies that there exists $f$ of the form \eqref{eq:special_form} satisfying $\Vert g -f \Vert_{C((a-\delta,a+\delta))} \leq \delta/16$.
The fact that $g(a+\delta/2) = g(a+\delta)$ implies that $ (f(a+\delta) - f(a+\delta/2)) \leq \delta/8$.
In addition, it holds that
\begin{align}
& f(a+\delta) - f(a+\delta/2)\notag\\
\geq&  \sum_{i=1}^{p_1} \mathrm{ReLU}\bigl(w_i (a+\delta) - b_i\bigr) - \mathrm{ReLU}\bigl(w_i (a+\delta/2) - b_i\bigr)\notag\\
\geq& \sum_{\{i: b_i \leq w_i (a + \delta/2)\}} w_i (a+\delta - a - \delta/2)\notag\\
=& \sum_{\{i: b_i \leq w_i (a + \delta/2)\}} w_i \delta/2.
\end{align}
Hence, we conclude that $\sum_{\{i: b_i \leq w_i (a + \delta/2)\}} w_i \leq 1/4$.
Similarly, we can show that $\sum_{\{i: \tilde b_i \geq \tilde w_i (\delta/2 - a)\}} \tilde w_i \leq 1/4$.
Using these two estimates, we get that
\begin{align}
    \frac{7}{8} \delta &= g(a + \delta/2) - g(a - \delta/2) - \frac{1}{8}\delta \notag\\
    &\leq f(a + \delta/2) - f(a - \delta/2)\notag\\
    &\leq \sum_{\{i: b_i \leq w_i (a + \delta/2)\}} \delta w_i +  \sum_{\{i: \tilde b_i \geq \tilde w_i (\delta/2 - a)\}} \delta \tilde w_i \leq \frac{\delta}{2},
\end{align}
which yields a contradiction.
Hence, the set \eqref{eq:setReLU} cannot be dense in $\mathcal{E}(\Omega)$.
\end{proof}
\begin{remark}
    Any increasing linear spline $s$ with one knot is fully defined by the knot position $t_0$ and the slope on its two linear regions ($s_-$ and $s_+$).
    This can be expressed as $s=\vec u^T \mathrm{ReLU}(\vec u(t - t_0))$ with $\vec u = (\sqrt{s_+}, -\sqrt{s_-})$.
    Hence, among one-knot spline activation functions, the ReLU already achieves the maximal representational power for CRR-NNs.
    We infer that increasing PReLU and Leaky-ReLU induce the same limitations as the ReLU when plugged into CRR-NNs.
\end{remark}
In contrast, with Proposition \ref{pr:expressivityactivationls}, the set $\mathcal{E}(\Omega)$ can be approximated using increasing linear-spline activation functions.
\begin{proposition}
\label{pr:expressivityactivationls}
Let $\Omega\subset \R^d$ be compact and $m\geq 2$.
Then, the set
\begin{equation}\label{eq:JacLS}
   \bigl\{\mat W^T \boldsymbol{\sigma} (\mat W\cdot)\colon \mat W\in\mathbb{R}^{p\times d}, \sigma_i \in \mathcal{LS}^m_{\uparrow}(\mathbb{R})\bigr\}
\end{equation}
is dense with respect to $\Vert \cdot \Vert_{C(\Omega)}$ in $\mathcal{E}(\Omega)$.
\end{proposition}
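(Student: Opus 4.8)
The plan is to reduce the multivariate statement to a one-dimensional approximation problem and then resolve it by decomposing into two-knot increasing splines, which are precisely the atoms that the one-knot (ReLU) activations of Proposition~\ref{pr:expressivityactivationrelu} cannot reproduce. Writing the rows of $\mat W$ as $\vec w_i$, every target $\boldsymbol f \in \mathcal{E}(\Omega)$ has the form $\boldsymbol f(\vec x) = \sum_{i=1}^p \vec w_i\, \sigma_i(\vec w_i^T\vec x)$ with each $\sigma_i \in C^{0,1}_{\uparrow}(\mathbb{R})$. Since $\Omega$ is compact, each linear form $\vec x \mapsto \vec w_i^T\vec x$ maps $\Omega$ into a compact interval $I_i \subset \R$, so it suffices to approximate each $\sigma_i$ uniformly on $I_i$ by realizable profiles; the errors then combine as $\|\boldsymbol f - \tilde{\boldsymbol f}\|_{C(\Omega)} \le \sum_i \|\vec w_i\|_2\, \|\sigma_i - \tilde\sigma_i\|_{C(I_i)}$.

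The key realization step is the observation that appending several rows of $\mat W$ all equal to the same $\vec w_i$, but carrying different activations $\rho_{i,l} \in \mathcal{LS}^m_{\uparrow}(\mathbb{R})$, produces the term $\vec w_i \sum_l \rho_{i,l}(\vec w_i^T\vec x)$. Hence realizing $\vec w_i\,\tilde\sigma_i(\vec w_i^T\cdot)$ with an element of \eqref{eq:JacLS} amounts to writing the scalar profile $\tilde\sigma_i$ as a finite sum of increasing linear splines, each with at most $m$ knots. Thus the proposition follows once I establish two scalar facts: (i) every $\sigma \in C^{0,1}_{\uparrow}(\mathbb{R})$ can be approximated uniformly on a compact interval by an increasing linear spline $\tilde\sigma$; and (ii) every increasing linear spline is a finite sum of increasing linear splines with at most two knots.

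For (i) I would take the continuous piecewise-linear interpolant of $\sigma$ on a uniform grid of $I_i$: since $\sigma$ is increasing the interpolant is increasing, and since $\sigma$ is Lipschitz the interpolation error is controlled by the grid spacing, giving uniform convergence. Fact (ii), the crux and where $m \ge 2$ is essential, I would prove through the derivative. The derivative of an increasing linear spline is a nonnegative step function $\phi \ge 0$ with finitely many breakpoints, and any such $\phi$ is a finite nonnegative combination $\phi = \sum_l h_l \,\mathbf{1}_{[p_l,q_l]}$ of indicators of intervals, obtained by horizontal slicing at the finitely many level values of $\phi$, whose superlevel sets are finite unions of intervals. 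Each atom $h_l \mathbf{1}_{[p_l,q_l]}$ with $h_l \ge 0$ is the derivative of an increasing two-knot spline with slope pattern $(0,h_l,0)$ (degenerating to fewer knots when an interval meets an endpoint); integrating termwise and absorbing the additive constant into the vertical offset of one atom writes $\tilde\sigma$ as a finite sum of increasing splines with at most two, hence at most $m$, knots.

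Finally I would assemble the pieces: combining (i) and (ii) exhibits, for each $\epsilon>0$, a function $\tilde{\boldsymbol f} = \sum_i\sum_l \vec w_i\,\rho_{i,l}(\vec w_i^T\cdot)$ lying in the set \eqref{eq:JacLS} (with $\mat W$ formed by suitably repeating the rows $\vec w_i$) and satisfying $\|\boldsymbol f - \tilde{\boldsymbol f}\|_{C(\Omega)} \le \epsilon$. I expect the main obstacle to be fact (ii): one must produce localized ``bump-in-slope'' building blocks, which single-knot splines (and hence, by the Remark, ReLU, PReLU, and Leaky-ReLU) cannot generate. The two-knot atoms with slope pattern $(0,h,0)$ are exactly what make the difference, which explains the threshold $m \ge 2$ and the contrast with Proposition~\ref{pr:expressivityactivationrelu}.
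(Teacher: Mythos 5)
Your proposal is correct and follows essentially the same route as the paper: reduce to a one-dimensional problem on the compact intervals $\{\vec w_i^T\vec x\colon \vec x\in\Omega\}$, approximate each $\sigma_i$ by its increasing piecewise-linear interpolant, decompose that interpolant into a nonnegative combination of increasing two-knot splines, and realize the resulting sum as an element of \eqref{eq:JacLS} by duplicating rows of $\mat W$. The only difference is cosmetic: the paper writes the interpolant directly as $f(0)+\sum_k a_{k,n}\,g(2^n\cdot-(k-1))$ with one clipped-ramp atom per grid cell (slicing the increments vertically), whereas you produce the same two-knot atoms with slope pattern $(0,h,0)$, $h\ge 0$, by layer-cake slicing of the derivative.
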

\begin{proof}
First, we consider the case $d=1$.
By rescaling and shifting, we can assume that $S \subset [0,1]$ without loss of generality.
Let $f\in C^{0,1}_{\uparrow}([0,1])$, and $\varphi_{n}$ be the linear-spline interpolator of $f$ at locations $0,1/2^n,\ldots,(1-1/2^n),1$.
Since $f$ is increasing and $\varphi_{n}$ is piecewise linear, $\varphi_{n}$ is also increasing. Further, we get that
\begin{equation}
    \|f - \varphi_n\Vert_{C([0,1])} \leq \max_{k\in\{1,\ldots,2^n\}} f(k/2^n) - f((k-1)/2^n).
\end{equation}
Continuous functions on compact sets are uniformly continuous, which directly implies that $\Vert f - \varphi_n \Vert_{C([0,1])} \to 0$.
Now, we represent $\varphi_n$ as a linear combination of increasing linear splines with 2 knots
\begin{equation}
\label{eq:2knotwsexpansion}
    \varphi_n(x) = f(0) + \sum_{k=1}^{2^n} a_{k,n} g\bigl(2^n \cdot - (k-1)\bigr),
\end{equation}
where $a_{k,n}=(f(k/2^n) - f((k-1)/2^n))$ and $g$ is given by
\begin{equation}
    g(x)=\begin{cases}
    0, & x\leq 0 \\
    x, & 0<x\leq 1\\
    1,& \text{otherwise}.
    \end{cases}
\end{equation}
Finally, \eqref{eq:2knotwsexpansion} can be recast as $\varphi_n(x)=\vec w_n^T\boldsymbol{\sigma}_n(x \vec w_n)$, where each $\sigma_{n,i}$ is an increasing linear spline with 2 knots and $\vec w\in \mathbb{R}^{2^n}$.
This concludes the proof for $d=1$.

Now, we extend this result to any $d\in\mathbb{N}^{+}$.
Let $\boldsymbol{\Phi}\colon\mathbb{R}^d\rightarrow\mathbb{R}^d$ be given by $\vec x\mapsto \mat W^T\sigma(\mat W \vec x)$ with components \smash{$\sigma_i \in \mathcal{C}^0_{\uparrow}(\mathbb{R})$}.
Let $S_i=\{\vec w_i^T \vec x\colon \vec x\in\Omega\}$, where $\vec w_i \in \R^d$ is the $i$th row of $\mat W$.
Using the result for $d=1$, each $\sigma_i$ can be approximated in $C(S_i)$ by a sequence of functions $(\vec u_{n,i}^T \boldsymbol{\varphi}_{n}(\vec u_{n,i}\cdot))_{n\in\mathbb{N}}$, where $\boldsymbol{\varphi}_n$ has components $\varphi_{n,i} \in \mathcal{LS}^2_{\uparrow}(\mathbb{R})$ and $\vec u_{n,i}$ are vectors with a size that does not dependend on $i$.
Further, the $\vec u_{n,i}$ can be chosen such that the $j$th component is only nonzero for a single $i$.
Let $\mat U_n$ be the matrix whose columns are $\vec u_{n,i}$.
Then, we directly have that
\begin{equation}
    \lim_{n\to\infty}\max_{\vec x \in \{\vec y\in\R^d: y_i \in S_i\}} \bigl\|\mat U_n^T\boldsymbol{\varphi}_n(\mat U_n \vec x) - \boldsymbol{\sigma}(\vec x)\bigr\|_{2} = 0.
\end{equation}
Hence, the sequence of functions $((\mat U_n \mat W)^T\boldsymbol{\varphi}_n(\mat U_n \mat W\cdot))_{n\in\mathbb{N}}$ converges to $\boldsymbol{\Phi}$ in $C(\Omega)$.
This concludes the proof.
\end{proof}

In the end, Propositions~\ref{pr:expressivityactivationrelu} and~\ref{pr:expressivityactivationls} imply that using linear-spline activation functions instead of the ReLU for the $\sigma_i$ enables us to approximate more convex regularizers $R_{\param}$.
\begin{corollary}
Let $\Omega\subset \R^d$ be convex and compact with a nonempty interior.
Then, the regularizers of the form \eqref{eq:convexridgereg} with Jacobians of the form \eqref{eq:JacLS} are dense in
\begin{equation}\label{set:Reg}
    \biggl\{
    \sum_{i=1}^p \psi_i(\vec w_i^T \vec x): \psi_i \in C^{1,1}(\R) \text{ convex}, \vec w_i \in \R^d \biggr\}
\end{equation}
with respect to $\Vert \cdot \Vert_{C^1(\Omega)}$.
The density does not hold if we only consider regularizers with Jacobians of the form \eqref{eq:setReLU}.
\end{corollary}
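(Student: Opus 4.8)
The plan is to derive the corollary from Propositions~\ref{pr:expressivityactivationls} and~\ref{pr:expressivityactivationrelu} by relating the $C^1(\Omega)$ distance between two regularizers to the $C(\Omega)$ distance between their gradients. For a scalar regularizer $R$ of the form~\eqref{eq:convexridgereg}, the relevant Jacobian is the gradient $\boldsymbol{\nabla}R$, so that
\begin{equation*}
\|R_1 - R_2\|_{C^1(\Omega)} = \sup_{\vec x\in\Omega}|R_1(\vec x) - R_2(\vec x)| + \sup_{\vec x\in\Omega}\|\boldsymbol{\nabla}R_1(\vec x) - \boldsymbol{\nabla}R_2(\vec x)\|.
\end{equation*}
The second summand is precisely a $C(\Omega)$ distance between gradients, which is what both propositions control directly; the first summand I will recover from the second by integration.

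\textbf{Positive part.} Fix $R = \sum_i \psi_i(\vec w_i^T\cdot)$ in~\eqref{set:Reg}. Since each $\psi_i\in C^{1,1}(\R)$ is convex, $\sigma_i\coloneqq\psi_i'$ lies in $C^{0,1}_{\uparrow}(\R)$, so $\boldsymbol{\nabla}R = \mat W^T\boldsymbol{\sigma}(\mat W\cdot)\in\mathcal{E}(\Omega)$. By Proposition~\ref{pr:expressivityactivationls}, there is a sequence of maps $\boldsymbol{g}_n$ in the set~\eqref{eq:JacLS} with $\|\boldsymbol{\nabla}R - \boldsymbol{g}_n\|_{C(\Omega)}\to 0$. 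As antiderivatives of increasing linear splines are convex and piecewise quadratic, hence in $C^{1,1}(\R)$, each $\boldsymbol{g}_n$ equals $\boldsymbol{\nabla}R_n$ for a regularizer $R_n$ of the form~\eqref{eq:convexridgereg} whose Jacobian has the form~\eqref{eq:JacLS}. This already drives the gradient part of the $C^1$ norm to zero. To control the value part, I use that $\Omega$ is convex and compact. Since an additive constant can be absorbed into the profiles of $R_n$, I normalize it so that $R_n(\vec x_0)=R(\vec x_0)$ at a fixed $\vec x_0\in\Omega$. For any $\vec x\in\Omega$, the segment $[\vec x_0,\vec x]$ lies in $\Omega$ by convexity, so the fundamental theorem of calculus applied along this segment gives $|R(\vec x)-R_n(\vec x)|\leq \diam(\Omega)\,\|\boldsymbol{\nabla}R-\boldsymbol{\nabla}R_n\|_{C(\Omega)}\to 0$. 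Combining the two estimates yields $\|R-R_n\|_{C^1(\Omega)}\to 0$, proving density.

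\textbf{Negative part.} I argue by contraposition, using that $C^1(\Omega)$ convergence of regularizers forces $C(\Omega)$ convergence of their gradients (the gradient term is dominated by the $C^1$ norm). By Proposition~\ref{pr:expressivityactivationrelu}, the ReLU gradients in~\eqref{eq:setReLU} are not $C(\Omega)$-dense in $\mathcal{E}(\Omega)$; I pick a witness $\boldsymbol{f}=\mat W^T\boldsymbol{\sigma}(\mat W\cdot)\in\mathcal{E}(\Omega)$ that they cannot approximate and, taking convex $C^{1,1}$ antiderivatives $\psi_i$ of the increasing Lipschitz profiles $\sigma_i$, obtain a regularizer $R=\sum_i\psi_i(\vec w_i^T\cdot)\in$~\eqref{set:Reg} with $\boldsymbol{\nabla}R=\boldsymbol{f}$. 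Were the regularizers with ReLU Jacobians $C^1(\Omega)$-dense, $\boldsymbol{f}$ would be a $C(\Omega)$-limit of maps in~\eqref{eq:setReLU}, a contradiction.

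The only genuinely non-mechanical step is the passage from $C(\Omega)$ control of gradients to $C^1(\Omega)$ control of the regularizers; this is where the convexity of $\Omega$ (so that segments stay inside) and the freedom to fix the additive constant within the family~\eqref{set:Reg} are both essential. Everything else is bookkeeping on antiderivatives to certify that the approximants, and the witness in the negative part, lie in the prescribed families.
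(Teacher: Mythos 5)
Your proof is correct and follows essentially the same route as the paper: reduce the $C^1(\Omega)$ statement to the $C(\Omega)$ statements of Propositions~\ref{pr:expressivityactivationls} and~\ref{pr:expressivityactivationrelu} by fixing the additive constant at a point $\vec x_0\in\Omega$ and controlling function values via the gradient along segments (the paper invokes the mean-value theorem where you use the fundamental theorem of calculus, which is the same estimate). The bookkeeping on antiderivatives and the contrapositive for the negative part match the paper's argument.
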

\begin{proof}
Let $R$ be in \eqref{set:Reg}.
Consequently, its Jacobian is in $\mathcal{E}(\Omega)$.
Due to Proposition~\ref{pr:expressivityactivationrelu}, the regularizers with Jacobians of the form \eqref{eq:setReLU} cannot be dense with respect to 
$\Vert \cdot \Vert_{C^1(\Omega)}$.
Meanwhile, by Proposition~\ref{pr:expressivityactivationls}, we can choose $\vec x_0 \in \Omega$ and corresponding regularizers $R_n$ of the form \eqref{eq:convexridgereg} with $\boldsymbol{J}_{R_n} \in \eqref{eq:JacLS}$, $\Vert \boldsymbol{J}_{R_n} - \boldsymbol{J}_R \Vert_{C(\Omega)}\to 0$ as $n \to \infty$, and $R_n(\vec x_0) = R(\vec x_0)$.
Now, the mean-value theorem readily implies that $\Vert R_n - R \Vert_{C^1(\Omega)} \to 0$ as $n \to \infty$.
\end{proof}

Motivated by these  results, we propose to parameterize the $\sigma_i$ with learnable linear-spline activation functions.
This results in profiles $\psi_i$ that are splines of degree $2$, being piecewise polynomials of degree 2 with continuous derivatives.
\section{Implementation}
\label{sc:implementation}
\subsection{Training a Multi-Gradient-Step Denoiser}
\label{subsec:trainingprocedure}
Let $\{\vec x^m\}_{m=1}^M$ be a set of clean images and let $\{\vec y^m\}_{m=1}^M=\{\vec x^m + \vec n^m\}_{m=1}^M$ be their noisy versions, where $\vec n^m$ is the noise realisation.
Given a loss function $\mathcal L$, the natural procedure to learn the parameters of $R_{\param}$ based on \eqref{eq:denoise} is to solve
\begin{equation}\label{eq:TrainProb}
    \param^*_t, \lambda^*_t \in \argmin\limits_{\param,\lambda}\sum_{m=1}^M \mathcal L\bigl(\boldsymbol{T}^t_{R_{\param},\lambda,\alpha}(\vec y^m), \vec x^m\bigr)
\end{equation}
for the limiting case $t = \infty$ and an admissible stepsize $\alpha$.
Here, $\boldsymbol{T}^t_{R_{\param},\lambda,\alpha}$ denotes the $t$-fold composition of the gradient-step NN given in~\eqref{eq:fixedopintoperator}.
In principle, one can optimize the training problem \eqref{eq:TrainProb} with $t=\infty$. This forms a bilevel optimization problem that can be handled with implicit differentiation techniques \cite{CheRan2014,BKK2019,GilOngWil2021,ABM2023}.
However, it turns out that it is unnecessary to fully compute the
fixed-point $\boldsymbol{T}^\infty_{R_{\param},\lambda,\alpha}(\vec y^m)$ to learn $R_{\param}$ in our constrained setting.
Instead, we approximate $\boldsymbol{T}^\infty_{R_{\param},\lambda,\alpha}(\vec y^m)$ in a finite number of steps.
This specifies the \textit{$t$-step denoiser} NN $\boldsymbol{T}^t_{R_{\param},\lambda,\alpha}$, which is trained such that
\begin{equation}
\label{eq:tstepdf}
\boldsymbol{T}^t_{R_{\param},\lambda,\alpha}(\vec y^m) \simeq \vec x^m
\end{equation}
for $m=1,\ldots,M$.
This corresponds to a partial minimization of~\eqref{eq:denoise} with initial guess $\vec y^m$ or, equivalently, as the unfolding of the gradient-descent algorithm for $t$ iterations with shared parameters across iterations \cite{PraAgJac2020, AgHeMaJaco2019}. For small $t$, this yields a fast-to-evaluate denoiser. 
Since it is not necessarily a proximal operator, its interpretability is, however, limited.

Once the gradient-step NN is trained, we can plug the corresponding $R_{\param}$ into \eqref{eq:denoise}, and fully solve the optimization problem. This yields an interpretable \textit{proximal denoiser}.
In practice, turning a $t$-step denoiser into a proximal one requires the adjustment of $\lambda$ and the addition of a scaling parameter, as described in Section~\ref{subsec:scalingfactor}.
Our numerical experiments in Section~\ref{subsec:training} indicate that the number of steps $t$ used for training the multi-gradient-step denoiser has little influence on the test performances of both the $t$-step and proximal denoisers.
Hence, training the model within a few minutes is possible.
Note that our method bears some resemblance with the variational networks (VN) proposed in \cite{KKHP2017}, but there are some fundamental differences.
While the model used in \cite{KKHP2017} also involves a sum of convex ridges with learnable profiles, these are parameterized by radial-basis functions and only the last step of the gradient descent is included in the forward pass.
The authors of \cite{KKHP2017} observed that an increase in $t$ deters the denoising performances, which is not the case for our architecture.
More differences are outlined in Section \ref{subsec:imporvedlearning}.

\subsection{Implementation of the Constraints}
\label{subsec:imporvedlearning}
Our learning of the $t$-step denoiser is constrained as follows.
\begin{enumerate}[label=(\roman*)]
    \item The activation functions $\sigma_i$ must be increasing (convexity constraint on $\psi_i$).
    \item The activation functions $\sigma_i$ must take the value 0 somewhere (existence constraint).
    \item The stepsize in \eqref{eq:fixedopintoperator} should satisfy $\alpha \in (0,2/(1+ \lambda L_{\param}))$ (convergent gradient-descent).
\end{enumerate}
Since the methods to enforce these constraints can have a major impact on the final performance, they must be designed carefully.
\paragraph{Monotonic Splines}
Here, we address Constraints (i) and (ii) simultaneously.
Similar to \cite{BCGA2020,bohra2021learning}, we use learnable linear splines $\sigma_{\vec c^i}\colon\mathbb{R}\rightarrow\mathbb{R}$ with $(M+1)$ uniform knots $\nu_m = (m - M/2)\Delta$, $m=0,\ldots,M$, where $\Delta$ is the spacing of the knots.
For simplicity, we assume that $M$ is even.
The learnable parameter $\vec c^i=(c_m^i)_{m=0}^{M}\in\mathbb{R}^{M+1}$ defines the value $\sigma_{\vec c^i}(\nu_m)=c^i_m$ of $\sigma_{\vec c^i}$ at the knots.
To fully characterize $\sigma_{\vec c^i}$, we extend it by the constant value $\vec c^i_0$ on $(-\infty,\nu_0]$ and $\vec c^i_M$ on $[\nu_M, +\infty)$.
This choice results in a linear extension for the corresponding indefinite integrals that appear for the regularizer $R_{\param}$ in \eqref{eq:denoise}.
Further details on the implementation of learnable linear splines can be found in \cite{BCGA2020}.

Let $\mat D\in \mathbb{R}^{M\times(M+1)}$ be the one-dimensional finite-difference matrix with $(\mat D \vec c^i)_m = c_{m+1}^i - c_m^i$ for $m=0, \ldots, (M-1)$.
As $\sigma_{\vec c^i}$ is piecewise-linear, it holds that
\begin{equation}
    \text{$\sigma_{\vec c^i}$ is increasing} \Leftrightarrow \mat D \vec c^i\geq 0.
\end{equation}
In order to optimize over $\{\sigma_{\vec c}\colon \mat D \vec c\geq 0\}$, we reparameterize the linear splines as $\sigma_{\boldsymbol{P}_{\uparrow}(\vec c^i)}$, where
\begin{equation}
\label{eq:projection}
    \boldsymbol{P}_{\uparrow} = \mat C \mat D^{\dagger}\mathrm{ReLU}(\mat D\,\cdot)
\end{equation}
is a nonlinear projection operator onto the feasible set. There, $\mat D^{\dagger}$ denotes the Moore-Penrose inverse of $\mat D$ and $\mat C= (\mathbf{Id}_{M+1} - \vec 1_{M+1} \vec e_{M/2+1}^T)$ shifts the output such that the $(M/2+1)$th element is zero.
In effect, this projection simply preserves the nonnegative finite differences between entries in $\vec c^i$ and sets the negative ones to zero.
As the associated profiles $\psi_i$ are convex and satisfy $\psi'_i(0)=\sigma_i(0)=0$, Proposition~\ref{pr:existenceCS} guarantees the existence of a solution for Problem \eqref{eq:VarProb}.

The proposed parameterization $\sigma_{\boldsymbol{P}_{\uparrow}(\vec c^i)}$ of the splines has the advantage to use unconstrained trainable parameters $\vec c_i$.
The gradient of the objective in \eqref{eq:TrainProb} with respect to $\vec c_i$ directly takes into account the constraint via $\boldsymbol{P}_{\uparrow}$.
This approach differs significantly from the more standard projected gradient descent---as done in \cite{KKHP2017} to learn convex profiles---where the $\V c_i$ would be projected onto $\{\vec c_i \colon \mat D \vec c_i\geq 0\}$ after each gradient step.
While the latter routine is efficient for convex problems, we found it to perform poorly for the non-convex problem \eqref{eq:TrainProb}.
For an efficient forward and backward pass with auto-differentiation, $\boldsymbol{P}_{\uparrow}$ is implemented with the $\texttt{cumsum}$ function instead of an explicit construction of the matrix $\mat D^{\dagger}$, and the computational overhead is very small.

\paragraph{Sparsity-Promoting Regularization}
The use of learnable activation functions can lead to overfitting and can weaken the generalizability to arbitrary operators $\mat H$.
Hence, the training procedure ought to promote simple linear splines.
Here, it is natural to promote the better-performing splines with the fewest knots.
This is achieved by penalizing the second-order total variation $\|\mat L \boldsymbol{P}_{\uparrow}(\vec c_i)\|_1$ of each spline $\sigma_{\boldsymbol{P}_{\uparrow}(\vec c_i)}$, where $\mat L \in \R^{(M-1)\times(M+1)}$ is the second-order finite-difference matrix.
The final training loss then reads
\begin{equation}\label{eq:TrainProbReg}
\sum_{m=1}^M \mathcal L\bigl(\boldsymbol{T}^t_{R_{\param},\lambda,\alpha}(\vec y^m), \vec x^m\bigr) + \eta \sum_{i=1}^p \|\mat L \boldsymbol{P}_{\uparrow}(\vec c_i)\|_1,
\end{equation}
where $\eta\in\mathbb{R}^{+}$ allows one to tune the strength of the regularization.
We refer to \cite{Unser2019} for more theoretical insights into second-order total-variation regularization and to \cite{BCGA2020} for experimental evidence of its relevance for machine learning.

\paragraph{Convergent Gradient Steps}
\label{subsc:Lipsestimate}
Constraint (iii) guarantees that the $t$-fold composition of the gradient-step NN $\boldsymbol{T}^t_{R_{\param},\lambda,\alpha}$ computes the actual minimizer of~\eqref{eq:denoise} for $t\to\infty$. Therefore, it should be enforced in any sensible training method.
In addition, it brings stability to the training.
To fully exploit the model capacity, even for small $t$, we need a precise upper-bound for $\Lip(\boldsymbol{\nabla} R_{\param})$.
The estimate that we provide in Proposition~\ref{pr:LipsBound} is sharper than the classical bound derived from the sub-multiplicativity of the Lipschitz constant for compositional models. It is easily computable as well.

\begin{proposition}
\label{pr:LipsBound}
Let $L_{\theta}$ denote the Lipschitz constant of $\boldsymbol{\nabla} R_{\param}(\vec x) = \mat W^T \boldsymbol{\sigma}(\mat W \vec x)$ with $\mat W\in\R^{p\times d}$ and $\sigma_i \in \mathcal{C}^{0,1}_{\uparrow}(\mathbb{R})$.
With the notation $\mat \Sigma_{\infty}=\mathbf{diag}(\|\sigma'_1\|_{\infty}, \ldots, \|\sigma'_p\|_{\infty})$ it holds that
\begin{equation}\label{eq:bound_imp}
    L_{\theta} \leq \|\mat W^T \mat \Sigma_{\infty} \mat W\| = \|\sqrt{\mat \Sigma_{\infty}}\mat W\|^2,
\end{equation}
which is tighter than the naive bound
\begin{equation}\label{eq:loosebound}
   L_{\theta} \leq L_{\boldsymbol{\sigma}}\|\mat W\|^2.
\end{equation}
\end{proposition}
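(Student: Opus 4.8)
The plan is to reduce the Lipschitz estimate to a statement about the operator norm of a symmetric positive-semidefinite matrix sandwiched between $\mat W^T$ and $\mat W$, and then to exploit the monotonicity of the operator norm with respect to the Loewner order.

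First I would record the scalar mechanism behind $\boldsymbol{\sigma}$. For each increasing Lipschitz profile derivative $\sigma_i$ and any $a_i, b_i \in \R$, the secant slope obeys $0 \le (\sigma_i(a_i) - \sigma_i(b_i))/(a_i - b_i) \le \|\sigma_i'\|_\infty$ whenever $a_i \neq b_i$. Hence there is some $s_i \in [0, \|\sigma_i'\|_\infty]$ (set $s_i = 0$ if $a_i = b_i$) with $\sigma_i(a_i) - \sigma_i(b_i) = s_i(a_i - b_i)$. Writing $\mat S = \mathbf{diag}(s_1, \ldots, s_p)$ and taking $\vec a = \mat W \vec x$, $\vec b = \mat W \vec z$, this yields the exact identity $\mat W^T\bigl(\boldsymbol{\sigma}(\mat W \vec x) - \boldsymbol{\sigma}(\mat W \vec z)\bigr) = \mat W^T \mat S \mat W(\vec x - \vec z)$. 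The advantage of this representation is that it side-steps the fact that the $\sigma_i$ need not be differentiable, so I never have to invoke Rademacher's theorem or an a.e.\ Jacobian; the matrix $\mat S$ depends on $\vec x$ and $\vec z$, but always satisfies $\mat 0 \preceq \mat S \preceq \mat \Sigma_\infty$ in the Loewner order since both are diagonal with entries in the stated range.

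The second step is the order-monotonicity argument. Because conjugation preserves the Loewner order, $\mat 0 \preceq \mat S \preceq \mat \Sigma_\infty$ gives $\mat W^T \mat S \mat W \preceq \mat W^T \mat \Sigma_\infty \mat W$, and for symmetric positive-semidefinite matrices the operator norm (the largest eigenvalue) is order-monotone, so $\|\mat W^T \mat S \mat W\| \le \|\mat W^T \mat \Sigma_\infty \mat W\|$ uniformly in $\vec x, \vec z$. Combined with the identity from the first step and the Cauchy--Schwarz inequality, this delivers $\|\boldsymbol{\nabla} R_{\param}(\vec x) - \boldsymbol{\nabla} R_{\param}(\vec z)\| \le \|\mat W^T \mat \Sigma_\infty \mat W\|\,\|\vec x - \vec z\|$, which is exactly \eqref{eq:bound_imp}. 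The factorization $\|\mat W^T \mat \Sigma_\infty \mat W\| = \|\sqrt{\mat \Sigma_\infty}\mat W\|^2$ then follows at once: since $\sqrt{\mat \Sigma_\infty}$ is diagonal, symmetric, and positive semidefinite, we have $\mat W^T \mat \Sigma_\infty \mat W = (\sqrt{\mat \Sigma_\infty}\mat W)^T(\sqrt{\mat \Sigma_\infty}\mat W)$, whose norm is the square of the norm of $\sqrt{\mat \Sigma_\infty}\mat W$ by the standard identity $\|\mat A^T \mat A\| = \|\mat A\|^2$.

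Finally, to show that \eqref{eq:bound_imp} refines \eqref{eq:loosebound}, I would apply submultiplicativity of the operator norm to $\mat W^T \mat \Sigma_\infty \mat W$, obtaining $\|\mat W^T \mat \Sigma_\infty \mat W\| \le \|\mat W^T\|\,\|\mat \Sigma_\infty\|\,\|\mat W\| = \|\mat \Sigma_\infty\|\,\|\mat W\|^2 = L_{\boldsymbol{\sigma}}\|\mat W\|^2$, using that $\mat \Sigma_\infty$ is diagonal so $\|\mat \Sigma_\infty\| = \max_i \|\sigma_i'\|_\infty = L_{\boldsymbol{\sigma}}$. I do not anticipate any serious obstacle; the one point requiring care is the non-smoothness of the activations, which is precisely why I favor the secant-slope representation over a Jacobian-supremum argument, and the only inequality a reader might want spelled out is the order-monotonicity of the norm on positive-semidefinite matrices, which is immediate from the Rayleigh-quotient characterization of the largest eigenvalue.
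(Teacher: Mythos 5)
Your proof is correct, and it rests on the same central mechanism as the paper's: write the relevant increment of $\boldsymbol{\nabla}R_{\param}$ as $\mat W^T \mat D \mat W$ for a diagonal $\mat D$ with $\mat 0 \preceq \mat D \preceq \mat \Sigma_\infty$, conjugate to get $\mat W^T \mat D \mat W \preceq \mat W^T \mat \Sigma_\infty \mat W$, and invoke monotonicity of the largest eigenvalue under the Loewner order (the paper cites Courant--Fischer for this last step). The one genuine difference is where the diagonal matrix comes from: the paper takes $\mat D = \mat \Sigma(\mat W\vec x)$ to be the diagonal of derivatives appearing in the Hessian $\mat H_{R_{\param}}(\vec x) = \mat W^T\mat\Sigma(\mat W\vec x)\mat W$ and bounds $L_\theta$ by $\sup_{\vec x}\Vert \mat H_{R_{\param}}(\vec x)\Vert$, whereas you take $\mat D = \mat S$ to be the diagonal of secant slopes between two arbitrary points. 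Your variant is slightly more careful: the hypothesis $\sigma_i \in C^{0,1}_\uparrow(\R)$ only guarantees Lipschitz continuity, so the Hessian need not exist everywhere, and the paper's argument implicitly relies on a.e.\ differentiability plus a mean-value step that your divided-difference identity makes explicit and unconditional. Both routes give the identical final bounds, including the factorization $\Vert\mat W^T\mat\Sigma_\infty\mat W\Vert = \Vert\sqrt{\mat\Sigma_\infty}\mat W\Vert^2$ and the submultiplicativity comparison with $L_{\boldsymbol{\sigma}}\Vert\mat W\Vert^2$; the only cosmetic quibble is that the inequality you attribute to Cauchy--Schwarz is really just the definition of the operator norm.
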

\begin{proof}
The bound \eqref{eq:loosebound} is a standard result for compositional models. 
Next, we note that the Hessian of $R_{\param}$ reads
\begin{equation}
    \mat H_{R_{\param}}(\vec x) = \mat W^T \mat \Sigma(\mat W \vec x) \mat W,
\end{equation}
where $\mat \Sigma(\vec z) = \mathbf{diag}(\sigma'_1(z_1), \ldots, \sigma'_p(z_p))$.
Further, it holds that $L_{\theta} \leq \sup_{\vec x \in \R^d} \Vert \mat H_{R_{\param}}(\vec x) \Vert$.
Since the functions $\sigma_i$ are increasing, we have for every $\vec x\in\R^p$ that $\mat \Sigma_\infty - \mat \Sigma(\mat W\vec x)\succeq 0$
and, consequently,
\begin{equation}
    \mat W^T \bigl(\mat \Sigma_\infty - \mat \Sigma(\mat W\vec x)\bigr) \mat W \succeq 0.
\end{equation}
Using the Courant-Fischer theorem, we now infer that the largest eigenvalue of $\mat W^T \mat \Sigma_\infty \mat W$ is greater than that of $\mat W^T \mat \Sigma(\mat W \vec x) \mat W$.
\end{proof}

The bounds \eqref{eq:bound_imp} and \eqref{eq:loosebound}are in agreement when the activation functions are identical, which is typically not the case in our framework.
For the 14 NNs trained in Section~\ref{sc:experiments}, we found that the improved bound \eqref{eq:bound_imp} was on average 3.2 times smaller than \eqref{eq:loosebound}.
As \eqref{eq:bound_imp} depends on the parameters of the model, it is critical to embed the computation into the forward pass. Otherwise, the training gets unstable.
This is done by first estimating the normalized eigenvector $\mat u$ corresponding to the largest eigenvalue of $\mat W^T \mat \Sigma_{\infty} \mat W$ via the power-iteration method in a non-differentiable way, for instance under the \texttt{torch.no\_grad()} context-manager.
Then, we directly plug the estimate $L_{\theta}\simeq \|\mat W^T \mat \Sigma_{\infty} \mat W \mat u\|$ in our model and hence embed it in the forward pass.
This approach is inspired by the spectral-normalization technique proposed in \cite{MKKY2018}, which is a popular and efficient way to enforce Lipschitz constraints on fully connected linear layers.
Note that a similar simplification is also proposed and studied in the context of deep equilibrium models~\cite{fung2022jfb}. In practice, the estimate $\vec u$ is stored so that it can be used as a warm start for the next computation of $L_{\theta}$.
\subsection{From Gradients to Potentials}
To recover the regularizer $R$ from its gradient $\V \nabla R$, one has to determine the profiles $\psi_i$, which satisfy $\psi_i'=\sigma_{\boldsymbol{P}_{\uparrow}(\vec c^i)}$.
Hence, each $\psi_i$ is a piecewise polynomial of degree 2 with continuous derivatives, i.e.\ a spline of degree two.
These can be expressed as a weighted sum of shifts of the rescaled causal B-spline of degree $2$ \cite{unser1999splines}, more precisely as
\begin{equation}
\label{eq:profilesplines}
    \psi_i = \sum_{k\in\mathbb{Z}} d_k^i \beta_+^2\left(\frac{\cdot - k}{\Delta}\right).
\end{equation}
To determine the coefficients $(d_k^i)_{k\in\mathbb{Z}}$, we use the fact that $(\beta_+^2)'(k) = (\delta_{1,k} - \delta_{2,k})$, where $\delta$ is the Kronecker delta, see \cite{unser1999splines} for details.
Hence, we obtain that $d_k^i - d_{k-1}^i = (\boldsymbol{P}_{\uparrow}(\vec c^i))_k$, which defines $(d_k^i)_{k\in\mathbb{Z}}$ up to a constant.
This constant can be set arbitrarily as it does not affect $\V \nabla R$.
Due to the finite support of $\beta_+^2$, one can efficiently evaluate $\psi_i$ and then $R$.
\subsection{Boosting the Universality of the Regularizer}
\label{subsec:scalingfactor}
The learnt $R_{\param}$ depends on the training task (denoising) and on the noise level.
To solve a generic inverse problem, in addition to the regularization strength $\lambda$, we propose to incorporate a tunable scaling parameter $\mu\in\mathbb{R}^+$ and to compute
\begin{equation}
\label{eq:scalingfactor}
    \argmin \limits_{\vec x\in\mathbb{R}^d}\frac{1}{2}\|\mat H\vec x -\vec y\|_2^2 + \lambda/\mu R_{\param}(\mu \vec x).
\end{equation}
While the scaling parameter is irrelevant for homogeneous regularizers such as the Tikhonov and TV, it is known to boost the performance within the PnP framework when applied to the input of the denoiser \cite{xu2020boosting}.
During the training of $t$-step denoisers, we also learn a scaling parameter $\mu$ by letting the gradient step NN \eqref{eq:fixedopintoperator2} become
\begin{align}
    \boldsymbol{T}_{R_{\param},\lambda,\mu,\alpha}(\vec x) &= \vec x -\alpha\bigl((\vec x - \vec y) + \lambda \boldsymbol{\nabla} R_{\param}(\mu \vec x)\bigr),
\end{align}
with now $\alpha < 2/(1 + \lambda \mu \mathrm{Lip}(\V \nabla R_{\param}))$.
\subsection{Reconstruction Algorithm}
The objective in~\eqref{eq:scalingfactor} is smooth with Lipschitz-continuous gradients.
Hence, a reconstruction can be computed through gradient-based methods.
We found the fast iterative shrinkage-thresholding algorithm (FISTA, Algorithm \ref{alg:FISTA}) to be well-suited to the problem while it also allows us to enforce the positivity of the reconstruction. Other efficient algorithms for CRR-NNs include the adaptive gradient descent (AdGD)~\cite{malitsky2020a} and its proximal extension \cite{latafat2023adaptive}; both benefit from a stepsize based on an estimate of the local Lipschitz constant of $\V \nabla R$ instead of a more conservative global one.
\begin{algorithm}[t]
	\begin{algorithmic}
	    \State \textbf{Input:} $\vec x_0\in\R^d$, $\vec y\in\R^m$, $\lambda \geq 0$, $\mu > 0$
		\State Set $k=0$, $\vec z_0=\vec x_0$, $\alpha = 1/(\mu \lambda \mathrm{Lip}(\V\nabla R) + \|\mat H\|^2)$, $t_0=1$
   
		\While{tolerance not reached}
            \State $\vec x_{k+1} = (\vec z_{k} - \alpha (\mat H^T (\mat H \vec z_k - \vec y) + \lambda\V\nabla R(\mu \vec z_k)))_+$
            \State $t_{k+1} = (1 + \sqrt{4 t_k^2 + 1})/2$
		\State $\vec z_{k+1} = \vec x_{k+1} + \frac{t_k - 1}{t_{k+1}}(\vec x_{k+1} - \vec x_{k})$
            \State $k \gets k + 1$
        \EndWhile
		\State \textbf{Output:} $\vec x_{k}$
		\caption{FISTA \cite{beck2009fast} to solve \eqref{eq:scalingfactor}}
		\label{alg:FISTA}
	\end{algorithmic}
\end{algorithm}
\section{Connections to Deep-Learning Approaches}
\label{sec:deepframeworks}
Our proposed CRR-NNs have a single nonlinear layer, which is rather unusual in an the era of deep learning.
To further explore their theoretical properties, we briefly discuss two successful deep-learning methods, namely, the PnP and the explicit design of convex regularizers, and state their most stable and interpretable versions.
This  will clarify the notions of strict convergence, interpretability, and universality.
All the established comparisons are synthesized in Table~\ref{tab:PropReg}.
\begin{table}[t]
\centering
\caption{Properties of different regularization frameworks.\label{tab:PropReg}}
\setlength\tabcolsep{5.5pt}
\begin{tabular}{llllll}

\toprule
         & Explicit & Provably & Universal & Shallow & Smooth\\ 
        & cost & convergent &  &  & reg.\\ 
        
\midrule
TV       & \cmarkc   & \cmarkc    & \cmarkc    & \cmarkc & \xmarkc          \\ 
ACR      & \cmarkc   & \cmarkc    & \xmarkc   & \xmarkc   & \xmarkc         \\
DnICNN & \cmarkc   & \cmarkc    & \cmarkc    & \xmarkc  & \cmarkc        \\
PnP-{$\beta$}CNN\!\! & \xmarkc   & \cmarkc    & \cmarkc    & \xmarkc  & \color{red}{-}         \\
PnP-DnCNN\!\! & \xmarkc   & \xmarkc    & \cmarkc    & \xmarkc  & \color{red}{-}         \\
CRR-NN     & \cmarkc   & \cmarkc    & \cmarkc    & \cmarkc   & \cmarkc           \\

\bottomrule
\\
\end{tabular}
\end{table}
\subsection{Plug-and-Play and Averaged Denoisers}
\paragraph{Convergent Plug-and-Play}

The training procedure proposed for CRR-NNs leads to a convex regularizer $R_{\param}$, whose proximal operator \eqref{eq:denoise} is a good denoiser.
Conversely, the proximal operator can be replaced by a powerful denoiser $\V D$ in proximal algorithms, which is referred to as PnP.
In the PnP-FBS  algorithm derived from \eqref{eq:VarProb} \cite{ComWaj2005,beck2009fast}, the reconstruction is carried out iteratively via
\begin{equation}
    \label{eq:PnPFBS}
    \vec x_{k+1} = \V D\bigl(\vec x_{k} - \alpha \mat H^T (\mat H \vec x_k - \vec y)\bigr),
\end{equation}
where $\alpha$ is the stepsize and $\V D\colon \mathbb{R}^d\rightarrow\mathbb{R}^d$ is a generic denoiser.
A standard set of sufficient conditions\footnote{Here, $\mat H$ can be noninvertible; otherwise, weaker conditions exist \cite{ryu2019plug}.} to guarantee convergence of the iterations \eqref{eq:PnPFBS} is that
\begin{enumerate}[label=(\roman*)]
    \item \label{it:prop2}  $\V D$ is averaged, namely $\V D=\beta \V N + (1-\beta)\mathbf{Id}$ where $\beta\in (0,1)$ and $\V N\colon \mathbb{R}^n\rightarrow\mathbb{R}^n$ is a nonexpansive mapping;
    \item \label{it:prop1} $\alpha \in [0,2/\Vert \mat H \Vert^2)$;
    \item the update operator in \eqref{eq:PnPFBS} has a fixed point.
\end{enumerate}

In general, Condition (i) is not sufficient to ensure that $\V D$ is the proximal operator of some convex regularizer $R$.
Hence, its interpretability is still limited.
Further, Condition (ii) implies that $\vec x \mapsto \left( \vec x - \alpha \mat H^T (\mat H \vec x - \vec y) \right)$ is averaged.
Hence, as averagedness is preserved through composition, the iterates are updated by the application of an averaged operator (see \cite{HNS2021} for details).
With Condition (iii), the convergence of the iterations \eqref{eq:PnPFBS} follows from Opial's convergence theorem. Beyond convergence, it is known that averaged denoisers with $\beta\leq 1/2$ yield a stable reconstruction map in the measurement domain \cite{ducotterd2022improving}, in the same sense as given in Proposition \ref{pr:stability} for CRR-NNs.

The nonexpansiveness of $\V D$ is also commonly assumed for proving the convergence
of other PnP schemes.
This includes, for instance, gradient-based PnP \cite{ABM2023}.
There, the gradient $\boldsymbol{\nabla}R$ of the regularizer used in reconstruction algorithms is replaced with a learned monotone operator $\V F = \mat I -\V D$.
The operator $\V D$ can be interpreted as a denoiser and is assumed to be nonexpansive to prove convergence.
\paragraph{Constraint vs Performance}
As discussed in \cite{chan2016plug,HurLec2022}, the performance of the denoiser $\V D$ is in direct competition with its averagedness.
 A simple illustration of this issue is provided in Figure~\ref{fig:denoisevslip}.
Unsurprisingly, Condition~(i) is not met by any learnt state-of-the-art denoiser, and it is usually also relaxed in the PnP literature.

\begin{figure}[t]
    \centering
    \includegraphics[width=0.2\textwidth]{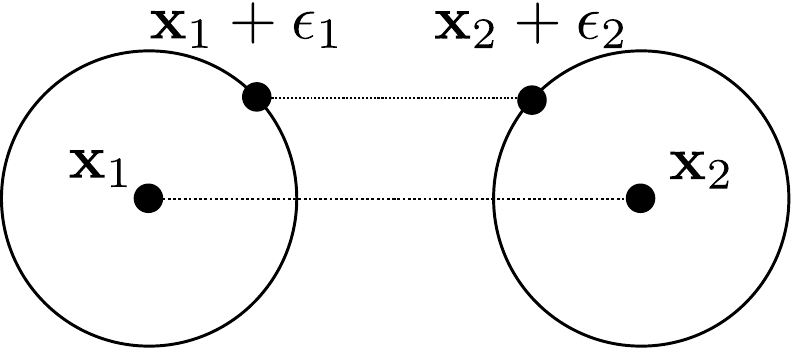}
    \caption{The distance between the two noisy images $(\vec x_1+\vec \epsilon_1)$ and $(\vec x_2 + \vec \epsilon_2)$ can be smaller than that between their clean versions $\vec x_1$ and $\vec x_2$. This limits the performance of a nonexpansive denoiser $\V D$ since $\|\V D(\vec x_1+\vec \epsilon_1) - \V D(\vec x_2+\vec \epsilon_2)\|\leq \|\vec x_1+\vec \epsilon_1 - (\vec x_2+\vec \epsilon_2)\|< \|\vec x_1 - \vec x_2\|$ in the scenario depicted.}
    \label{fig:denoisevslip}
\end{figure}
 For instance, it is common to use non-$1$-Lipschitz learning modules, such as batch normalization \cite{ryu2019plug}, or to only constrain the residual $(\mathbf{Id}-\V D)$ to be nonexpansive, which enables one to train a nonexpansive NN in a residual way \cite{ryu2019plug, LAWK2021, HNS2021}, with the caveat that $\mathrm{Lip}(\V D)$ can be as large as 2.
 Another recent approach consists of penalizing during training either the norm of the Jacobian of $\V D$ at a finite set of locations~\cite{PRTW2021,HurLec2022} or of another local estimate of the Lipschitz constant~\cite{AM2022, ABM2023}.
 Interestingly, even slightly relaxed frameworks usually yield significant improvements in the reconstruction quality.
 However, they do not provide convergence guarantees for ill-posed inverse problems, which is problematic for sensitive applications such as biomedical imaging.

\paragraph{Averaged Deep NNs}
To leverage the success of deep learning, $\V N$ is typically chosen as a deep CNN of the form\footnote{The benefit of standard skip connections combined with the preservation of the nonexpansiveness of the NN is unclear.}
\begin{equation}\label{eq:ArchitectureCNN}
     \V N = \V C_K\circ \V \sigma\circ\cdots\circ \V C_2\circ \V \sigma\circ \V C_1,
\end{equation}
where $\V C_k$ are learnable convolutional layers and $\V \sigma$ is the activation function \cite{MKKY2018,ryu2019plug,bohra2021learning}.
To meet Condition (i), $\V N$ must be nonexpansive, which one usually achieves by constraining $\V C_k$ and $\V \sigma$ to be nonexpansive.
This is predicated on the sub-multiplicativity of the Lipschitz constant with respect to composition; as in $\mathrm{Lip}(\V f\circ \V g)\leq \mathrm{Lip}(\V f)\mathrm{Lip}(\V g)$.
Unfortunately, this bound is not sharp and may grossly overestimate $\mathrm{Lip}(\V f\circ \V g)$.
For deep models, this overestimation aggravates since the bound is used sequentially.
Therefore, for averaged NNs, the benefit of depth is unclear because the gain of expressivity brought by the many layers is reduced by a potentially very pessimistic Lipschitz-constant estimate.
Put differently, these CNNs can easily learn the zero function while they struggle to generate mappings with a Lipschitz constant close to one.
For the same reason, the learning process is also prone to vanishing gradients in this constrained setting. Under Lipschitz constraints, the zero-gradient region of the popular ReLU activation function causes provable limitations \cite{HCC2018, ALG19, NGBU2022}.
Some of these can be resolved by the use of PReLU activation functions instead.

In this work, CRR-NNs are compared against two variants of PnP.
\begin{itemize}
    \item {\bf PnP-DnCNN} corresponds to the popular implementation given in \cite{ryu2019plug}. The denoiser is a DnCNN with 1-Lipschitz linear layers (the constraints are therefore enforced on the residual map only) and unconstrained batch-normalization modules. Hence this method has no convergence and stability guarantees, especially for ill-posed inverse problems.
    \item {\bf PnP-$\beta$CNN} corresponds to PnP equipped with a provably averaged denoiser. This method comes with similar guarantees as CRR-NNs but less interpretability. It is included to convey the message that the standard way of enforcing Lipschitz constraints affects expressivity as reported for instance in \cite{chand2023multiscale}, and even makes it hard to improve upon TV. With that in mind, CRR-NNs provide a way to overcome this limitation.
\end{itemize}
\paragraph{Construction of Averaged Denoisers from CRR-NNs} The training of CRR-NNs offers two ways to build averaged denoisers.
Since proximal operators are half-averaged, we directly get that the proximal denoiser~\eqref{eq:denoise} is an averaged operator.
For the $t$-step denoiser, the following holds.
\begin{proposition}
\label{pr:averagedoperator}
The $t$-step denoiser~\eqref{eq:tstepdf} is averaged for $\alpha\in[0,2/(2+\lambda L_{\param})]$ with $L_{\param} = \Lip(\boldsymbol{\nabla} R_{\param})$.
\end{proposition}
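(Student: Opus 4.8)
The plan is to reduce the statement to two standard facts about averaged operators: first, that a single gradient-descent step on a convex function with Lipschitz-continuous gradient is averaged once the stepsize is small enough; and second, that a composition of finitely many averaged operators is again averaged. Since the $t$-step denoiser is the $t$-fold composition of the single step $\boldsymbol{T}_{R_{\param},\lambda,\alpha}$, applying the second fact to $t$ identical copies of an averaged single step yields the claim. Thus the whole argument hinges on the single-step case, which I would establish first.

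To analyze the single step, I would write $\boldsymbol{T}_{R_{\param},\lambda,\alpha} = \mathbf{Id} - \alpha \boldsymbol{\nabla} F$, where $F(\vec x) = \tfrac12\|\vec x - \vec y\|_2^2 + \lambda R_{\param}(\vec x)$ is convex with a Lipschitz-continuous gradient of constant $1 + \lambda L_{\param}$ (the data term contributes $1$ and $\lambda \boldsymbol{\nabla} R_{\param}$ contributes $\lambda L_{\param} = \lambda\Lip(\boldsymbol{\nabla} R_{\param})$). I would then use the characterization that $\boldsymbol{T}$ is $\beta$-averaged, for some $\beta \in (0,1)$, if and only if for all $\vec x, \vec z$
\[
\|\boldsymbol{T}\vec x - \boldsymbol{T}\vec z\|_2^2 \le \|\vec x - \vec z\|_2^2 - \tfrac{1-\beta}{\beta}\,\alpha^2\,\|\boldsymbol{\nabla} F(\vec x) - \boldsymbol{\nabla} F(\vec z)\|_2^2 .
\]
Expanding both sides in terms of $\vec u = \vec x - \vec z$ and $\vec g = \boldsymbol{\nabla} R_{\param}(\vec x) - \boldsymbol{\nabla} R_{\param}(\vec z)$ turns this into a quadratic inequality in the three quantities $\|\vec u\|_2^2$, $\langle \vec g, \vec u\rangle$, and $\|\vec g\|_2^2$. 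The structural facts I would feed in are that $\boldsymbol{\nabla} R_{\param}$ is monotone, so $\langle \vec g, \vec u\rangle \ge 0$, and cocoercive (Baillon--Haddad, since each $\psi_i$ is convex and $L_{\param}$-smooth), so $\|\vec g\|_2^2 \le L_{\param}\langle \vec g, \vec u\rangle$. Substituting the cocoercivity estimate into the $\|\vec g\|_2^2$ term and then discarding the remaining nonnegative multiple of $\langle \vec g, \vec u\rangle$ reduces the inequality to a scalar condition in $\alpha$ and $\beta$; choosing $\beta$ appropriately, a sufficient form of this condition holds as soon as $\alpha \le 2/(2+\lambda L_{\param})$, which is exactly the stated threshold.

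For the composition step, I would invoke the standard result that if $\boldsymbol{T}_1$ is $\beta_1$-averaged and $\boldsymbol{T}_2$ is $\beta_2$-averaged with $\beta_1,\beta_2 \in (0,1)$, then $\boldsymbol{T}_2 \circ \boldsymbol{T}_1$ is averaged, the composite parameter remaining strictly below $1$ because $(1-\beta_1)(1-\beta_2) > 0$. A straightforward induction on $t$ then shows that $\boldsymbol{T}^t_{R_{\param},\lambda,\alpha}$ is averaged. The endpoint $\alpha = 0$ is trivial, since $\boldsymbol{T}_{R_{\param},\lambda,\alpha} = \mathbf{Id}$; and the upper endpoint $\alpha = 2/(2+\lambda L_{\param})$ is safely included because it lies strictly inside the genuine averagedness range $\alpha < 2/(1+\lambda L_{\param})$ that the tight Baillon--Haddad estimate would furnish for the single step, so the single step is still averaged there.

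I expect the main obstacle to be the single-step averagedness, and within it the bookkeeping of the quadratic inequality: one must choose carefully which of monotonicity, the Lipschitz bound, and the full cocoercivity bound to apply, and pair them with a suitable $\beta$, so that the clean constant $2/(2+\lambda L_{\param})$ drops out rather than a looser or messier expression. Once the single-step case is pinned down, the composition and the endpoint verifications are routine.
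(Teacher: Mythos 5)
There is a genuine gap: you are proving averagedness of the wrong map. The $t$-step denoiser is the map $\vec y \mapsto \vec x_t$, where $\vec x_0 = \vec y$ and $\vec x_{k+1} = \boldsymbol{T}(\vec x_k,\vec y)$; the noisy input $\vec y$ is re-injected at \emph{every} iteration through the data-fidelity gradient $(\vec x_k - \vec y)$. Consequently, the denoiser is not the $t$-fold composition of a single, fixed averaged operator, and the lemma ``a composition of averaged operators is averaged'' does not apply to it. What your composition argument establishes is that, for each \emph{fixed} $\vec y$, the map $\vec x_0 \mapsto \vec x_t$ is averaged; this is the standard fact underlying the convergence of the inner gradient descent, and it says nothing about the Lipschitz or averagedness properties of the denoiser as a function of its input $\vec y$. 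Your single-step cocoercivity analysis targets that same fixed-$\vec y$ operator, whose natural threshold is $\alpha < 2/(1+\lambda L_{\param})$ (as you yourself note at the end); the stated constant $2/(2+\lambda L_{\param})$ does not come out of that computation.

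The paper instead works with the recursion in $\vec y$: writing $\boldsymbol{L}_k\colon \vec y \mapsto \vec x_k$, one has $\boldsymbol{L}_{k+1} = \boldsymbol{U}\circ\boldsymbol{L}_k + \alpha\,\mathbf{Id}$ with $\boldsymbol{U} = \mathbf{Id} - \alpha(\mathbf{Id}+\lambda\boldsymbol{\nabla}R_{\param})$. The spectral bounds $\bigl((1-\alpha)-\alpha\lambda L_{\param}\bigr)\mat I \preceq \mat J_{\boldsymbol{U}} \preceq (1-\alpha)\mat I$ give $\Lip(\boldsymbol{U}) \le 1-\alpha$ exactly when $\alpha \le 2/(2+\lambda L_{\param})$---this is where the stated constant originates---and then $\boldsymbol{L}_{k+1} = (1-\alpha)\bigl(\tfrac{1}{1-\alpha}\boldsymbol{U}\circ\boldsymbol{L}_k\bigr) + \alpha\,\mathbf{Id}$ is $(1-\alpha)$-averaged by induction, since $\boldsymbol{U}\circ\boldsymbol{L}_k$ is $(1-\alpha)$-Lipschitz whenever $\boldsymbol{L}_k$ is nonexpansive. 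To repair your argument you would need to replace the composition lemma by an induction of this type, one that explicitly tracks the $\alpha\,\mathbf{Id}$ skip connection carrying $\vec y$ into each layer.
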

\begin{proof}
The $t$-step denoiser is built from the gradient-step operator $\boldsymbol{T}_{R_{\param},\lambda,\alpha}$.
Here, we use the more explicit notation
\begin{equation}
    \boldsymbol{T}(\vec x, \vec y) = \vec x - \alpha((\vec x - \vec y) + \lambda \boldsymbol{\nabla}R_{\theta}(\vec x)).
\end{equation}
This makes explicit the dependence on $\vec y$ and, for simplicity, the dependence on $R_{\param}$, $\lambda$, and $\alpha$ are omitted.
It is known that $\boldsymbol{T}$ is averaged with respect to $\vec x$ for $\alpha\in(0,2/(1+\lambda L_{\param}))$.
This ensures convergence of gradient descent, but it does not characterize the denoiser itself.
The $t$-step denoiser depends on the initial value $\vec x_0=\vec y$ and is determined by the recurrence relation $\vec x_{k+1} = \boldsymbol{T}(\vec x_k, \vec y)$.
For the map $\boldsymbol{L}_k\colon \vec y \mapsto \vec x_{k}$, it holds that
$\boldsymbol{L}_{k+1} = \boldsymbol{U}\circ \boldsymbol{L}_k + \alpha \mathbf{Id}$,
where $\boldsymbol{U}=\mathbf{Id} - \alpha (\mathbf{Id} + \lambda\boldsymbol{\nabla}R_{\theta})$.
The Jacobian of $\boldsymbol{U}$ reads $\mathbf{J}_{\boldsymbol{U}}=\mathbf{I} - \alpha (\mathbf{I} + \lambda\mathbf{H}_{R_{\theta}})$ and satisfies that $((1-\alpha) - \alpha\lambda L_{\param}) \mat I \preceq \mathbf{J}_{\boldsymbol{U}}\preceq (1-\alpha)\mat I$.
From this, we infer that
\begin{equation}
    \mathrm{Lip}(\mathbf{U})\leq \max \bigl(\alpha\lambda L_{\param} - (1 - \alpha), 1-\alpha\bigr).
\end{equation}
Since $\alpha\leq 2/(2 + \lambda L_{\param})$, we then get that $\mathrm{Lip}(\mathbf{U})\leq (1-\alpha)$.
Hence, $\mathrm{Lip}(\mathbf{U}\circ \boldsymbol{L}_k)\leq (1-\alpha)\mathrm{Lip}(\boldsymbol{L}_{k})$.
Since $\boldsymbol{L}_{0}=\mathbf{Id}$ is averaged, the same holds by induction for all the $t$-step denoisers $\boldsymbol{L}_{t}$.
\end{proof}

Note that for $\alpha\in (2/(2+\lambda L_{\param}),2/(1+\lambda L_{\param}))$, the $1$-step denoiser is also averaged but, for $1<t<+\infty$, it remains an open question.
The structure of $t$-step and proximal denoisers differs radically from averaged CNNs as in~\eqref{eq:ArchitectureCNN}.
For instance, the $t$-step denoiser uses the noisy input $\vec y$ in each layer.
Remarkably, these skip connections preserve the averagedness of the mapping.
While constrained deep CNNs struggle to learn mappings that are not too contractive, both proximal and $t$-step denoisers can easily reproduce the identity by choosing $R_{\param}=0$.
This seems key to account for the fact that the proposed denoisers outperform averaged deep NNs, while they can be trained two orders of magnitude faster, see Section~\ref{sc:experiments}.
\subsection{Deep Convex Regularizers}
Another approach to leverage deep-learning-based priors with stability and convergence guarantees consists of learning a deep convex regularizer $R$.
These priors are typically parameterized with an ICNN, which is a NN with increasing and convex activation functions along with positive weights for some linear layers \cite{AXK2016}.
There exist various strategies to train the ICNN.

The adversarial convex regularizer (ACR) framework \cite{MukDit2021,MukSch2021} relies on the adversarial training proposed in \cite{lunz2018adversarial}.
The regularizer is learnt by minimizing its value on clean images and maximizing its value on
unregularized reconstructions.
This allows for learning non-smooth $R$ and also avoids bilevel optimization.
A key difference with CRR-NNs and PnP methods is that ACR is modality-depend (it is not universal).
In addition, with $R$ being non-smooth, it is challenging to exactly minimize the cost function, but the authors of \cite{MukDit2021,MukSch2021} did not find any practical issues in that matter using gradient-based solvers.
To boost the performance of $R$, they also added a sparsifying filter bank to the ICNN, namely, a convex term of the form $\|\mat U \vec x\|_1$, where the linear operator $\mat U$ is made of convolutions learnt conjointly with the ICNN.

In \cite{cohen2021has}, the regularizer is trained so that its gradient step is a good blind Gaussian denoiser. There, the authors use ELU activations in the ICNN\footnote{The authors also explore non-convex regularization but they offer no guarantees on computing the global minimum.} to obtain a smooth $R$.

The aforementioned ICNN-based frameworks \cite{MukDit2021, MukSch2021, cohen2021has} have major differences with CRR-NNs: (i) they typically require orders of magnitude more parameters; (ii) the computation of $\V \nabla R$, used to solve inverse problems, requires one to back-propagate through the deep CNN which is time-consuming; (iii) the role of each parameter is not interpretable because of the depth of the model (see Section \ref{subsec:underthehood}). As we shall see, CRR-NNs are much faster to train and tend to perform better (see Section \ref{sc:experiments}).

\section{Experiments}
\label{sc:experiments}
\subsection{Training of CRR-NNs}
\label{subsec:training}
The CRR-NNs are trained on a Gaussian-denoising task with noise levels $\sigma\in\{5/255, 25/255\}$.
The same procedure as in \cite{ZZCMZ2017,ryu2019plug} is used  to form 238,400 patches of size $(40\times 40)$ from 400 images of the BSD500 dataset \cite{arbelaez_contour_2011}.
For validation, the same 12 images as in \cite{ZZCMZ2017,ryu2019plug} are used.
The weights $\mathbf{W}$ in $R_{\param}$ are parameterized as the composition of two zero-padded convolutions with kernels of size $(7\times 7)$ and with $8$ and $32$ output channels, respectively.
This composition of two linear components, although not more expressive theoretically, facilitates the patch-based training of CRR-NNs.
For inference, the convolutional layer can then be transformed back to a single convolution.
Similar to \cite{CheRan2014}, the kernels of the convolutions are constrained to have zero mean.
Lastly, the linear splines have $M+1=21$ equally distant knots with $\Delta = 0.01$, and the sparsifying regularization parameter is $\eta=\num{2e-3}(255\sigma )$. We initially set $\vec c_i =\vec 0$.

The CRR-NNs are trained for 10 epochs with $t\in\{1,2,5,10,20,30,50\}$ gradient steps.
For this purpose, the $\ell_1$ loss is used for $\mathcal L$ along with the Adam optimizer with its default parameters $(\beta_1, \beta_2)=(0.9,0.999)$, and the batch size is set to $128$.
The learning rates are decayed with rate $0.75$ at each epoch and initially set to $0.05$ for the parameters $\lambda$ and $\mu$, to $\num{e-3}$ for $\mat W$, and to $\num{5e-5}$ for $\vec c_i$.

Recall that for a given $t$, the training yields two denoisers.
\begin{itemize}
    \item {\bf $\boldsymbol{t}$-Step Denoiser}: This corresponds to $\boldsymbol{T}_{R_{\param},\lambda,\alpha}^t$ and is the denoiser optimized during training.
    It is natural to compare it to properly constrained PnP methods based on averaged deep denoisers as in \cite{bohra2021learning, NC2022}, which in general also do not correspond to minimizing an energy.
    \item {\bf Proximal Denoiser}: The learnt regularizer $R_{\param}$ is plugged into \eqref{eq:scalingfactor} with $\mat H = \mat I$, and the solution is computed using Algorithm~\ref{alg:FISTA} with small tolerance ($\num{e-6}$ for the relative change of norm between consecutive iterates).
    The parameters $\lambda$ and $\mu$ are tuned on the validation dataset with the coarse-to-fine method given in Appendix~\ref{ap:tuning}.
    This important step enables us to compensate for the gap between (i) gradient-step training and full minimization, and (ii) training and testing noise levels, if different.
\end{itemize}

\subsection{Denoising: Comparison with Other Methods}
Although not the final goal, image denoising yields valuable insights into the training of CRR-NNs.
It also enables us to compare CRR-NNs to the related methods given in Table~\ref{table:denoising_performance} on the standard BSD68 test set.\\
\begin{table}
\caption{Convex models and averaged denoisers tested on BSD68.}
\label{table:denoising_performance}
\centering
\begin{tabular}{lcc}
\toprule
  & $\sigma=5/255$ & $\sigma=25/255$ \\
\midrule
TV*\textsuperscript{,$\ddagger$} \cite{chambolle2004algorithm} & 36.41 & 27.48 \\
Higher-order MRFs*\textsuperscript{,$\ddagger$} \cite{CheRan2014} & NA & 28.04\\
$\mathrm{VN}^{1,t}\strut{}^\dagger$\cite{KKHP2017} & NA & 27.69\\
$\beta\mathrm{CNN}_{\sigma}\strut{}^\ddagger$ & 36.48 & 27.69\\
$\mathrm{D}_{\mathrm{ISTA}}\strut{}^\ddagger$ \cite{NC2022} & 36.54 & NA \\
GS-DnICNN\textsuperscript{$\dagger$}\cite{cohen2021has} & 36.85 & 27.76 \\
$\mathrm{D}_{\mathrm{ADMM}}\strut{}^\ddagger$\cite{NC2022} & 36.62 & NA\\
CRR-NN-ReLU ($t$-step)\textsuperscript{$\dagger$,$\ddagger$} & 35.50 & 26.75\\
CRR-NN ($t$-step)\textsuperscript{$\dagger$,$\ddagger$} & {\bf 36.97} & {\bf 28.12}\\
CRR-NN (proximal)*\textsuperscript{,$\ddagger$} & {\underline{36.96}} & {\underline{28.11}}\\
\bottomrule
\multicolumn{3}{l}{\footnotesize * Full minimization of a convex function}\\
\multicolumn{3}{l}{\footnotesize \textsuperscript{$\dagger$} Partial minimization of a convex function}\\
\multicolumn{3}{l}{\footnotesize \textsuperscript{$\ddagger$} Stable steps (averaged layers)}\\

\end{tabular}
\end{table}
Now, we briefly give the implementation details of the various frameworks.
CRR-NN-ReLU models are trained in the same way as CRR-NNs, but with ReLU activation functions (with learnable biases) instead of linear splines.
To emulate \cite{cohen2021has}, we train a DnICNN with the same architecture (ELU activations, $6$ layers, and $128$ channels per layer, $\num{745344}$ parameters) as a gradient step denoiser for $200$ epochs, separately for $\sigma\in\{5/255,25/255\}$, and refer to it as GS-DnICNN.
An averaged deep CNN denoiser $\beta\mathrm{CNN}_{\sigma}=\beta \V N + (1 - \beta)\mathbf{Id}$, with $\beta=0.5$, is trained on the same denoising task as the CRR-NNs with $\sigma\in\{5/255,25/255\}$.
Here, $\V N$ is chosen as a CNN with 9 layers, 64 channels, and PReLU activation functions, resulting in $\num{260225}$ learnable parameters.
The model is trained for 20 epochs with a batch size of 4 and a learning rate of $4 \cdot 10^{-5}$. To guarantee that $\V N$ is nonexpansive, the linear layers are spectral-normalized after each gradient step with the real-SN method \cite{ryu2019plug}, and the activations are constrained to be $1$-Lipschitz.
This CNN outperforms the averaged CNNs in \cite{bohra2021learning}.
Hence, it serves as a baseline for averaged deep CNNs.
The other reported frameworks do not provide public implementations.
Therefore, the numbers are taken from the corresponding papers.
Lastly, the TV denoising is performed with the algorithm proposed in \cite{chambolle2004algorithm}.
The results for all models are presented in Table~\ref{table:denoising_performance} and Figure~\ref{fig:denoising}.
\begin{itemize}
    \item {\bf $\boldsymbol{t}$-Step/Averaged Denoisers}: The CRR-NN-ReLU models perform poorly and confirms that ReLU is not well-suited to our setting.
    This limitation of ReLU was also observed experimentally in \cite{bohra2021learning} in the context of 1-Lipschitz denoisers. Our models improve over the gradient-step denoisers parameterized with ICNNs, even though the latter has many more parameters.
    The CRR-NN implementation improves over the special instance $\text{VN}^{1,t}$ of variational-network denoisers proposed in \cite{KKHP2017}, which also partially minimizes a convex cost.
    With a convex model similar to CRR-NNs (see Section~\ref{sc:implementation} for a discussion), it is shown that an increase in $t$ decreases the performance (reported as $\text{VN}_{24}^{1,t}$ in \cite[Figure 5]{KKHP2017}).
    The model $\text{VN}^{1,t}$ cannot compete with the proximal denoiser trained with bilevel optimization in \cite{CheRan2014}.
    By contrast, for $\sigma=25/255$ we obtain an improvement over $\text{VN}^{1,t}$ of $0.2$dB for $t=1$, and more than $0.6$dB as $t$ increases.
    Note that, in \cite{KKHP2017}, the layers of the $t$-step $\text{VN}^{1,t}$ denoiser are not guaranteed to be averaged.
    Our models also outperform the averaged $\beta\mathrm{CNN}_{\sigma}$ ($+0.5$dB for $\sigma=5$, $+0.4$dB for $\sigma=25/255$), and the two averaged denoisers $\mathrm{D}_{\mathrm{ISTA}}$ and $\mathrm{D}_{\mathrm{ADMM}}$ \cite{NC2022} ($+0.4$/$+0.3$dB for $\sigma=5/255$).
    In their simplest form, the latter are built with fixed linear layers (patch-based wavelet transforms) and learnable soft-thresholding activation functions.
    \item {\bf Proximal Denoisers}: Our models yield slight improvements over the higher-order Markov random field (MRF) model in the pioneering work \cite{CheRan2014} ($28.04$dB vs $28.11$dB for $\sigma=25/255$).
    With a similar architecture---but with fixed smoothed absolute value $\psi_i$---the latter approach involves a computationally intensive bilevel optimization with second-order solvers.
    Here, we show that a few gradient steps for training already suffice to be competitive.
    This leads to ultrafast training and bridges the gap between higher-order MRF models and VN denoisers.
    Lastly, we remark that our proximal denoisers are robust to a mismatch in the training and testing noise levels.
\end{itemize}

\begin{figure}[t]
    \centering
    \includegraphics[width=0.48\textwidth]{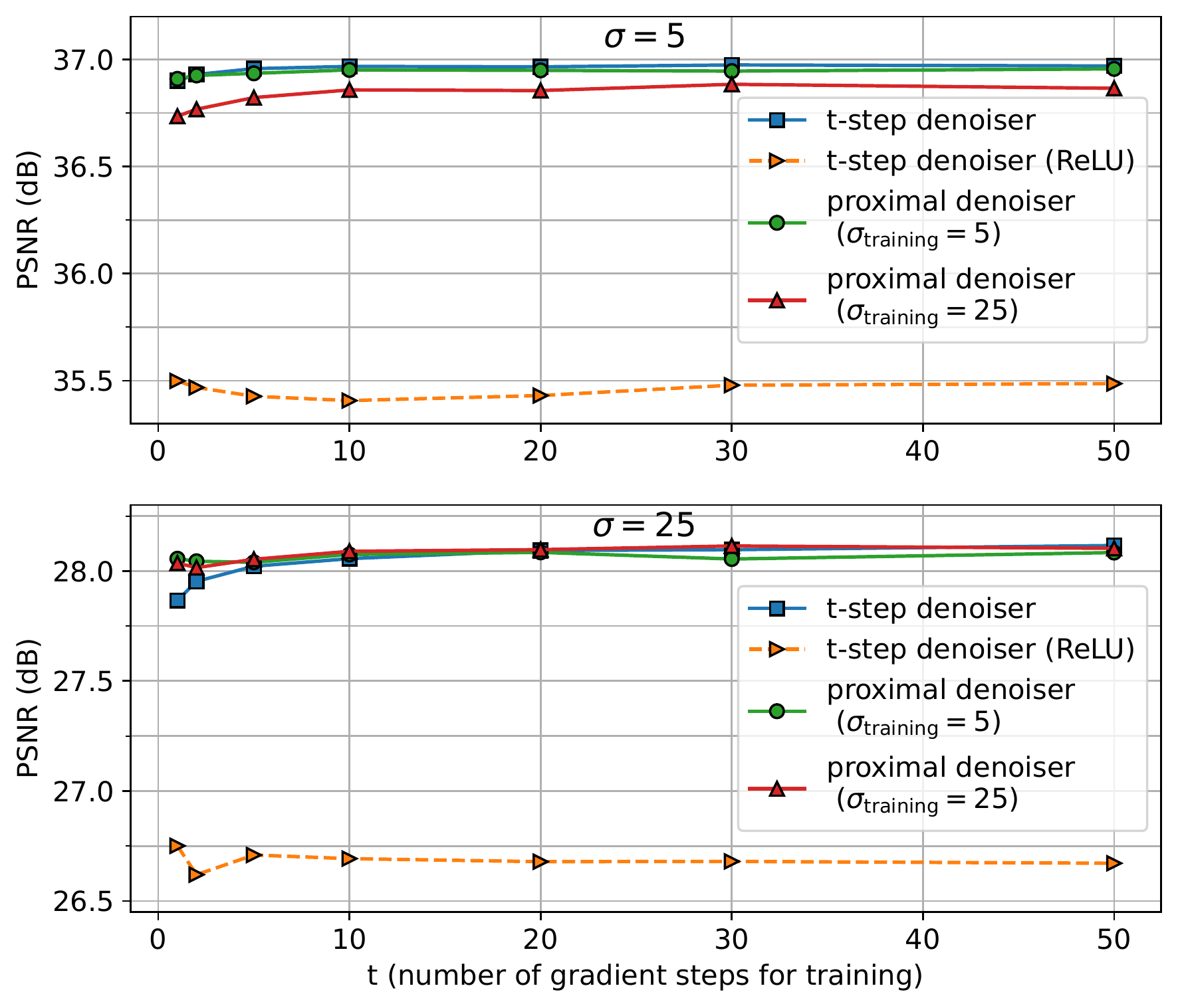}
    \caption{Test denoising performance of CRR-NNs for noise level $\sigma = 5/255$ and $\sigma = 25/255$ versus the number of gradient steps used for training, the denoiser type ($t$-step vs proximal), and the noise level used for training.}
    \label{fig:denoising}
\end{figure}

\subsection{Biomedical Image Reconstruction}
\label{subsec:applications}
The six CRR-NNs trained on denoising with $t\in\{1, 10, 50\}$ and $\sigma\in\{5/255, 25/255\}$ are now used to solve the following two ill-posed inverse problems.
\paragraph{MRI}
The ground-truth images for our MRI experiments are proton-density weighted knee MR images from the fastMRI dataset \cite{zbontar2018fastMRI} with fat suppression (PDFS) and without fat suppresion (PD). They are generated from the fully-sampled k-space data. For each of the two categories (PDFS and PD), we create validation and test sets consisting of 10 and 50 images, respectively, where every image is normalized to have a maximum value of one. To gauge the performance of CRR-NNs in various regimes, we experiment with single-coil and multi-coil setups with several acceleration factors. In the single-coil setup, we simulate the measurements by masking the Fourier transform of the ground-truth image. In the multi-coil case, we consider 15 coils, and the measurements are simulated by subsampling the Fourier transforms of the multiplication of the ground-truth images with 15 complex-valued sensitivity maps (these were estimated from the raw k-space data using the ESPIRiT algorithm \cite{Uecker2014-uv} available in the BART toolbox \cite{uecker2013software}). For both cases, the subsampling in the Fourier domain is performed with a Cartesian mask that is specified by two parameters: the acceleration $M_{\text{acc}}\in \{2, 4, 8\}$ and the center fraction $M_{\text{cf}}=0.32/M_{\text{acc}}$. A fraction of $M_{\text{cf}}$ columns in the center of the k-space (low frequencies) is kept, while columns in the other region of the k-space are uniformly sampled so that the expected proportion of selected columns is $ 1/M_{\text{acc}}$. In addition, Gaussian noise with standard deviation $\sigma_{\mathbf{n}} = \num{2e-3}$ is added to the real and imaginary parts of the measurements. The PSNR and SSIM values for each method are computed on the $(320\times 320)$ centered ROI.
\begin{table}
\centering
\caption{Single-coil MRI.}%

\label{table:reconstruction_performance_mri_single}
\setlength\tabcolsep{2pt}
\begin{tabular}{lcccccccc}
\toprule
& \multicolumn{4}{c}{2-fold} & \multicolumn{4}{c}{4-fold} \\
& \multicolumn{2}{c}{PSNR} & \multicolumn{2}{c}{SSIM} & \multicolumn{2}{c}{PSNR} & \multicolumn{2}{c}{SSIM}  \\
& PD & PDFS & PD & PDFS & PD & PDFS & PD & PDFS\\
\midrule
Zero-fill & 33.32 & 34.49 & 0.871 & 0.872 & 27.40 & 29.68 & 0.729 & 0.745  \\
TV & 39.22 & 37.73 & 0.947 & 0.917 & 32.44 & 32.67 & 0.833 & 0.781 \\
PnP-$\beta$CNN & 38.77 & 37.89 & 0.943 & 0.924 & 31.37 & 31.82 & 0.832 & 0.797   \\
CRR-NN & {\bf 40.95} & \underline{38.91} & {\bf 0.961} & \underline{0.934} & \underline{33.99} & \underline{33.75} & \underline{0.880} & \underline{0.831} \\
\midrule
PnP-DnCNN \cite{ryu2019plug} & \underline{40.52} & {\bf 39.02} & \underline{0.956} & {\bf 0.935} & {\bf 35.24} & {\bf 34.63} & {\bf 0.884} & {\bf 0.840} \\
\bottomrule
\end{tabular}
\end{table}

\begin{table}
\centering
\caption{CRR-NN: Single-coil MRI versus training setup.}%

\label{table:reconstruction_performance_mri_single_setup}
\setlength\tabcolsep{2pt}
\begin{tabular}{lrrcccccccc}
\toprule
& & & \multicolumn{4}{c}{2-fold} & \multicolumn{4}{c}{4-fold} \\
& & & \multicolumn{2}{c}{PSNR} & \multicolumn{2}{c}{SSIM} & \multicolumn{2}{c}{PSNR} & \multicolumn{2}{c}{SSIM}  \\
image & $\sigma_{\mathrm{train}}$ & t & PD & PDFS & PD & PDFS & PD & PDFS & PD & PDFS\\
\midrule
BSD & 5/255  & 1  & 40.55 & 38.71 & 0.959 & 0.932 & 33.32 & 33.37 & 0.866 & 0.819  \\
BSD & 5/255  & 10 & 40.52 & 38.69 & 0.959 & 0.932 & 33.30 & 33.36 & 0.865 & 0.817 \\
BSD & 5/255  & 50 & 40.50 & 38.67 & 0.958 & 0.931 & 33.29 & 33.32 & 0.865 & 0.816   \\
BSD & 25/255 & 1  & 40.75 & 38.84 & 0.960 & 0.934 & 33.62 & 33.60 & 0.875 & 0.828 \\
BSD & 25/255 & 10 & 40.78 & 38.81 & 0.960 & 0.933 & 33.63 & 33.59 & 0.875 & 0.826 \\
BSD & 25/255 & 50 & 40.71 & 38.77 & 0.960 & 0.932 & 33.57 & 33.54 & 0.872 & 0.824 \\
MRI & 5/255  & 10 & 40.95 & 38.91  & 0.961 & 0.934 & 33.99  & 33.75 & 0.880 & 0.831 \\
MRI & 25/255  & 10 & 40.61 & 38.73 & 0.959 & 0.932 & 33.93 & 33.71 & 0.878 & 0.830 \\
\bottomrule
\end{tabular}
\end{table}

\begin{table}
\centering
\caption{Multi-coil MRI.}%

\label{table:reconstruction_performance_mri_multicoil}
\setlength\tabcolsep{2pt}
\begin{tabular}{lcccccccc}
\toprule
& \multicolumn{4}{c}{4-fold} & \multicolumn{4}{c}{8-fold} \\
& \multicolumn{2}{c}{PSNR} & \multicolumn{2}{c}{SSIM} & \multicolumn{2}{c}{PSNR} & \multicolumn{2}{c}{SSIM}  \\
& PD & PDFS & PD & PDFS & PD & PDFS & PD & PDFS\\
\midrule
$\vec H^T \vec y$ & 27.71 & 29.94 & 0.751 & 0.759 & 23.80 & 27.19 & 0.648 & 0.681 \\
TV & 38.06 & 37.31 & 0.935 & 0.914 & 32.77 & 33.38 & 0.850 & 0.824 \\
PnP-$\beta$CNN & 37.88 & 37.48 & 0.934 & 0.919 & 32.52 & 33.30 & 0.849 & 0.832  \\
CRR-NN & \underline{39.54} & \underline{38.29} & {\bf 0.950} & \underline{0.927} & \underline{34.29} & \underline{34.50} & {\bf 0.881} & \underline{0.852} \\
\midrule

PnP-DnCNN \cite{ryu2019plug} & {\bf 39.55} & {\bf 38.52} & \underline{0.947} & {\bf 0.929} & {\bf 35.11} & {\bf 35.14} & {\bf 0.881} & {\bf 0.858}\\
\bottomrule
\end{tabular}
\end{table}

\begin{table}
\centering
\caption{CRR-NN: Multi-coil MRI versus training setup.}%

\label{table:reconstruction_performance_mri_multicoil_setup}
\setlength\tabcolsep{2pt}
\begin{tabular}{lrrcccccccc}
\toprule
& & & \multicolumn{4}{c}{4-fold} & \multicolumn{4}{c}{8-fold} \\
& & & \multicolumn{2}{c}{PSNR} & \multicolumn{2}{c}{SSIM} & \multicolumn{2}{c}{PSNR} & \multicolumn{2}{c}{SSIM}  \\
image & $\sigma_{\mathrm{train}}$ & t & PD & PDFS & PD & PDFS & PD & PDFS & PD & PDFS\\
\midrule
BSD & 5/255  & 1  & 39.15 & 38.09 & 0.947 & 0.925 & 33.82 & 34.22 & 0.873 & 0.846  \\
BSD & 5/255  & 10 & 39.14 & 38.08 & 0.946 & 0.925 & 33.82 & 34.20 & 0.873 & 0.845 \\
BSD & 5/255  & 50 & 39.14 & 38.05 & 0.946 & 0.924 & 33.78 & 34.16 & 0.872 & 0.844   \\
BSD & 25/255 & 1  & 39.34 & 38.21 & 0.948 & 0.926 & 34.02 & 34.35 & 0.876 & 0.849 \\
BSD & 25/255 & 10 & 39.33 & 38.19 & 0.948 & 0.926 & 34.01 & 34.34 & 0.876 & 0.848 \\
BSD & 25/255 & 50 & 39.29 & 38.15 & 0.948  & 0.926 & 33.96 & 34.29 & 0.876 & 0.847 \\
MRI & 5/255  & 10 & 39.54 & 38.29 & 0.950  & 0.927 & 34.29 & 34.50 & 0.881 & 0.852 \\
MRI & 25/255  & 10 & 39.33 & 38.14 & 0.947 & 0.925 & 34.22 & 34.40 & 0.878 & 0.849 \\
\bottomrule
\end{tabular}
\end{table}

\paragraph{CT}
To provide a fair comparison with the ACR method, we now target the CT experiment proposed in \cite{MukDit2021}.
The data consist of human abdominal CT scans for 10 patients provided by Mayo Clinic for the low-dose CT Grand Challenge \cite{M2016}.
The validation set consists of 6 images taken uniformly from the first patient of the training set from \cite{MukDit2021}.
We use the same test set as \cite{MukDit2021}, more precisely, 128 slices with size $(512\times 512)$ that correspond to one patient.
The projections of the data are simulated using a parallel-beam acquisition geometry with 200 angles and 400 detectors.
Lastly, Gaussian noise with standard deviation $\sigma_{\mathbf{n}}\in\{ 0.5, 1, 2\}$ is added to the measurements.
\begin{table}
\centering
\caption{CT.}%

\label{table:reconstruction_performance_ct}
\setlength\tabcolsep{4.5pt}
\begin{tabular}{lcccccc}
\toprule
& \multicolumn{2}{c}{$\sigma_{\vec n}$=0.5} & \multicolumn{2}{c}{$\sigma_{\vec n}$=1} & \multicolumn{2}{c}{$\sigma_{\vec n}$=2} \\
& PSNR & SSIM & PSNR & SSIM & PSNR & SSIM  \\
\midrule
FBP & 32.14 & 0.697 & 27.05 & 0.432 & 21.29 & 0.204 \\
TV & 36.38 & 0.936 & 34.11 &  0.906 & 31.57 & 0.863 \\
PnP-$\beta$CNN & 37.19 & 0.920 & 34.11 & 0.873 &  30.93 &  0.804 \\
ACR \cite{MukDit2021,MukSch2021} & 38.06 & \underline{0.943} & 35.12  &  0.911 &  32.17 & 0.868\\
CRR-NN & \bf{39.30} & \bf{0.947} & \underline{36.29}  &  \underline{0.916} &  \underline{33.16} & \underline{0.878}\\
\midrule
PnP-DnCNN \cite{ryu2019plug} & \underline{38.93} &  0.941 & \bf{36.49} & \bf{0.921} & \bf{33.52} & \bf{0.897} \\
\bottomrule
\end{tabular}
\end{table}

\begin{table}
\centering
\caption{CRR-NN: CT versus training setup.}%

\label{table:reconstruction_performance_ct_setup}
\setlength\tabcolsep{3pt}
\begin{tabular}{lrrccccccc}
\toprule
& & & \multicolumn{2}{c}{$\sigma_{\vec n}$=0.5} & \multicolumn{2}{c}{$\sigma_{\vec n}$=1} & \multicolumn{2}{c}{$\sigma_{\vec n}$=2} \\

image & $\sigma_{\mathrm{train}}$ & t & PSNR & SSIM & PSNR & SSIM & PSNR & SSIM \\
\midrule
BSD & 5/255  & 1  & 38.84 & 0.943 & 35.70 & 0.907 & 32.48 & 0.860\\ 
BSD & 5/255  & 10 & 38.90 & 0.943 & 35.73 & 0.908 & 32.49 & 0.860 \\ 
BSD & 5/255  & 50 & 38.82 & 0.940 & 35.64 & 0.904 &  32.47& 0.855 \\ 
BSD & 25/255 & 1  & 39.01 & 0.945 & 35.91 & 0.913 & 32.72 & 0.867 \\ 
BSD & 25/255 & 10 & 39.07 & 0.945 & 35.95 & 0.911 & 32.71 & 0.867 \\ 
BSD & 25/255 & 50 & 39.04 & 0.944 & 35.89 & 0.912 & 32.71 & 0.860 \\ 
CT & 5/255   & 10 & 39.30 & 0.947 & 36.29 & 0.916 & 33.15 & 0.873 \\ 
CT & 25/255  & 10 & 38.89 & 0.945 & 36.11 & 0.917 & 33.16 & 0.878 \\
\bottomrule
\end{tabular}
\end{table}

\begin{figure*}[t]
    \centering
    \includegraphics[width=0.83\textwidth]{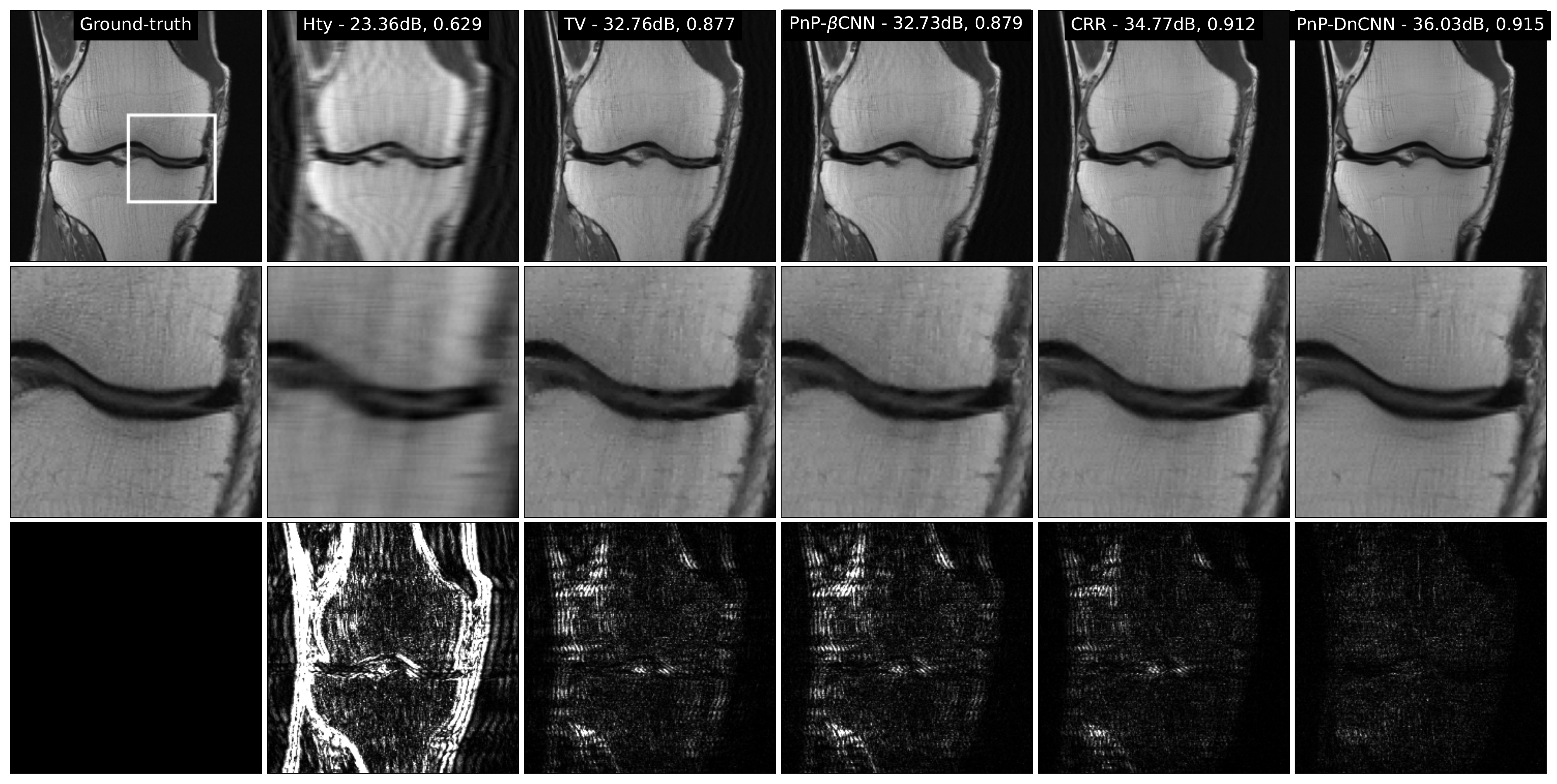}
    \caption{Reconstructed images for the 4-fold accelerated multi-coil MRI experiment. The reported metrics are PSNR and SSIM. The last row shows the squared differences between the reconstructions and the ground-truth image.}
    \label{fig:MRIreconstructions}
\end{figure*}

\begin{figure*}[t]
    \centering
    \includegraphics[width=\textwidth]{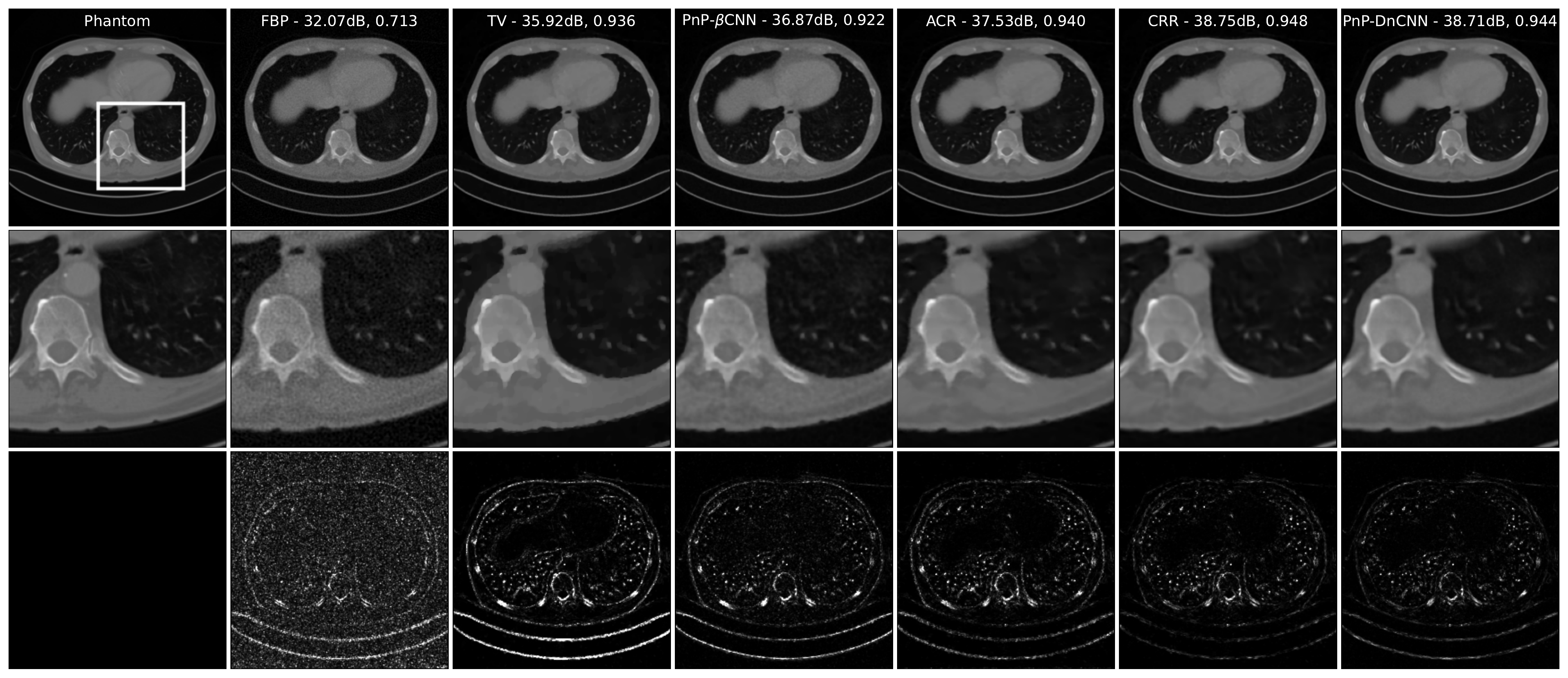}
    \caption{Reconstructed images for the CT experiment with $\sigma_{\vec n}=0.5$. The reported metrics are PSNR and SSIM. The last row shows the squared differences between the reconstructions and the ground-truth image.}
    \label{fig:CTreconstructions}
\end{figure*}

\paragraph{Reconstruction Frameworks}
A reconstruction with isotropic TV regularization is computed with FISTA \cite{beck2009fast}, in which $\prox_R$ is computed as in \cite{BT2009} to enforce positivity.
We also consider reconstructions obtained with the PnP method with (i) provably averaged denoisers $\beta\mathrm{CNN}_{\sigma}$ ($\sigma=5,25$); and (ii) the popular pertained DnCNNs \cite{ryu2019plug} ($\sigma=5, 15, 40$). The latter are residual denoisers with 1-Lipschitz convolutional layers and batch normalization modules, which yield a non-averaged denoiser with no convergence guarantees for ill-posed problems.
To adapt the strength of the denoisers, in addition to the training noise level, we use relaxed denoisers $\V D_\gamma=\gamma \V D + (1-\gamma)\mathbf{Id}$ for all denoisers $\V D$, where $\gamma\in(0,1]$ is tuned along with the stepsize $\alpha$ given in \eqref{eq:PnPFBS}.
We only report the performance of the best-performing setting.
The ACR framework \cite{MukDit2021,MukSch2021} yields a convex regularizer for \eqref{eq:VarProb} that is specifically designed to the described CT problem.
To be consistent with \cite{MukDit2021, MukSch2021}, we apply 400 iterations of gradient descent, even though the objective is nonsmooth, and tune the stepsize and $\lambda$.
The results are consistent with those reported in \cite{MukDit2021, MukSch2021}.

To assess the dependence of CRR-NNs on the image domain, we also train models for Gaussian denoising of CT and MRI images ($t=10$, $\sigma\in\{5/255, 25/255\}$). The training procedure is the same as for BSD image denoising, but a larger kernel size of 11 was required to saturate the performance. The learnt filters and activations are included in the Supplementary Material.

The hyperparameters for all these methods are tuned to maximize the average PSNR over the validation set with the coarse-to-fine method given in Appendix~\ref{ap:tuning}.
\paragraph{Results and Discussion}
For each modality, a reconstruction example is given for each framework in Figures \ref{fig:MRIreconstructions} and \ref{fig:CTreconstructions}, and additional illustrations are given in the Supplementary Material. The PSNR and SSIM values for the test set given in Tables \ref{table:reconstruction_performance_mri_single}, \ref{table:reconstruction_performance_mri_multicoil}, and \ref{table:reconstruction_performance_ct} attest that CRR-NNs consistently outperform the other frameworks with comparable guarantees. It can be seen from Tables \ref{table:reconstruction_performance_mri_single_setup}, \ref{table:reconstruction_performance_mri_multicoil_setup}, and \ref{table:reconstruction_performance_ct_setup} that the improvements hold for all setups explored to trained CRR-NNs. The training of CRR-NNs on the target image domain allows for an additional small performance boost. The performances of CRR-NNs are close to the ones of PnP-DnCNN, which has however no guarantees and little interpretability. PnP-DnCNN typically yields artifact-free reconstructions but is more prone to over-smoothing (Figure \ref{fig:MRIreconstructions}) or even to exaggeration of some details in rare cases (see Figures in the Supplementary Material). Lastly, observe that the properly constrained PnP-$\beta$CNN is not always competitive with TV. This confirms the difficulty of training provably 1-Lipchitz CNN, which is also reported for MRI image reconstruction in \cite{chand2023multiscale}. Convergence curves for CRR-NNs can be found in the Supplementary Material.

\subsection{Under the Hood of the Learnt Regularizers}
\label{subsec:underthehood}
The filters and activation functions for learnt CRR-NNs with $\sigma \in \{5/255,25/255\}$ and $t=5$ are shown in Figures~\ref{fig:filteractivationsig5} and~\ref{fig:filteractivationsig25}.
\subsubsection{Filters} The impulse responses of the filters vary in orientation and frequency response.
This indicates that the CRR-NN decouples the frequency components of patches.
The learnt kernels typically come in groups that are reminiscent of 2D steerable filters \cite{freeman1991design,UC2013}.
Interestingly, their support is wider when the denoising task is carried out for $\sigma=25/255$ than for $\sigma=5/255$.
\subsubsection{Activation Functions}
The linear splines converge to simple functions throughout the training.
The regularization~\eqref{eq:TrainProbReg} leads to even simpler ones without a compromise in performance.
Most of them end up with 3 linear regions, with their shape being reminiscent of the clipping function $\mathrm{Clip}(x)=\mathrm{sign}(x)\min(|x|,1)$.
The learnt regularizer is closely related to $\ell_1$-norm based regularization as many of the learnt convex profiles $\psi_i$ resemble some smoothed version of the absolute-value function.
\subsubsection{Pruning CRR-NNs}
Since the NN has a simple architecture, it can be efficiently pruned before inference by removal of the filters associated with almost-vanishing activation functions. This yields models with typically between $\num{3000}$ and $\num{5000}$ parameters and offers a clear advantage over deep models, which can usually not be pruned efficiently.
\subsubsection{A Signal-Processing Interpretation}Given that the gradient-step operator $\vec x \mapsto (\vec x - \alpha \mat W^T \boldsymbol{\sigma}(\mat W \vec x))$ of the learnt regularizer is expected to remove some noise from $\vec x$, the 1-hidden-layer CNN $\mat W^T \boldsymbol{\sigma}(\mat W \vec \cdot)$ is expected to extract noise.
The response of $\vec x$ to the learnt filters forms the high-dimensional representation $\mat W \vec x$ of $\vec x$.
The clipping function preserves the small responses to the filters, while it cuts the large ones.
Hence, the estimated noise $\mat W^T \boldsymbol{\sigma}(\mat W \vec x)$ is reconstructed by essentially removing the components of $\vec x$ that exhibit a significant correlation with the kernels of the filters.
All in all, the learning of the activation functions leads closely to wavelet- or framelet-like denoising.
Indeed, the proximal operator of $\vec{x}\mapsto \|\mathrm{DWT}(\vec x)\|_1$ is given by
\begin{align}
\mathrm{prox}_{\|\mathrm{DWT}(\vec \cdot)\|_1}(\vec x) &= \mathrm{IDWT}(\mathrm{soft}(\mathrm{DWT}(\vec x)))\notag\\
&= \vec x - \mathrm{IDWT}(\mathrm{clip}(\mathrm{DWT}(\vec x))),
\end{align}
where $\mathrm{soft}(\cdot)$ is the soft-thresholding function, $\mathrm{DWT}$ and $\mathrm{IDWT}$ are the orthogonal discrete wavelet transform and its inverse, respectively. The equivalent formulation with the clipping function follows from $\mathrm{IDWT}(\mathrm{DWT}(\vec x)) =\vec x$ and $\mathrm{soft}(\vec x) = (\vec x - \mathrm{clip}(\vec x))$. The soft-thresholding function is used for direct denoising while the clipping function is tailored to residual denoising.
Note that the given analogy is, however, limited since the learnt filters are not orthonormal ($\mat W^T \mat W \neq \mat I$).
\begin{figure}[t]
    \centering
    \includegraphics[width=0.48\textwidth]{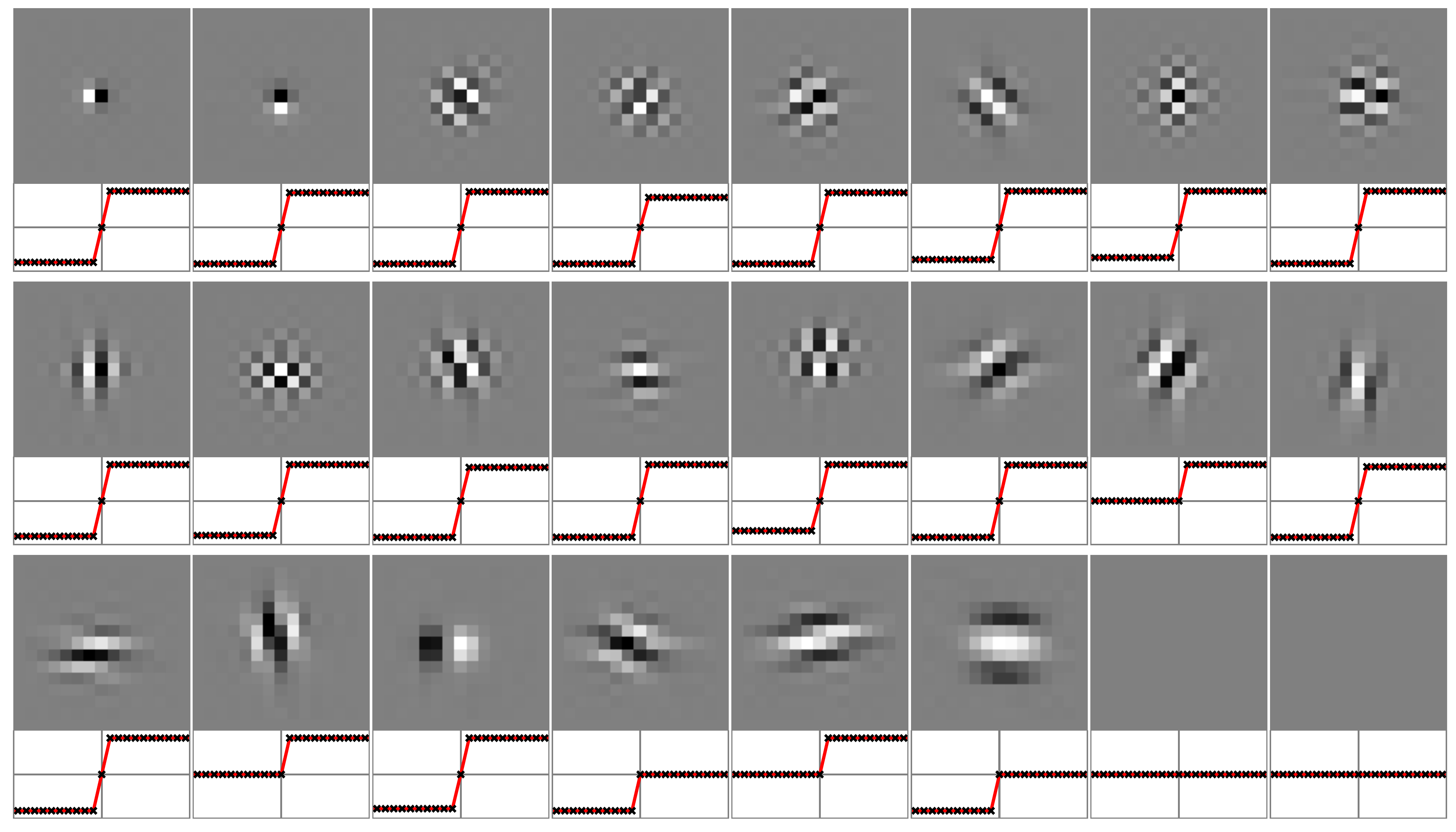}
    \caption{Impulse response of the filters and activation functions of the CRR-NN trained with $\sigma=5$.
    The crosses indicate the knots of the splines.
    For the 8 missing filters, the associated activation functions were numerically identically zero.}
    \label{fig:filteractivationsig5}
\end{figure}
\begin{figure}[t]
    \centering
    \includegraphics[width=0.48\textwidth]{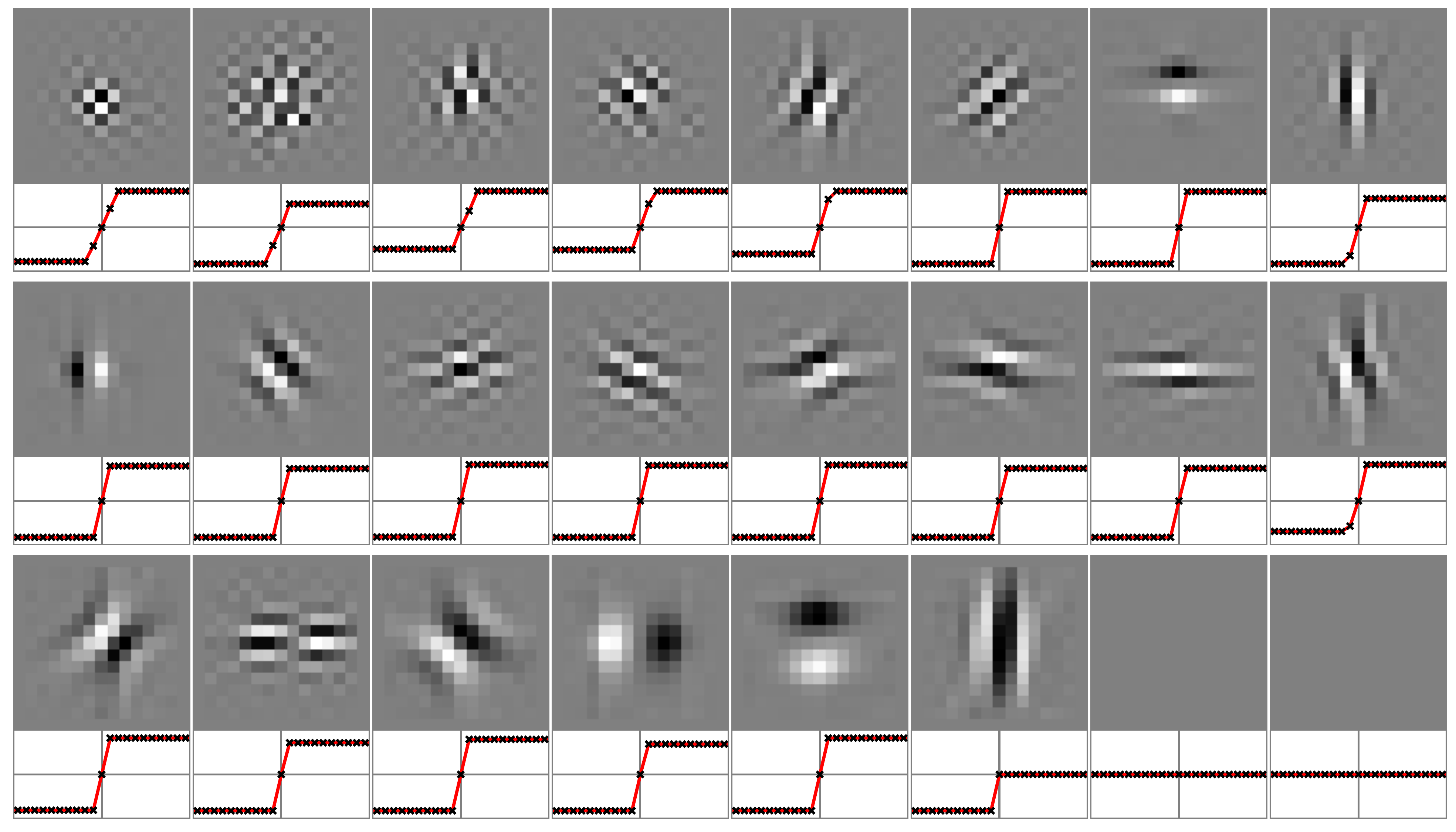}
    \caption{Impulse response of the filters and activation functions of the CRR-NN trained with $\sigma=25/255$.}
    \label{fig:filteractivationsig25}
\end{figure}

\subsubsection{Role of the Scaling Factor}
To clarify the role of the scaling factor $\mu$ introduced in~\eqref{eq:scalingfactor}, we investigate a toy problem on the space of one-dimensional signals.
Since these can be interpreted as images varying along a single direction, a signal regularizer $R_1$ can be obtained from $R_{\param}$ by replacing the 2D convolutional filters with 1D convolutional filters whose kernels are the ones of $R_{\param}$ summed along a direction.
Next, we seek a compactly supported signal with fixed mass that has minimum regularization cost, as in
\begin{equation}
\label{eq:1dproblem}
    \hat{\vec{c}} = \argmin \limits_{\vec c \in\R^d} R_1(\mu \vec c) \,\,\, \text{s.t.}\,\,\,  \begin{cases}\vec 1^T \vec c = 1,\\
\vec c_k = 0, & \forall k\not\in [k_1,k_2].\end{cases}
\end{equation}
The solutions for various values of $\mu$ are shown in Figure \ref{fig:canonicalfunctions}.
Small values of $\mu$ promote smooth functions in a way reminiscent of the Tikhonov regularizer applied to finite differences. Large values of $\mu$ promote functions with constant portions and, conjointly, allows for sharp jumps, which is reminiscent of the TV regularizer. This reasoning is in agreement with the shape of the activation functions shown in Figures~\ref{fig:filteractivationsig5} and~\ref{fig:filteractivationsig25}.
Indeed, an increase in $\mu$ allows one to enlarge the region where the regularizer has constant gradients, while a decrease of $\mu$ allows one to enlarge the region where the regularizer has linear gradients.

\begin{figure}[t]
    \centering
    \includegraphics[width=0.25\textwidth]{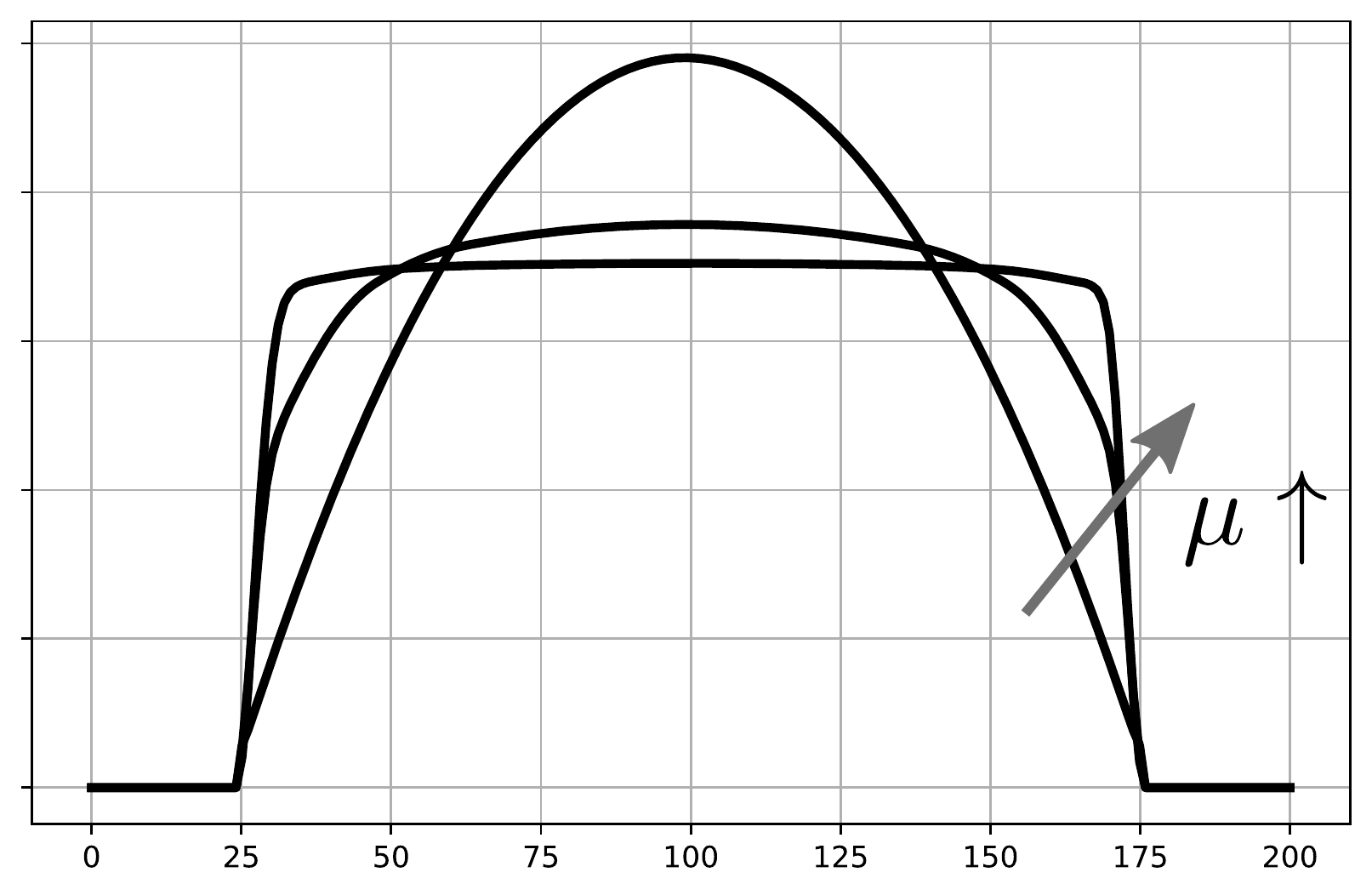}
    \caption{Solutions of the one-dimensional problem~\eqref{eq:1dproblem} for increasing values of $\mu$. The plotted functions are supported in $[25, 175]$ and minimize the learnt regularizer given a unit sum of their values.}
    \label{fig:canonicalfunctions}
\end{figure}

\section{Conclusion}
We have proposed a framework to learn universal convex-ridge regularizers with adaptive profiles. When applied to inverse problems, it is competitive with those recent deep-learning approaches that also prioritize the reliability of the method. Not only CRR-NNs are faster to train, but they also offer improvements in image quality. The findings raise the question of whether shallow models such as CRR-NNs, despite their small number of parameters, already offer optimal performance among methods that rely either on a learnable convex regularizer or on the PnP framework with a provably averaged denoiser.
In the future, CRR-NNs could be fine-tuned on specific modalities via the use of $\mat H$ for training. This could further improve the reconstruction quality, as observed when shifting from PnP to deep unrolled algorithms while maintaining the guarantees.

\section*{Acknowledgments}
The research leading to these results was supported by the European Research Council (ERC) under European Union’s Horizon 2020 (H2020), Grant Agreement - Project No 101020573 FunLearn and by the Swiss National Science Foundation, Grant 200020 184646/1. The authors are thankful to Dimitris Perdios for helpful discussions.
\bibliographystyle{IEEEtran}
\bibliography{references, new_ref}

\begin{thebibliography}{10}
\providecommand{\url}[1]{#1}
\csname url@samestyle\endcsname
\providecommand{\newblock}{\relax}
\providecommand{\bibinfo}[2]{#2}
\providecommand{\BIBentrySTDinterwordspacing}{\spaceskip=0pt\relax}
\providecommand{\BIBentryALTinterwordstretchfactor}{4}
\providecommand{\BIBentryALTinterwordspacing}{\spaceskip=\fontdimen2\font plus
\BIBentryALTinterwordstretchfactor\fontdimen3\font minus
  \fontdimen4\font\relax}
\providecommand{\BIBforeignlanguage}[2]{{%
\expandafter\ifx\csname l@#1\endcsname\relax
\typeout{** WARNING: IEEEtran.bst: No hyphenation pattern has been}%
\typeout{** loaded for the language `#1'. Using the pattern for}%
\typeout{** the default language instead.}%
\else
\language=\csname l@#1\endcsname
\fi
#2}}
\providecommand{\BIBdecl}{\relax}
\BIBdecl

\bibitem{ribes2008linear}
A.~Ribes and F.~Schmitt, ``Linear inverse problems in imaging,'' \emph{IEEE
  Signal Processing Magazine}, vol.~25, no.~4, pp. 84--99, 2008.

\bibitem{MM2019}
M.~T. McCann and M.~Unser, ``Biomedical image reconstruction: From the
  foundations to deep neural networks,'' \emph{Foundations and Trends® in
  Signal Processing}, vol.~13, no.~3, pp. 283--359, 2019.

\bibitem{tikhonov1963}
A.~N. Tikhonov, ``Solution of incorrectly formulated problems and the
  regularization method,'' \emph{Soviet Mathematics}, vol.~4, pp. 1035--1038,
  1963.

\bibitem{rudin1992nonlinear}
L.~I. Rudin, S.~Osher, and E.~Fatemi, ``Nonlinear total variation based noise
  removal algorithms,'' \emph{Physica D: Nonlinear Phenomena}, vol.~60, no.
  1-4, pp. 259--268, 1992.

\bibitem{donoho2006compressed}
D.~L. Donoho, ``Compressed sensing,'' \emph{IEEE Transactions on Information
  Theory}, vol.~52, no.~4, pp. 1289--1306, 2006.

\bibitem{candes2008introduction}
E.~J. Cand{\`e}s and M.~B. Wakin, ``An introduction to compressive sampling,''
  \emph{IEEE Signal Processing Magazine}, vol.~25, no.~2, pp. 21--30, 2008.

\bibitem{AMOS2019}
S.~Arridge, P.~Maass, O.~\"{O}ktem, and C.-B. Sch\"{o}nlieb, ``Solving inverse
  problems using data-driven models,'' \emph{Acta Numerica}, vol.~28, pp.
  1--174, 2019.

\bibitem{OngJalBar2020}
G.~Ongi, A.~Jalal, C.~A. Metzle, R.~G. Baraniuk, A.~G. Dimakis, and R.~Willett,
  ``Deep learning techniques for inverse problems in imaging,'' \emph{IEEE
  Journal on Selected Areas in Information Theory}, vol.~1, no.~1, pp. 39--56,
  2020.

\bibitem{antun2020instabilities}
V.~Antun, F.~Renna, C.~Poon, B.~Adcock, and A.~C. Hansen, ``On instabilities of
  deep learning in image reconstruction and the potential costs of {AI},''
  \emph{Proceedings of the National Academy of Sciences}, vol. 117, no.~48, pp.
  30\,088--30\,095, 2020.

\bibitem{gottschling2020troublesome}
N.~M. Gottschling, V.~Antun, B.~Adcock, and A.~C. Hansen, ``The troublesome
  kernel: why deep learning for inverse problems is typically unstable,''
  \emph{arXiv:2001.01258}, 2020.

\bibitem{jin2017deep}
K.~H. Jin, M.~T. McCann, E.~Froustey, and M.~Unser, ``Deep convolutional neural
  network for inverse problems in imaging,'' \emph{IEEE Transactions on Image
  Processing}, vol.~26, no.~9, pp. 4509--4522, 2017.

\bibitem{CZZLLZW2017}
H.~Chen, Y.~Zhang, W.~Zhang, P.~Liao, K.~Li, J.~Zhou, and G.~Wang, ``Low-dose
  {CT} via convolutional neural network,'' \emph{Biomedical Optics Express},
  vol.~8, no.~2, pp. 679--694, 2017.

\bibitem{BJSBM2018}
B.~Zhu, J.~Z. Liu, S.~F. Cauley, B.~R. Rosen, and M.~S. Rosen, ``Image
  reconstruction by domain-transform manifold learning,'' \emph{Nature}, vol.
  555, no. 7697, pp. 487--492, 2018.

\bibitem{hyun2018deep}
C.~M. Hyun, H.~P. Kim, S.~M. Lee, S.~Lee, and J.~K. Seo, ``Deep learning for
  undersampled {MRI} reconstruction,'' \emph{Physics in Medicine \& Biology},
  vol.~63, no.~13, p. 135007, 2018.

\bibitem{HN20}
P.~Hagemann and S.~Neumayer, ``Stabilizing invertible neural networks using
  mixture models,'' \emph{Inverse Problems}, vol.~37, no.~8, p. 085002, 2021.

\bibitem{venkatakrishnan2013plug}
S.~V. Venkatakrishnan, C.~A. Bouman, and B.~Wohlberg, ``Plug-and-{P}lay priors
  for model based reconstruction,'' in \emph{IEEE Global Conference on Signal
  and Information Processing}, 2013, pp. 945--948.

\bibitem{chan2016plug}
S.~H. Chan, X.~Wang, and O.~A. Elgendy, ``Plug-and-{P}lay {ADMM} for image
  restoration: Fixed-point convergence and applications,'' \emph{IEEE
  Transactions on Computational Imaging}, vol.~3, no.~1, pp. 84--98, 2016.

\bibitem{romano2017little}
Y.~Romano, M.~Elad, and P.~Milanfar, ``The little engine that could:
  Regularization by denoising ({RED}),'' \emph{SIAM Journal on Imaging
  Sciences}, vol.~10, no.~4, pp. 1804--1844, 2017.

\bibitem{ryu2019plug}
E.~Ryu, J.~Liu, S.~Wang, X.~Chen, Z.~Wang, and W.~Yin, ``{P}lug-and-{P}lay
  methods provably converge with properly trained denoisers,'' in
  \emph{Proceedings of the 36th International Conference on Machine Learning},
  ser. Proceedings of Machine Learning Research, vol.~97.\hskip 1em plus 0.5em
  minus 0.4em\relax PMLR, 09--15 June 2019, pp. 5546--5557.

\bibitem{bohra2021learning}
P.~Bohra, D.~Perdios, A.~Goujon, S.~Emery, and M.~Unser, ``Learning
  {L}ipschitz-controlled activation functions in neural networks for
  {P}lug-and-{P}lay image reconstruction methods,'' in \emph{NeurIPS 2021
  Workshop on Deep Learning and Inverse Problems}, 2021.

\bibitem{HHNPSS2019}
M.~Hasannasab, J.~Hertrich, S.~Neumayer, G.~Plonka, S.~Setzer, and G.~Steidl,
  ``Parseval proximal neural networks,'' \emph{The Journal of Fourier
  Analysis}, vol.~26, p.~59, 2020.

\bibitem{HNS2021}
J.~Hertrich, S.~Neumayer, and G.~Steidl, ``Convolutional proximal neural
  networks and {P}lug-and-{P}lay algorithms,'' \emph{Linear Algebra and its
  Applications}, vol. 631, pp. 203--234, 2021.

\bibitem{gupta2018cnn}
H.~Gupta, K.~H. Jin, H.~Q. Nguyen, M.~T. McCann, and M.~Unser, ``{CNN}-based
  projected gradient descent for consistent {CT} image reconstruction,''
  \emph{IEEE Transactions on Medical Imaging}, vol.~37, no.~6, pp. 1440--1453,
  2018.

\bibitem{hurault2022gradient}
S.~Hurault, A.~Leclaire, and N.~Papadakis, ``Gradient step denoiser for
  convergent {P}lug-and-{P}lay,'' in \emph{International Conference on Learning
  Representations}, 2022.

\bibitem{RotBla2009}
S.~Roth and M.~J. Black, ``Fields of experts,'' \emph{International Journal of
  Computer Vision}, vol.~82, no.~2, pp. 205--229, 2009.

\bibitem{CheRan2014}
Y.~Chen, R.~Ranftl, and T.~Pock, ``Insights into analysis operator learning:
  {F}rom patch-based sparse models to higher order {MRF}s,'' \emph{IEEE
  Transactions on Image Processing}, vol.~23, pp. 1060--72, 2014.

\bibitem{EffKob2020}
A.~Effland, E.~Kobler, K.~Kunisch, and T.~Pock, ``Variational networks: An
  optimal control approach to early stopping variational methods for image
  restoration,'' \emph{Journal of Mathematical Imaging and Vision}, vol.~62,
  no.~3, pp. 396--416, 2020.

\bibitem{lunz2018adversarial}
S.~Lunz, O.~{\"O}ktem, and C.-B. Sch{\"o}nlieb, ``Adversarial regularizers in
  inverse problems,'' \emph{Advances in Neural Information Processing Systems},
  vol.~31, 2018.

\bibitem{DufNeiEhr2021}
M.~Duff, N.~D.~F. Campbell, and M.~J. Ehrhardt, ``Regularising inverse problems
  with generative machine learning models,'' \emph{arXiv:2107.11191}, 2021.

\bibitem{LiSch2020}
H.~Li, J.~Schwab, S.~Antholzer, and M.~Haltmeier, ``{NETT}: {S}olving inverse
  problems with deep neural networks,'' \emph{Inverse Problems}, vol.~36,
  no.~6, p. 065005, 2020.

\bibitem{KobEff2020}
E.~Kobler, A.~Effland, K.~Kunisch, and T.~Pock, ``Total deep variation for
  linear inverse problems,'' in \emph{IEEE/CVF Conference on Computer Vision
  and Pattern Recognition (CVPR)}, June 2020.

\bibitem{cohen2021has}
R.~Cohen, Y.~Blau, D.~Freedman, and E.~Rivlin, ``It has potential:
  Gradient-driven denoisers for convergent solutions to inverse problems,''
  \emph{Advances in Neural Information Processing Systems}, vol.~34, 2021.

\bibitem{HurLec2022}
S.~Hurault, A.~Leclaire, and N.~Papadakis, ``Proximal denoiser for convergent
  {P}lug-and-{P}lay optimization with nonconvex regularization,'' in
  \emph{Proceedings of the 39th International Conference on Machine Learning},
  ser. Proceedings of Machine Learning Research, vol. 162.\hskip 1em plus 0.5em
  minus 0.4em\relax PMLR, 17--23 July 2022, pp. 9483--9505.

\bibitem{fermanian2022learned}
\BIBentryALTinterwordspacing
R.~Fermanian, M.~Le~Pendu, and C.~Guillemot, ``Pnp-reg: Learned regularizing
  gradient for plug-and-play gradient descent,'' \emph{SIAM Journal on Imaging
  Sciences}, vol.~16, no.~2, pp. 585--613, 2023. [Online]. Available:
  \url{https://doi.org/10.1137/22M1490843}
\BIBentrySTDinterwordspacing

\bibitem{KKHP2017}
E.~Kobler, T.~Klatzer, K.~Hammernik, and T.~Pock, ``Variational networks:
  Connecting variational methods and deep learning,'' in \emph{Pattern
  Recognition}, 2017, pp. 281--293.

\bibitem{NC2022}
P.~Nair and K.~N. Chaudhury, ``On the construction of averaged deep denoisers
  for image regularization,'' \emph{arXiv:2207.07321}, 2022.

\bibitem{PRTW2021}
J.-C. Pesquet, A.~Repetti, M.~Terris, and Y.~Wiaux, ``Learning maximally
  monotone operators for image recovery,'' \emph{SIAM Journal on Imaging
  Sciences}, vol.~14, no.~3, pp. 1206--1237, 2021.

\bibitem{MukDit2021}
S.~Mukherjee, S.~Dittmer, Z.~Shumaylov, S.~Lunz, O.~Öktem, and C.-B.
  Schönlieb, ``Learned convex regularizers for inverse problems,''
  \emph{arXiv:2008.02839}, 2021.

\bibitem{MukSch2021}
S.~Mukherjee, C.-B. Sch{\"o}nlieb, and M.~Burger, ``Learning convex
  regularizers satisfying the variational source condition for inverse
  problems,'' in \emph{NeurIPS Workshop on Deep Learning and Inverse Problems},
  2021.

\bibitem{AXK2016}
B.~Amos, L.~Xu, and J.~Z. Kolter, ``Input convex neural networks,'' in
  \emph{Proceedings of the 34th International Conference on Machine Learning},
  ser. Proceedings of Machine Learning Research, vol.~70.\hskip 1em plus 0.5em
  minus 0.4em\relax PMLR, 06--11 August 2017, pp. 146--155.

\bibitem{nguyen2017learning}
H.~Q. Nguyen, E.~Bostan, and M.~Unser, ``Learning convex regularizers for
  optimal {B}ayesian denoising,'' \emph{IEEE Transactions on Signal
  Processing}, vol.~66, no.~4, pp. 1093--1105, 2017.

\bibitem{peyre2011learning}
G.~Peyr{\'e} and J.~M. Fadili, ``Learning analysis sparsity priors,'' in
  \emph{SampTA'11}, 2011, p.~4.

\bibitem{chen2012learning}
Y.~Chen, T.~Pock, and H.~Bischof, ``Learning $\ell_1$-based analysis and
  synthesis sparsity priors using bi-level optimization,'' in \emph{26th Neural
  Information Processing Systems Confercence}, 2012.

\bibitem{W1968}
L.~B. Willner, ``On the distance between polytopes,'' \emph{Quarterly of
  Applied Mathematics}, vol.~26, no.~2, pp. 207--212, 1968.

\bibitem{BKK2019}
S.~Bai, J.~Z. Kolter, and V.~Koltun, ``Deep equilibrium models,'' in
  \emph{Advances in Neural Information Processing Systems}, vol.~32, 2019.

\bibitem{GilOngWil2021}
D.~Gilton, G.~Ongie, and R.~Willett, ``Deep equilibrium architectures for
  inverse problems in imaging,'' \emph{IEEE Transactions on Computational
  Imaging}, vol.~7, pp. 1123--1133, 2021.

\bibitem{ABM2023}
A.~Pramanik, M.~B. Zimmerman, and M.~Jacob, ``Memory-efficient model-based deep
  learning with convergence and robustness guarantees,'' \emph{IEEE
  Transactions on Computational Imaging}, vol.~9, pp. 260--275, 2023.

\bibitem{PraAgJac2020}
A.~Pramanik, H.~K. Aggarwal, and M.~Jacob, ``Deep generalization of structured
  low-rank algorithms (deep-slr),'' \emph{IEEE Transactions on Medical
  Imaging}, vol.~39, no.~12, pp. 4186--4197, 2020.

\bibitem{AgHeMaJaco2019}
H.~K. Aggarwal, M.~P. Mani, and M.~Jacob, ``Modl: Model-based deep learning
  architecture for inverse problems,'' \emph{IEEE Transactions on Medical
  Imaging}, vol.~38, no.~2, pp. 394--405, 2019.

\bibitem{BCGA2020}
P.~Bohra, J.~Campos, H.~Gupta, S.~Aziznejad, and M.~Unser, ``Learning
  activation functions in deep (spline) neural networks,'' \emph{IEEE Open
  Journal of Signal Processing}, vol.~1, pp. 295--309, 2020.

\bibitem{Unser2019}
M.~Unser, ``A representer theorem for deep neural networks,'' \emph{Journal of
  Machine Learning Research}, vol.~20, no. 110, pp. 1--30, 2019.

\bibitem{MKKY2018}
T.~Miyato, T.~Kataoka, M.~Koyama, and Y.~Yoshida, ``Spectral normalization for
  generative adversarial networks,'' in \emph{International Conference on
  Learning Representations}, 2018.

\bibitem{fung2022jfb}
S.~W. Fung, H.~Heaton, Q.~Li, D.~McKenzie, S.~Osher, and W.~Yin, ``{JFB}:
  Jacobian-free backpropagation for implicit networks,'' in \emph{Proceedings
  of the AAAI Conference on Artificial Intelligence}, 2022.

\bibitem{unser1999splines}
M.~Unser, ``Splines: {A} perfect fit for signal and image processing,''
  \emph{{IEEE} Signal Processing Magazine}, vol.~16, no.~6, pp. 22--38,
  November 1999, {IEEE-SPS} best paper award.

\bibitem{xu2020boosting}
X.~Xu, J.~Liu, Y.~Sun, B.~Wohlberg, and U.~S. Kamilov, ``Boosting the
  performance of {P}lug-and-{P}lay priors via denoiser scaling,'' in \emph{54th
  Asilomar Conference on Signals, Systems, and Computers}, 2020, pp.
  1305--1312.

\bibitem{malitsky2020a}
\BIBentryALTinterwordspacing
Y.~Malitsky and K.~Mishchenko, ``Adaptive gradient descent without descent,''
  in \emph{Proceedings of the 37th International Conference on Machine
  Learning}, ser. Proceedings of Machine Learning Research, H.~D. III and
  A.~Singh, Eds., vol. 119.\hskip 1em plus 0.5em minus 0.4em\relax PMLR, 13--18
  Jul 2020, pp. 6702--6712. [Online]. Available:
  \url{https://proceedings.mlr.press/v119/malitsky20a.html}
\BIBentrySTDinterwordspacing

\bibitem{latafat2023adaptive}
P.~Latafat, A.~Themelis, L.~Stella, and P.~Patrinos, ``Adaptive proximal
  algorithms for convex optimization under local lipschitz continuity of the
  gradient,'' 2023.

\bibitem{beck2009fast}
A.~Beck and M.~Teboulle, ``A fast iterative shrinkage-thresholding algorithm
  for linear inverse problems,'' \emph{SIAM journal on imaging sciences},
  vol.~2, no.~1, pp. 183--202, 2009.

\bibitem{ComWaj2005}
P.~L. Combettes and V.~R. Wajs, ``Signal recovery by proximal forward-backward
  splitting,'' \emph{Multiscale Modeling \& Simulation}, vol.~4, no.~4, pp.
  1168--1200, 2005.

\bibitem{ducotterd2022improving}
S.~Ducotterd, A.~Goujon, P.~Bohra, D.~Perdios, S.~Neumayer, and M.~Unser,
  ``Improving lipschitz-constrained neural networks by learning activation
  functions,'' 2022.

\bibitem{LAWK2021}
J.~Liu, S.~Asif, B.~Wohlberg, and U.~Kamilov, ``Recovery analysis for
  {P}lug-and-{P}lay priors using the restricted eigenvalue condition,'' in
  \emph{Advances in Neural Information Processing Systems}, 2021.

\bibitem{AM2022}
A.~Pramanik and M.~Jacob, ``Improved model based deep learning using monotone
  operator learning ({MOL}),'' in \emph{2022 IEEE 19th International Symposium
  on Biomedical Imaging (ISBI)}, 2022, pp. 1--4.

\bibitem{HCC2018}
T.~Huster, C.-Y.~J. Chiang, and R.~Chadha, ``Limitations of the {L}ipschitz
  constant as a defense against adversarial examples,'' in \emph{Joint European
  Conference on Machine Learning and Knowledge Discovery in Databases}, 2018,
  pp. 16--29.

\bibitem{ALG19}
C.~Anil, J.~Lucas, and R.~Grosse, ``Sorting out {L}ipschitz function
  approximation,'' in \emph{Proceedings of the 36th International Conference on
  Machine Learning}, ser. Proceedings of Machine Learning Research,
  vol.~97.\hskip 1em plus 0.5em minus 0.4em\relax PMLR, 2019, pp. 291--301.

\bibitem{NGBU2022}
S.~Neumayer, A.~Goujon, P.~Bohra, and M.~Unser, ``Approximation of {Lipschitz}
  functions using deep spline neural networks,'' \emph{SIAM Journal on
  Mathematics of Data Science}, vol.~5, no.~2, pp. 306--322, 2023.

\bibitem{chand2023multiscale}
J.~R. Chand and M.~Jacob, ``Multi-scale energy (muse) plug and play framework
  for inverse problems,'' 2023.

\bibitem{ZZCMZ2017}
K.~Zhang, W.~Zuo, Y.~Chen, D.~Meng, and L.~Zhang, ``Beyond a {G}aussian
  denoiser: Residual learning of deep {CNN} for image denoising,'' \emph{IEEE
  Transactions on Image Processing}, vol.~26, no.~7, pp. 3142--3155, 2017.

\bibitem{arbelaez_contour_2011}
P.~Arbeláez, M.~Maire, C.~Fowlkes, and J.~Malik,
  ``\BIBforeignlanguage{en}{Contour detection and hierarchical image
  segmentation},'' \emph{\BIBforeignlanguage{en}{IEEE Transactions on Pattern
  Analysis and Machine Intelligence}}, vol.~33, no.~5, pp. 898--916, 2011.

\bibitem{chambolle2004algorithm}
A.~Chambolle, ``An algorithm for total variation minimization and
  applications,'' \emph{Journal of Mathematical imaging and vision}, vol.~20,
  no.~1, pp. 89--97, 2004.

\bibitem{zbontar2018fastMRI}
F.~Knoll, J.~Zbontar, A.~Sriram, M.~J. Muckley, M.~Bruno, A.~Defazio,
  M.~Parente, K.~J. Geras, J.~Katsnelson, H.~Chandarana, Z.~Zhang,
  M.~Drozdzalv, A.~Romero, M.~Rabbat, P.~Vincent, J.~Pinkerton, D.~Wang,
  N.~Yakubova, E.~Owens, C.~L. Zitnick, M.~P. Recht, D.~K. Sodickson, and Y.~W.
  Lui, ``{fastMRI}: {A} publicly available raw k-space and {DICOM} dataset of
  knee images for accelerated {MR} image reconstruction using machine
  learning,'' \emph{Radiology: Artificial Intelligence}, vol.~2, no.~1, 2020.

\bibitem{Uecker2014-uv}
M.~Uecker, P.~Lai, M.~J. Murphy, P.~Virtue, M.~Elad, J.~M. Pauly, S.~S.
  Vasanawala, and M.~Lustig, ``\BIBforeignlanguage{en}{{ESPIRiT-an} eigenvalue
  approach to autocalibrating parallel {MRI}: Where {SENSE} meets {GRAPPA}},''
  \emph{\BIBforeignlanguage{en}{Magn. Reson. Med.}}, vol.~71, no.~3, pp.
  990--1001, Mar. 2014.

\bibitem{uecker2013software}
M.~Uecker, P.~Virtue, F.~Ong, M.~J. Murphy, M.~T. Alley, S.~S. Vasanawala, and
  M.~Lustig, ``Software toolbox and programming library for compressed sensing
  and parallel imaging,'' in \emph{ISMRM Workshop on Data Sampling and Image
  Reconstruction}, 2013, p.~41.

\bibitem{M2016}
C.~McCollough, ``{TU-FG-207A-04}: Overview of the low dose {CT} {G}rand
  {C}hallenge,'' \emph{Medical Physics}, vol.~43, no. 6Part35, pp. 3759--3760,
  2016.

\bibitem{BT2009}
A.~Beck and M.~Teboulle, ``Fast gradient-based algorithms for constrained total
  variation image denoising and deblurring problems,'' \emph{IEEE Transactions
  on Image Processing}, vol.~18, no.~11, pp. 2419--2434, 2009.

\bibitem{freeman1991design}
W.~T. Freeman and E.~H. Adelson, ``The design and use of steerable filters,''
  \emph{IEEE Transactions on Pattern Analysis and Machine Intelligence},
  vol.~13, no.~9, pp. 891--906, 1991.

\bibitem{UC2013}
M.~Unser and N.~Chenouard, ``A unifying parametric framework for 2{D} steerable
  wavelet transforms,'' \emph{SIAM Journal on Imaging Sciences}, vol.~6, no.~1,
  pp. 102--135, 2013.

\bibitem{Nesterov2004}
\BIBentryALTinterwordspacing
Y.~Nesterov, \emph{Smooth Convex Optimization}.\hskip 1em plus 0.5em minus
  0.4em\relax Boston, MA: Springer US, 2004, pp. 51--110. [Online]. Available:
  \url{https://doi.org/10.1007/978-1-4419-8853-9_2}
\BIBentrySTDinterwordspacing

\end{thebibliography}
\appendix

\subsection{Hyperparameter Tuning}
\label{ap:tuning}
The parameters $\lambda$ and $\mu$ used in \eqref{eq:scalingfactor} can be tuned with a coarse-to-fine approach. Given the performance on the $3\times 3$ grid $\{(\gamma_\lambda)^{-1}\lambda, \lambda, \gamma_\lambda\lambda\}\times \{(\gamma_\mu)^{-1}\mu, \mu, \gamma_\mu \mu \}$, we identify the best values $\lambda^*$ and $\mu^*$ on this subset and move on to the next iteration as follows:
\begin{itemize}
    \item if $\lambda^* = \lambda$, we refine the search grid by reducing $\gamma_\mu$ to $(\gamma_\mu)^{\zeta}$, $\zeta<1$;
    \item otherwise, $\lambda$ is updated to $\lambda^*$.
\end{itemize}
A similar update is performed for the scaling parameter.
The search is terminated when both $\gamma_\lambda$ and $\gamma_\mu$ are smaller than a threshold, typically, $1.01$. In practice, we initialized $\gamma_\lambda=\gamma_\mu=4$ and set $\zeta=0.5$. The method usually requires between $50$ and $100$ evaluations on tuples $(\lambda, \mu)$ on the validation set before it terminates. The proposed approach is predicated on the observation that the optimization landscape in the $(\lambda,\mu)$ domain is typically well-behaved. The same principles apply to tune a single hyperparameter, as found in the TV and the PnP-$\beta$CNN methods.
Let us remark that the performances were found to change only slowly with the scaling parameter $\mu$ for the MRI and CT experiments.
Hence, in practice, it is enough to tune $\mu$ very coarsely.
\newpage
\section*{Supplementary Material}

\begin{figure}[!ht]
    \centering
    \includegraphics[width=0.41\textwidth]{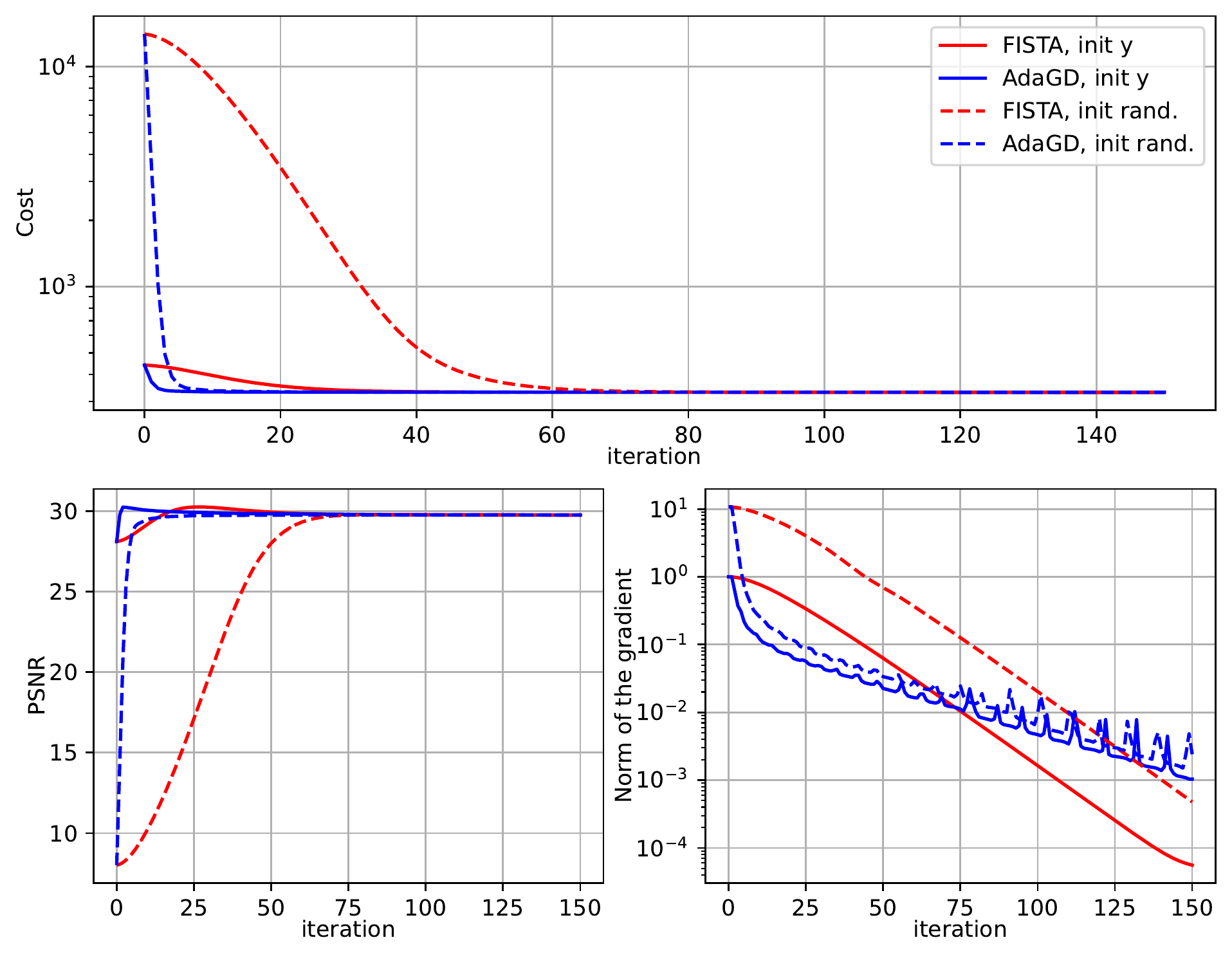}
    \caption{Example of convergence curves (denoising).}
    \label{fig:convergencedenoising}
\end{figure}
\begin{figure}[!ht]
    \centering
    \includegraphics[width=0.41\textwidth]{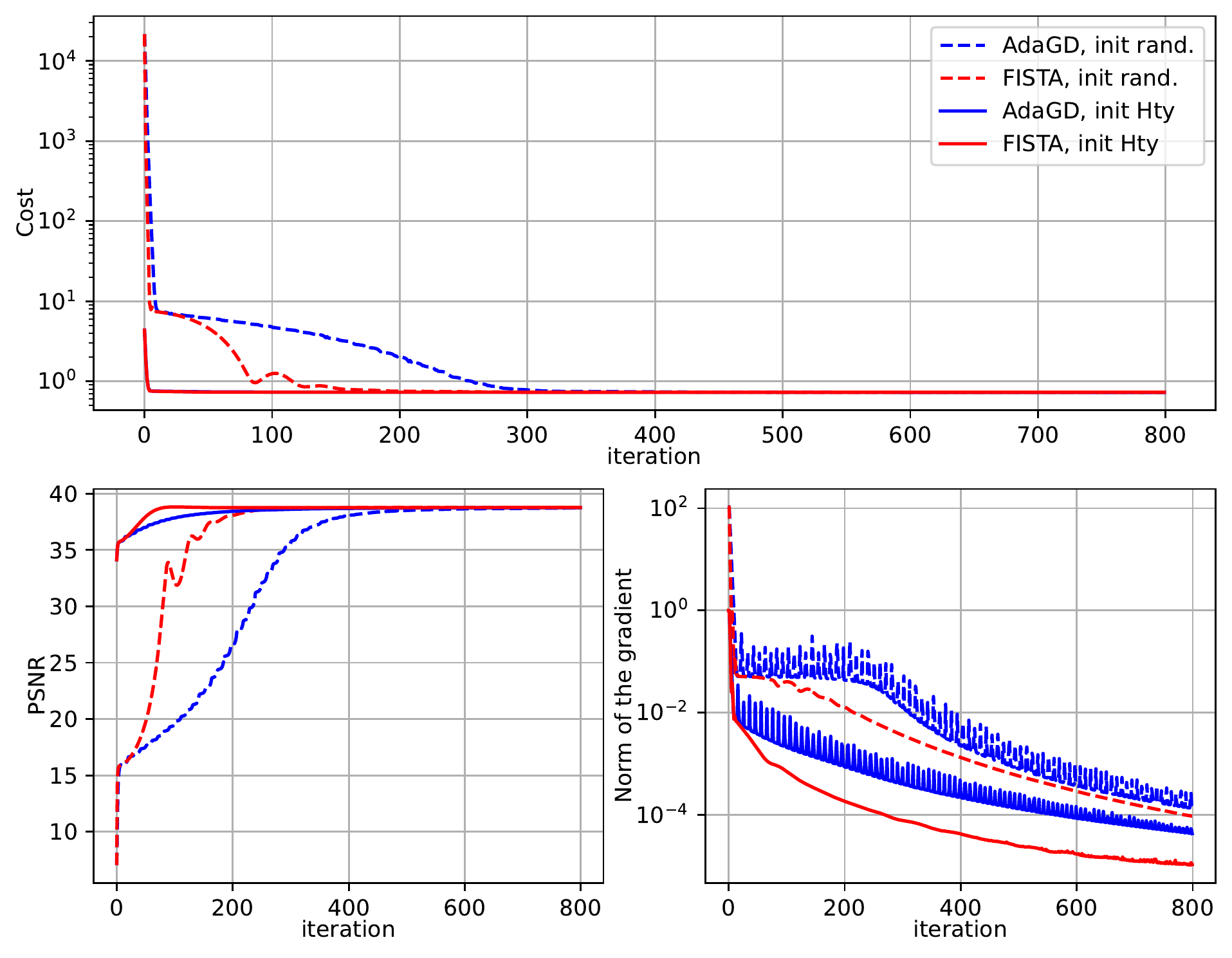}
        \caption{Example of convergence curves (MRI).}
    \label{fig:convergencemri}
\end{figure}
\renewcommand*{\thefootnote}{\fnsymbol{footnote}}
\renewcommand*{\thefootnote}{\arabic{footnote}}
\subsection*{Convergence Curves}
In this section, we present convergence curves for image denoising (Figure~\ref{fig:convergencedenoising}), MRI reconstruction (Figure~\ref{fig:convergencemri}), and CT reconstruction (Figure~\ref{fig:convergencect}) with CRR-NNs.
The underlying objective is minimized with FISTA\footnote{For the plots, the positivity constraint is dropped, otherwise, the gradient does not necessarily vanish at the minimum\label{note1}.}\footnote{For denoising, the problem is 1-strongly convex.
Hence, we use Nesterov's rule $(1 - \sqrt{L})/(1+\sqrt{L})$ instead of $(t_{k}-1)/t_{k+1}$ for extrapolation \cite{Nesterov2004}.}\cite{beck2009fast} and AdaGD\footref{note1}\cite{malitsky2020a}, which both converge generally fast.
Depending on the task and the desired accuracy, one or the other might be faster.
The observed gradient-norm oscillations for AdaGD are typical for this method and unrelated to CRR-NNs \cite{malitsky2020a}.
Finally, note that the initialization affects the convergence speed, but does not impact the reconstruction quality.
This differs significantly from PnP methods that deploy loosely constrained denoisers.

\begin{figure}[!ht]
    \centering
    \includegraphics[width=0.41\textwidth]{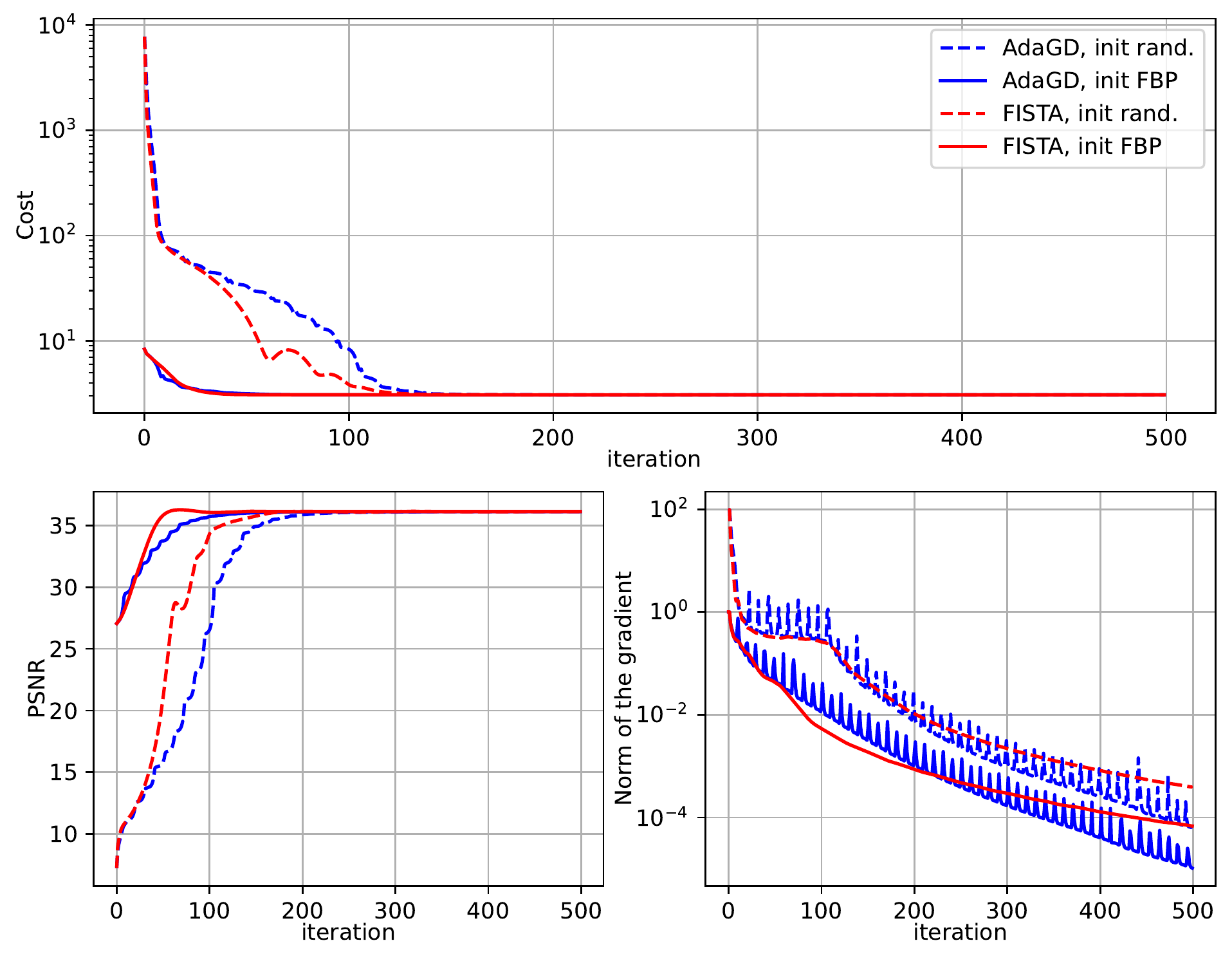}
        \caption{Example of convergence curves (CT).}
    \label{fig:convergencect}
\end{figure}
\subsection*{Activations and Filters}
We provide the filters and activations of a CRR-NN trained for the denoising of CT images (Figure~\ref{fig:filteractivationct}) and of MRI images (Figure~\ref{fig:filteractivationmri}).
Compared to the training on the BSD500 dataset, larger kernel sizes were needed to saturate the performances.

\begin{figure*}[t]
    \centering
    \begin{minipage}[c]{0.48\textwidth}
    \includegraphics[width=\textwidth]{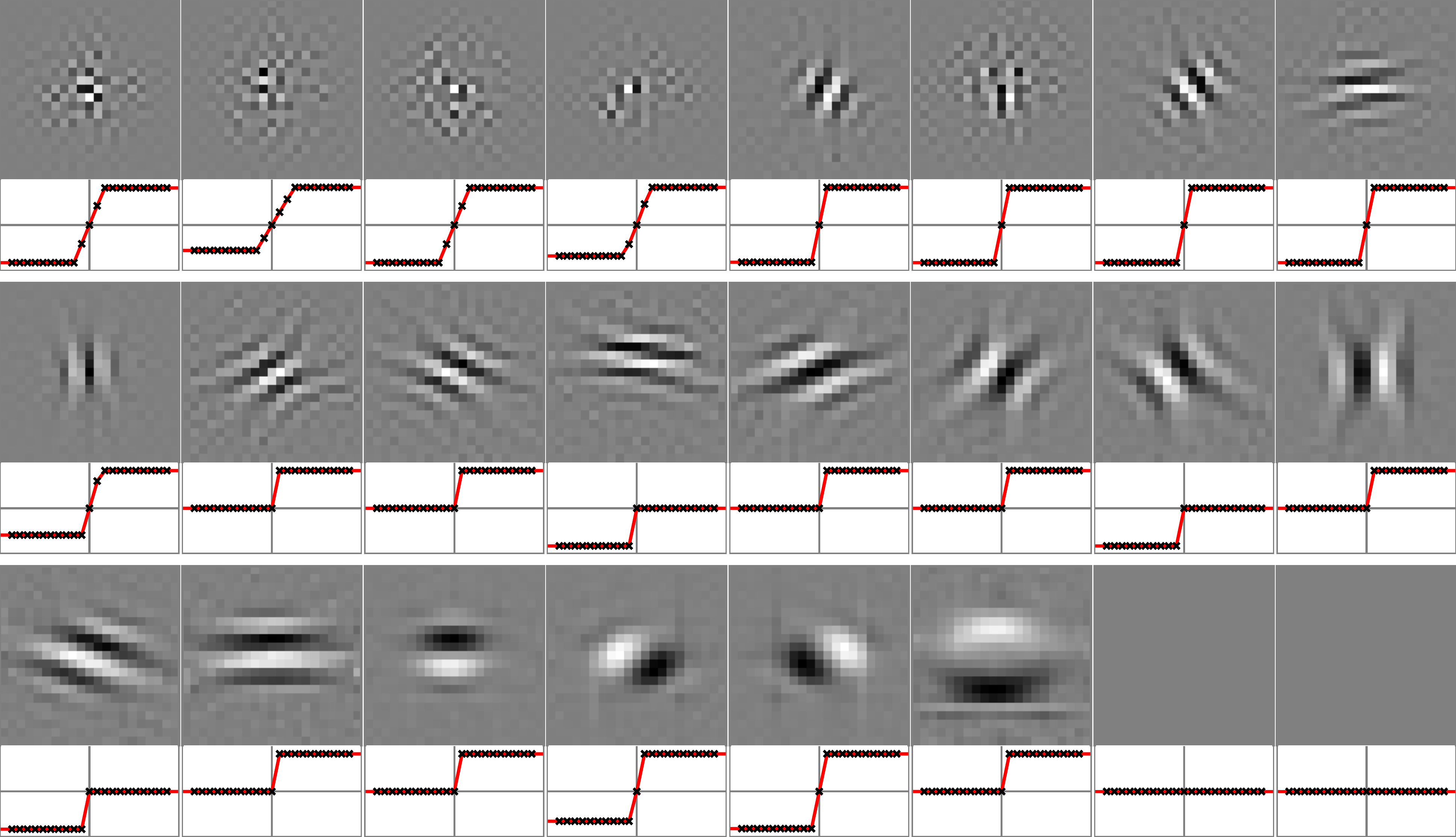}
    \caption{Impulse response of the filters and activation functions of the CRR-NN trained to denoise CT images.}
    \label{fig:filteractivationct}
    \end{minipage}
    \hspace{.2cm}
    \begin{minipage}[c]{0.48\textwidth}
    \includegraphics[width=\textwidth]{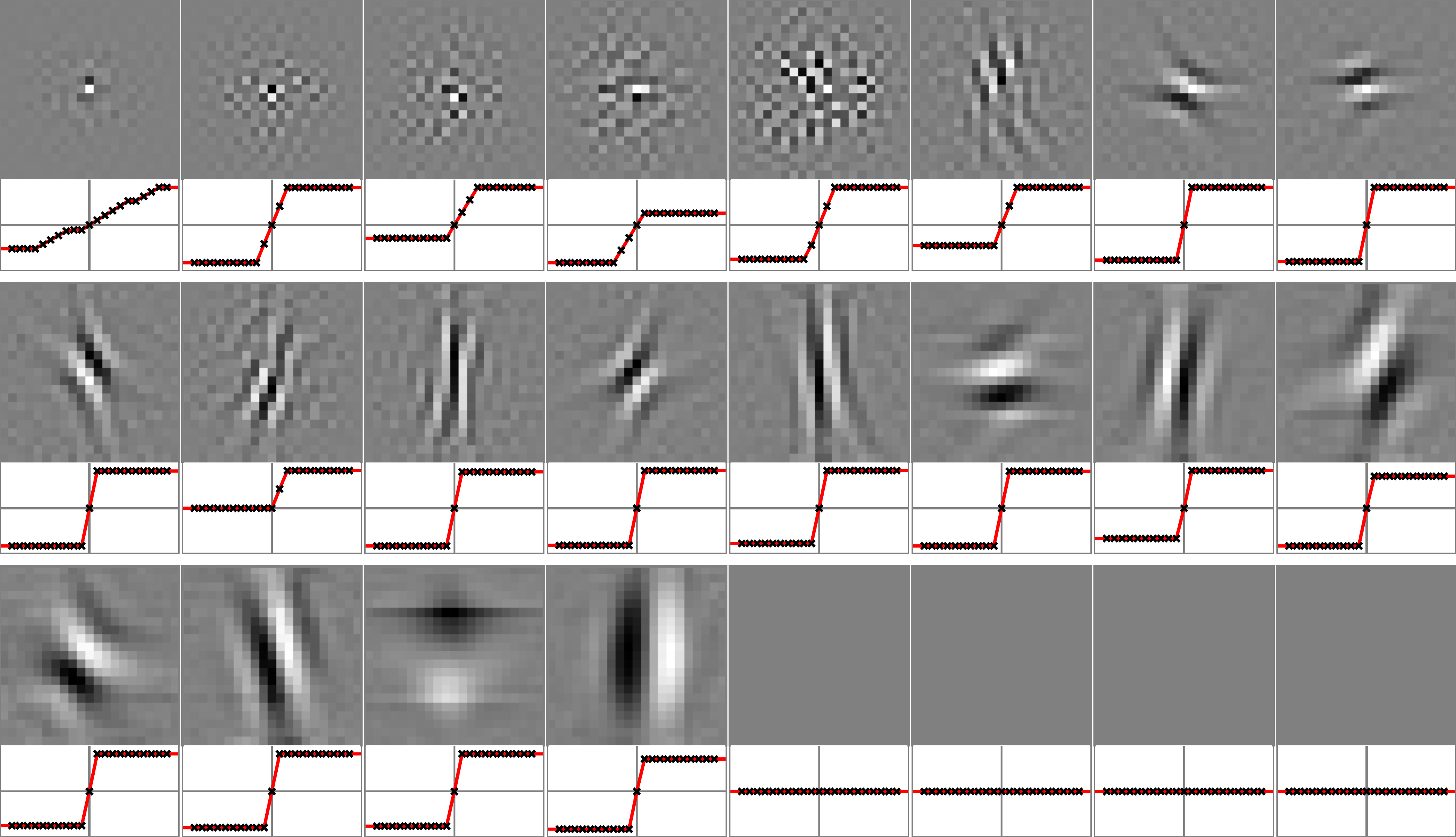}
    \caption{Impulse response of the filters and activation functions of the CRR-NN trained to denoise MRI images.}
    \label{fig:filteractivationmri}
    \end{minipage}
    \vspace{.8cm}
    
    \centering
    \includegraphics[width=0.95\textwidth]{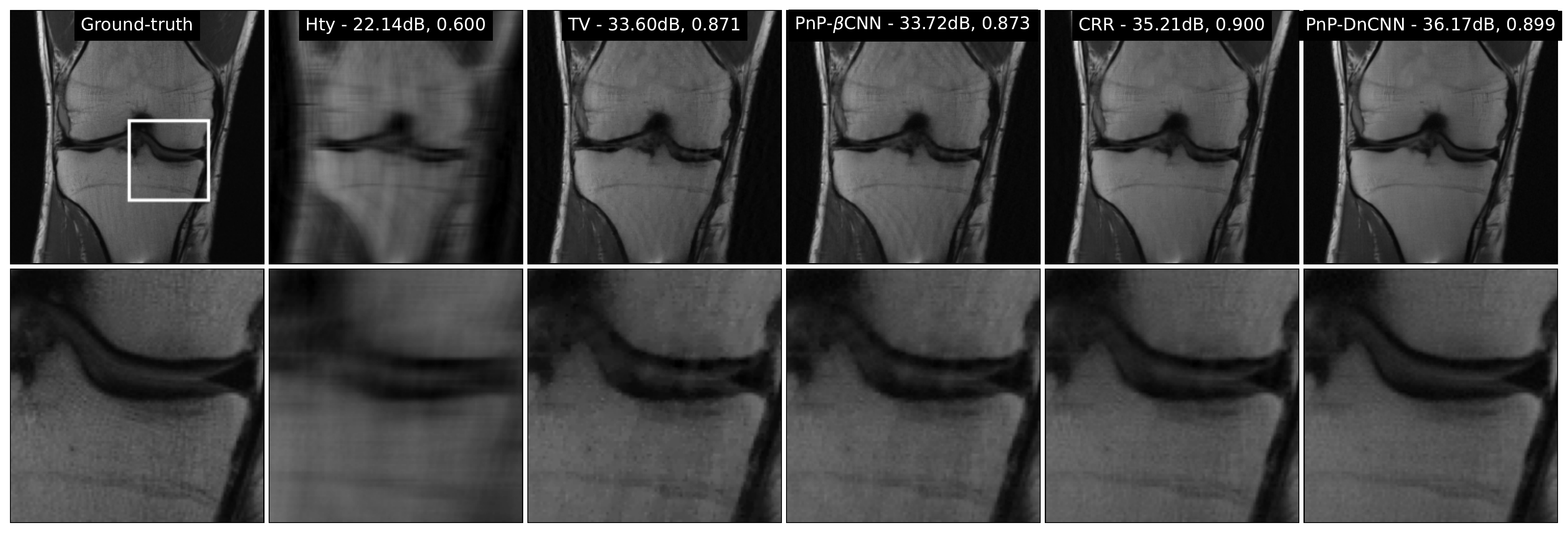}
    \caption{Reconstructions for the 8-fold accelerated multi-coil MRI experiment.}
    \label{fig:MRIreconstructionsextra1}
    \vspace{0.8cm}
    \centering
    \includegraphics[width=0.95\textwidth]{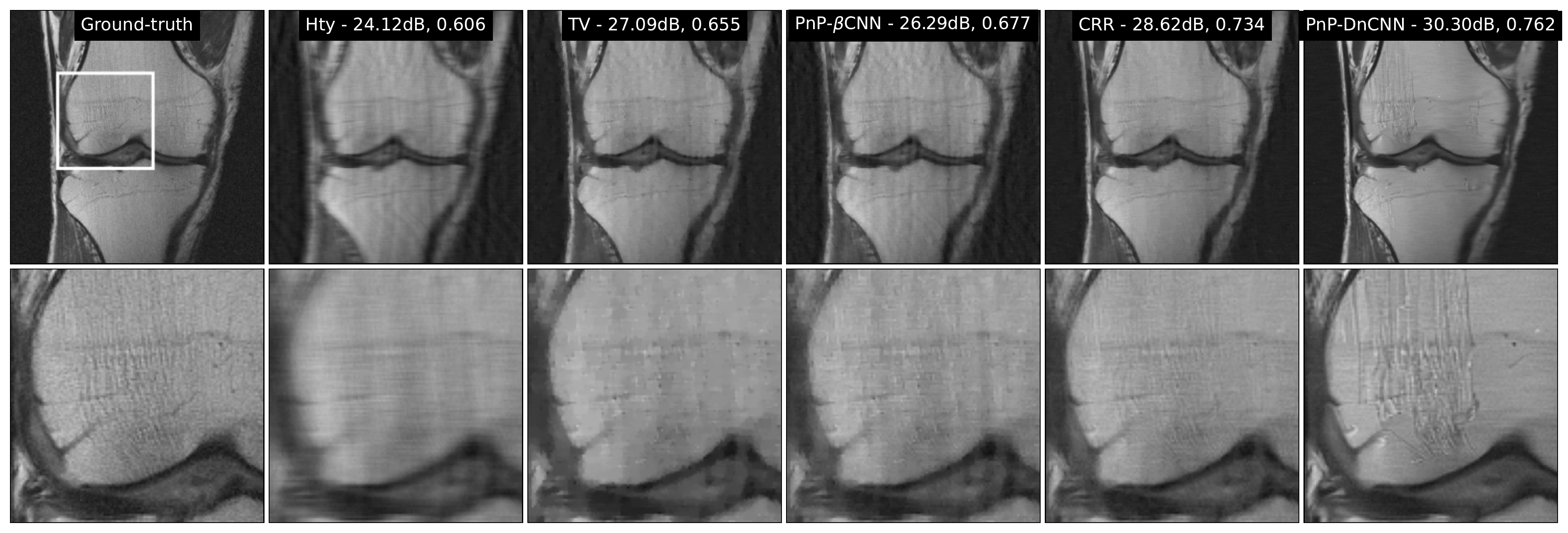}
    \caption{Reconstructions for the 4-fold accelerated single-coil MRI experiment.
    Note the unexpected behavior of DnCNN.}
    \label{fig:MRIreconstructionsextra2}
\end{figure*}
\subsection*{Reconstructed images}
\paragraph*{MRI}
In Figures \ref{fig:MRIreconstructionsextra1} and \ref{fig:MRIreconstructionsextra2}, we present reconstructions from multi- and single-coil MRI measurements, and report their PSNR and SSIM as metrics.
The reconstruction task in Figure~\ref{fig:MRIreconstructionsextra2} is particularly challenging.
In this regime, it can be observed that the loosely constrained PnP-DnCNN exaggerates some structures, even though the metrics remain acceptable.

\paragraph*{CT}
In Figures \ref{fig:CTreconstructionsextra1} and \ref{fig:CTreconstructionsextra2}, we provide reconstructions for the CT experiments with noise levels $\sigma_{\vec n} =1, 2$ in the measurements.
The reported metrics are PSNR and SSIM.
\clearpage
\begin{figure*}[t]    
    \centering
    \includegraphics[width=0.95\textwidth]{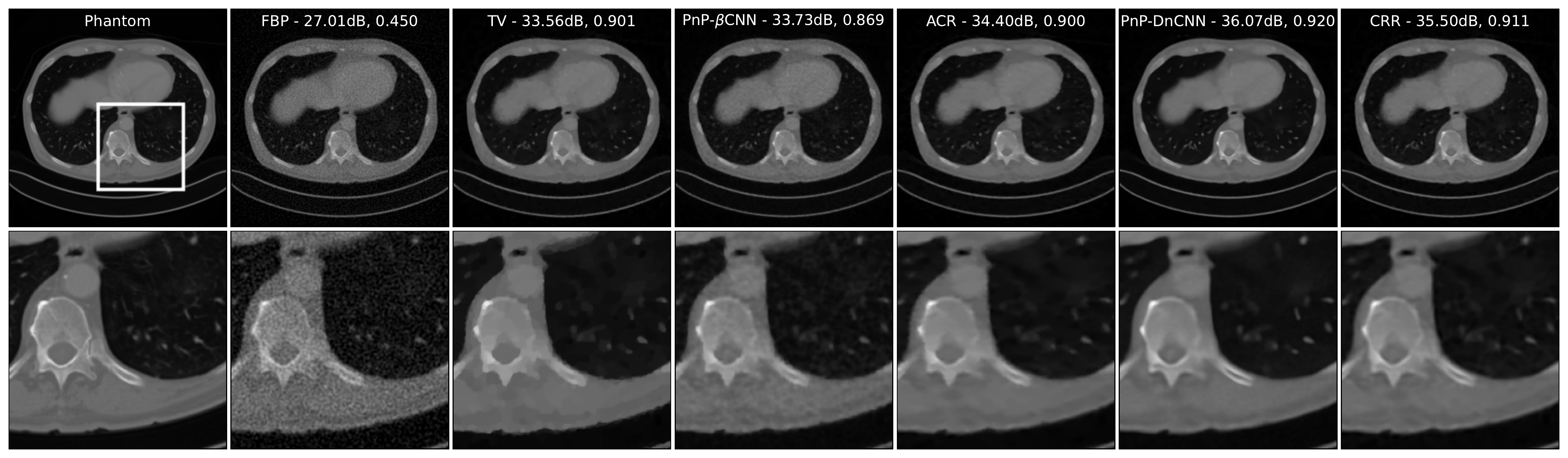}
    \caption{Reconstructed images for the CT experiment with $\sigma_{\vec n}=1.0$.}
    \label{fig:CTreconstructionsextra1}
    \vspace{0.8cm}
    
    \centering
    \includegraphics[width=0.95\textwidth]{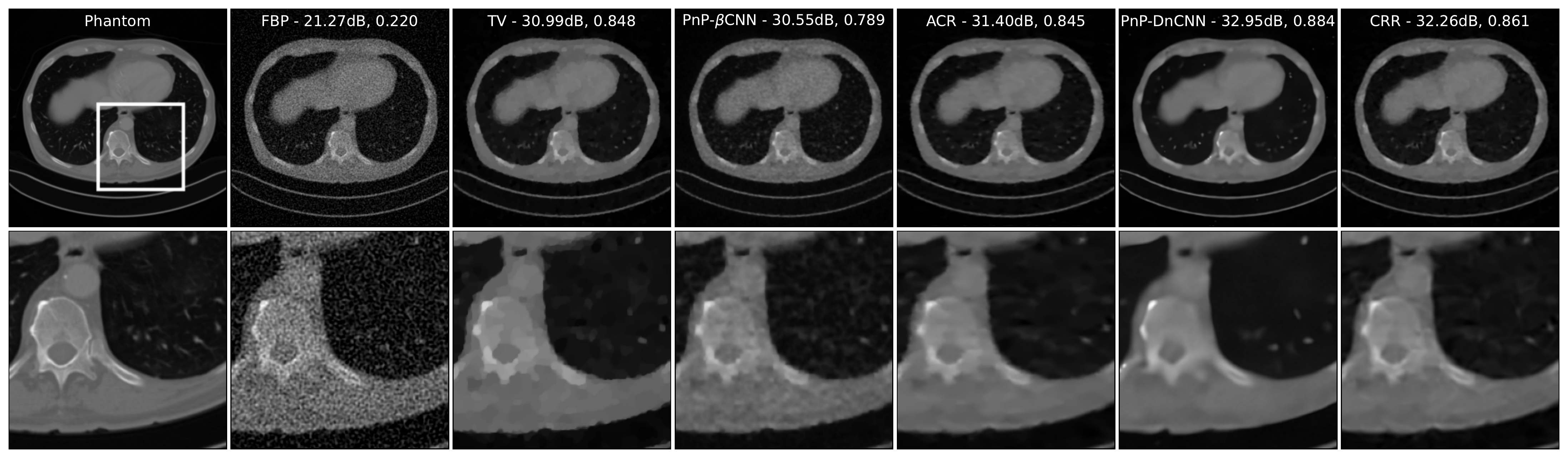}
    \caption{Reconstructed images for the CT experiment with $\sigma_{\vec n}=2.0$.}
    \label{fig:CTreconstructionsextra2}
\end{figure*}
\null
\end{document}